\newtheorem{theorem}{Theorem}[section]
\newtheorem{lemma}[theorem]{Lemma}
\newlength{\RoundedBoxWidth}
\newsavebox{\GrayRoundedBox}
\newenvironment{GrayBox}[1]%
{\setlength{\RoundedBoxWidth}{.93\columnwidth}
	\def\boxheading{#1}
	\begin{lrbox}{\GrayRoundedBox}
		\begin{minipage}{\RoundedBoxWidth}}%
		{   \end{minipage}
	\end{lrbox}
	\begin{center}
		\begin{tikzpicture}%
			\node(Text)[draw=black!20,fill=white,rounded corners,inner xsep=2ex,inner ysep=2ex,text width=\RoundedBoxWidth]
			{\usebox{\GrayRoundedBox}};
			\coordinate(x) at (current bounding box.north west);
			\node [draw=white,rectangle,inner sep=3pt,anchor=north west,fill=white]
			at ($(x)+(6pt,.75em)$) {\boxheading};
		\end{tikzpicture}
\end{center}}
\newenvironment{defproblemx}[2]{\noindent\ignorespaces%
	\FrameSep=6pt%
	\parindent=6pt%
	\vspace{-3mm}            
	\begin{GrayBox}{#1}%
		\begin{tabular*}{\columnwidth}{!{\extracolsep{\fill}}@{\hspace{.1em}} >{\itshape} p{#2} p{0.84\columnwidth} @{}}%
		}{\\[-1.5ex]
		\end{tabular*}%
	\end{GrayBox}%
	\ignorespacesafterend
	\vspace{-4mm}
}
\newcommand{\problemQuestion}[3]{%
	\begin{defproblemx}{#1}{1.5cm}
		Input: & #2 \\
		Output: & #3
	\end{defproblemx}
}
\DeclareMathOperator*{\argmin}{arg\,min}
\Crefname{observation}{Observation}{Observations}
\Crefname{algorithm}{Algorithm}{Algorithms}
\Crefname{section}{Sect.}{Sects.}
\Crefname{observation}{Observation}{Observations}
\Crefname{lemma}{Lemma}{Lemmas}
\Crefname{corollary}{Corollary}{Corollaries}
\Crefname{claimx}{Claim}{Claims}
\Crefname{figure}{Fig.}{Figs.}
\Crefname{figure}{Fig.}{Figs.}
\Crefname{invariant}{Inv.}{Invs.}
\Crefname{enumi}{Condition}{Conditions}
\Crefname{property}{Property}{Properties}
\Crefname{assumption}{Assumption}{Assumptions}
\newtheorem{claimx}{Claim}
\definecolor{realblue}{rgb}{0,0,1}
\definecolor{lipicsblue}{rgb}{0.08235294118,0.3098039216,0.537254902}
\definecolor{realred}{rgb}{1,0,0}
\definecolor{darkerblue}{rgb}{0.094,0.455,0.804}
\definecolor{darkblue}{rgb}{0.063,0.306,0.545}
\definecolor{red}{rgb}{0.627,0.117,0.156}
\definecolor{green}{rgb}{0,0.588,0.509}
\definecolor{orange}{rgb}{0.903,0.739,0.382}
\let\oldnl\nl%
\newcommand\nonl{%
\renewcommand{\nl}{\let\nl\oldnl}}
\newcommand{\qedclaim}{\hfill $\blacksquare$}
\providecommand{\keywords}[1]{\textbf{\textit{Keywords: }} #1}
\begin{document}
\title{Quantum Graph Drawing}

\author{Susanna Caroppo}
\author{Giordano {Da Lozzo}}
\author{Giuseppe Di Battista}
\affil{\em Roma Tre University \\ \href{mailto:susanna.caroppo@uniroma3.it,giordano.dalozzo@uniroma3.it,giuseppe.dibattista@uniroma3.it}{\{susanna.caroppo,giordano.dalozzo,giuseppe.dibattista\}@uniroma3.it}}
\date{}

\maketitle

\begin{abstract}
In this paper, we initiate the study of quantum algorithms in the Graph Drawing research area. We focus on two foundational drawing standards: {\em 2-level drawings} and {\em book layouts}. Concerning $2$-level drawings, we consider the problems of obtaining drawings with the minimum number of crossings, $k$-planar drawings, quasi-planar drawings, and the problem of removing the minimum number of edges to obtain a $2$-level planar graph. Concerning book layouts, we consider the problems of obtaining $1$-page book layouts with the minimum number of crossings, book embeddings with the minimum number of pages, and the problem of removing the minimum number of edges to obtain an outerplanar graph.
We explore both the quantum circuit and the quantum annealing models of computation. In the {\em quantum circuit model}, we provide an algorithmic framework based on Grover's quantum search, which allows us to obtain, at least, a quadratic speedup on the best classical exact algorithms for all the considered problems. In the {\em quantum annealing model}, we perform experiments on the quantum processing unit provided by D-Wave, focusing on the classical $2$-level crossing minimization problem, demonstrating that quantum annealing is competitive with respect to classical algorithms.

\smallskip
\keywords{
Quantum complexity, 
Grover's algorithm,
QUBO,
D-Wave,
2-Level drawings, 
Book layouts
}
\end{abstract}

\clearpage
\tableofcontents
\clearpage

\section{Introduction}
\pagestyle{fancyfrong}

In this paper, we initiate the study of quantum algorithms in the Graph Drawing research~area. We focus on two foundational graph drawing standards: {\em 2-level drawings} and {\em book layouts}. 
In a \emph{2-level drawing}, the graph is bipartite, the vertices are placed on two horizontal lines, and the edges are drawn as $y$-monotone curves.
In this drawing standard, we consider the search version of the {\sc Two-Level Crossing Minimization} (TLCM) problem, where given an integer $\rho$ we seek a $2$-level drawing with at most $\rho$ crossings, and of the {\sc Two-Level Skewness} (TLS) problem, where given an integer $\sigma$ we seek to determine a set of $\sigma$ edges whose removal yields a \emph{2-level planar graph}, i.e., a forest of caterpillars~\cite{DBLP:conf/acsc/EadesMW86}. 
The minimum value of $\sigma$ is the \emph{2-level skewness} of the considered graph.
We also consider the {\sc Two-Level Quasi Planarity} (TLQP) problem, where we seek a drawing in which no three edges pairwise cross, i.e., a \emph{quasi-planar} drawing, and the {\sc Two-Level $k$-Planarity} (TLKP) problem, where we seek a drawing in which each edge participates to at most $k$ crossings, i.e., a \emph{$k$-planar} drawing.
In a \emph{book layout}, the drawing is constructed using a collection of half-planes, called \emph{pages}, all having the same line, called \emph{spine}, as their boundary. The vertices lie on the spine and each edge is drawn on a page. In this drawing standard, we consider the search version of the {\sc One-Page Crossing Minimization} (OPCM) problem, where given an integer $\rho$ we seek a $1$-page layout with at most $\rho$ crossings; the {\sc Book Thickness} (BT) problem, where we search a $\tau$-page layout where the edges in the same page do not cross, i.e., a \emph{$\tau$-page book embedding}; and the {\sc Book Skewness} (BS) problem, where given an integer $\sigma$ we seek a set of $\sigma$ edges whose removal yields a graph admitting a \emph{$1$-page book embedding}, i.e., it is outerplanar~\cite{DBLP:journals/jct/BernhartK79}.
The minimum value of $\sigma$ is the \emph{book skewness} of the considered~graph.

\paragraph{Our contributions.} We delve into both the \emph{quantum circuit}~\cite{DBLP:books/daglib/0046438,10.5555/1973124} and the \emph{quantum annealing}~\cite{DBLP:series/synthesis/2014McGeoch} models of computation.
In the former, quantum gates are used to compose a circuit that transforms an input superposition of qubits into an output superposition. The circuit design depends on both the problem and the specific instance being processed. The output superposition is eventually measured, obtaining the solution with a certain probability. The quality of the circuit is measured in terms of its \emph{circuit complexity}, which is the number of elementary gates it contains, of its \emph{depth}, which is the maximum number of a chain of elementary gates from the input to the output, and of its \emph{width}, which is the maximum number of elementary gates  ``along a cut'' separating the input from the output. 
It is natural to upper bound the time complexity of the execution of a quantum circuit either by its depth, assuming the gates at each layer can be executed in parallel, or by its circuit complexity, assuming the gates are executed sequentially. The width estimates the desired level of parallelism. 
In the latter, quantum annealing processors, in general quite different from those designed for the quantum circuit model, consist of a fixed-topology network, whose vertices correspond to qubits and whose edges correspond to possible interactions between qubits. A problem is mapped to an embedding on such a topology. During the computation, the solution space of a problem is explored, searching for minimum-energy states, which correspond to, in general approximate, solutions. 

In the quantum circuit model, we first show that the above graph drawing problems can be described by means of quantum circuits. To do that, we introduce several efficient elementary circuits, that can be of general usage in {\em Quantum Graph Drawing.} 
Second, we present an algorithmic framework based on Grover's quantum search approach.
This framework enables us to achieve, at least, a quadratic speedup compared to the best exact classical algorithms for all the problems under consideration. \cref{tab:circuit-model-results} overviews our complexity results and compares them with exact algorithms.

In the quantum annealing model, we focus on the processing unit provided by D-Wave, which allows us to perform \emph{hybrid computations}, i.e., computations that are partly classical and partly quantum. We first show that it is relatively easy to use D-Wave for implementing heuristics for the above problems. Second, we focus on the classical TLCM problem. Through experiments, we demonstrate that quantum annealing exhibits competitiveness when compared to classical algorithms. \cref{tab:d-wave-experiment} overviews our experimental findings.

\begin{table}[tb!]
    \renewcommand{\arraystretch}{1.7}
    \centering
    \caption{Results presented in this paper and comparison with exact algorithms. FPT algorithms are mentioned, if any, only with respect to the natural parameter. CC stands for Circuit Complexity. $M$ denotes the number of solutions to the considered problem.}
\resizebox{\columnwidth}{!}{%
    \begin{tabular}{|c|c|c|c|c|c|c|c|}
        \hline
        \multicolumn{1}{|c|}{\multirow{2}{*}{\begin{minipage}[t][0.9cm][c]{1.2cm} \centering Problem \end{minipage} }} & 
        \multicolumn{1}{c|}{\multirow{2}{*}{\begin{minipage}[t][0.9cm][c]{2.5cm} \centering Classic Algorithm Running Time \end{minipage} }} & 
        \multicolumn{1}{c|}{\multirow{2}{*}{\begin{minipage}[t][0.9cm][c]{2cm} \centering Upperbound for $m$  \end{minipage} }} & 
        \multicolumn{1}{c|}{\multirow{2}{*}{\begin{minipage}[t][0.9cm][c]{1.5cm} \centering FPT Time\end{minipage} }} & 
        \multicolumn{1}{c|}{\multirow{2}{*}{\begin{minipage}[t][0.9cm][c]{2.4cm} \centering \mbox{Quantum Oracle} Calls \end{minipage}}} & 
        \multicolumn{3}{c|}{Oracles~(\cref{lem:oracles})}                                           \\ \cline{6-8}
        \multicolumn{1}{|c|}{}                   & 
        \multicolumn{1}{c|}{}                   & 
        \multicolumn{1}{c|}{}                   & 
        \multicolumn{1}{c|}{}                   & 
        \multicolumn{1}{c|}{}                   & 
        \multicolumn{1}{c|}{\begin{minipage}[c][0.9cm][c]{1.3cm} \centering CC \end{minipage}  } & 
        \multicolumn{1}{c|}{\begin{minipage}[c][0.9cm][c]{1.6cm} \centering Depth \end{minipage}} & 
        \multicolumn{1}{c|}{\begin{minipage}[c][0.9cm][c]{1.1cm} \centering Width \end{minipage}} \\
        \hline
        TLCM & $2^{n \log n} O(m^2)$ & $O(\sqrt[3]{\rho \cdot n^2})$~\cite{10.1093/comjnl/bxad038} & $2^{O(\rho)}+n^{O(1)}$~\cite{DBLP:journals/ipl/KobayashiT16} & $\frac{\pi}{4}\sqrt{\frac{2^{n \log n}}{M}}$ & $O(m^2)$ & $O(n^2)$ & $O(m^2)$ \\
        \hline
        TLKP & $2^{n \log n} O(m^2)$ & $O(\sqrt{k} \cdot n)$~\cite{10.1093/comjnl/bxad038} & - & $\frac{\pi}{4}\sqrt{\frac{2^{n \log n}}{M}}$ & $O(m^2)$ & $O(m\log^2 m)$ & $O(m)$ \\
        \hline
        TLQP & $2^{n \log n} O(m^3)$ & $O(n)$~\cite{10.1093/comjnl/bxad038} & Para-\NP-hard~\cite{DBLP:conf/soda/AngeliniLBFP21} & $\frac{\pi}{4}\sqrt{\frac{2^{n \log n}}{M}}$ & $O(m^6)$ & $O(m^4)$ & $O(m^2)$ \\
        \hline    
        TLS & $O(m^\sigma n)$ & $O(n+\sigma)$ & $2^{O(\sigma^3)}n$~\cite{DBLP:journals/algorithmica/DujmovicFKLMNRRWW08} & $\frac{\pi}{4}\sqrt{\frac{2^{n \log n + \sigma \log m}}{M}}$ & $O(m^2)$ & $O(m)$ & $O(m)$ \\
        \hline    
        OPCM & $2^{n \log n} O(m^2)$ & $O(\sqrt[3]{\rho \cdot n^2})$~\cite{DBLP:conf/wg/ShahrokhiSSV94} & Courcelle's Th.~\cite{DBLP:journals/jgaa/BannisterE18} & $\frac{\pi}{4}\sqrt{\frac{2^{n \log n}}{M}}$ & $O(n^8)$ & $O(n^6)$ & $O(m^2)$ \\
        \hline    
        BT &  $2^{n \log n + m\log \tau} O(\tau n)$ & $O(\tau \cdot n)$ & Para-\NP-hard~\cite{Masuda87} & $\frac{\pi}{4}\sqrt{\frac{2^{n \log n + m \log \tau}}{M}}$ & $O(n^8)$ & $O(n^6)$ & $O(m)$ \\
        \hline    
        BS & $O(m^\sigma n)$ & $O(n+\sigma)$ & - & $\frac{\pi}{4}\sqrt{\frac{2^{n \log n + \sigma \log m}}{M}}$ & $O(n^8)$ & $O(n^6)$ & $O(m)$ \\
        \hline    
        \end{tabular}
        }
    \label{tab:circuit-model-results}
\end{table}

\paragraph{State of the art.}
We now provide an overview of the complexity status of each of the considered problems, together with the existence of FPT algorithms with respect to the corresponding natural parameter (total number of crossings $\rho$, number of crossings per edge $k$, maximum number of allowed mutually crossing edges, number of pages $\tau$, and number of edges to be removed $\sigma$), density bounds, and exact algorithms. Let $n$ and $m$ denote the number of vertices and edges of an input graph, respectively.

TLCM is probably the most studied among the above problems (see, e.g., \cite{DBLP:journals/algorithmica/DujmovicFKLMNRRWW08,DBLP:journals/jgaa/JungerM97}). It is \NP-complete~\cite{doi:10.1137/0604033}, and it remains \NP-complete even when the order on one level is prescribed~\cite{DBLP:journals/algorithmica/EadesW94}. %
Kobayashi and Tamaki~\cite{DBLP:journals/ipl/KobayashiT16} combined the kernelization result in~\cite{DBLP:journals/tcs/KobayashiMNT14} and an enumeration technique to device a fixed-parameter tractable (FPT) algorithm with running time in $2^{O(\rho)}+n^{O(1)}$.
Since the number of crossings may be quadratic in $m$, such an FPT result yields an algorithm whose running time is~$2^{O(m^2)}$.
On the other hand, a trivial $2^{n \log n} O(m^2)$-time exact algorithm for TLCM can be obtained by iteratively considering each of the possible $n! \in \Theta(2^{n \log n})$ vertex orderings, and by verifying whether the considered ordering yields less than $\rho$ crossings (which can be done in $O(m^2)$ time by considering each pair of edges). 
To the best of our knowledge, however, no faster exact algorithm is known for this problem that performs asymptotically better than the simple one mentioned above.
Observe that for positive instances of TLCM, $m$ is upper bounded by $\sqrt[3]{\frac{15625 \cdot \rho \cdot n^2}{4608}}$~\cite{10.1093/comjnl/bxad038}. 

TLKP has not been proved to be \NP-complete, and no FPT algorithm parameterized by $k$ is known for this problem. A trivial $2^{n \log n} O(m^2)$-time exact algorithm for TLKP can be devised analogously to the one for TLCM.
Observe that for positive instances of TLKP and for $k>5$, $m$ is upper bounded by $\frac{125}{96}\sqrt{k} \cdot n$~\cite{10.1093/comjnl/bxad038}.

TLQP is~\NP-complete \cite{DBLP:conf/soda/AngeliniLBFP21}. If we assume that its natural parameter is the maximum number of allowed mutually crossing edges, then no FPT algorithm exists for it parameterized by this parameter (unless \P $=$\NP). 
A trivial $2^{n \log n} O(m^3)$-time exact algorithm for TLQP can be devised analogously to the one for TLCM, where we have to test for the existence of crossings among triples of edges instead of pairs.
Observe that for positive instances of TLQP, $m$ is upper bounded by $2n - 4$~\cite{10.1093/comjnl/bxad038}.

TLS is \NP-complete~\cite{DBLP:journals/algorithmica/TanZ07}. %
Dujmovic et al. gave an FPT algortihm for TLS with $2^{O(\sigma^3)}n$ running time~\cite{DBLP:journals/algorithmica/DujmovicFKLMNRRWW08}.
A trivial $O(m^\sigma n)$-time exact algorithm for TLS performs a guess of $\sigma$ edges to be removed. This yields $\binom{m}{\sigma} \in O(m^\sigma)$ possible choices. For each of them, a linear-time algorithm to test if the input is a forest of caterpillars (and, thus, it admits a $2$-level planar drawing~\cite{DBLP:conf/acsc/EadesMW86}) is invoked.
Since caterpillars have at most $n-1$ edges, we have that for positive instances of TLS, $m$ is upper bounded by $n-1+\sigma$. 

OPCM is \NP-complete~\cite{Masuda87}. However, the optimum value of crossings can be approximated with an approximation ratio of $O(\log^2 n)$~\cite{DBLP:conf/wg/ShahrokhiSSV94}. Bannister and Eppstein~\cite{DBLP:journals/jgaa/BannisterE18} showed that OPCM is fixed-parameter tractable parameterized by $\rho$. %
To this aim, they exploit Courcelle’s Theorem~\cite{DBLP:journals/iandc/Courcelle90,DBLP:books/daglib/0030804}, which provides a super-exponential dependency of the running time in this parameter.
A trivial $2^{n \log n} O(m^2)$-time exact algorithm for OPCM can be devised analogous to the one for TLCM.
For positive instances of OPCM, $m$ is upper bounded by $\sqrt[3]{37 \cdot \rho \cdot n^2}$~\cite{DBLP:conf/wg/ShahrokhiSSV94}.

BT is \NP-complete, even when $\tau = 2$~\cite{Wig82}, in which case it coincides with the problem of testing whether the input graph is sub-Hamiltonian. This negative result implies that the problem does not admit FPT algorithms parameterized by $\tau$ (unless \P $=$\NP). 
A trivial $2^{n \log n + m \log \tau} O(\tau n)$-time exact algorithm for BT can be obtained by iteratively considering each of the possible choices of a permutation for the vertex order and of an assignment of the edges to the $\tau$ pages. This yields $n! \cdot \tau^m \in \Theta(2^{n \log n + m \log \tau})$ possible choices. For each of these choices, $\tau$ calls to a linear-time outer-planarity testing algorithm are performed, one for each of the graphs induced by the $\tau$ pages, to decide whether the considered choice defines a solution. 
Since outerplanar graphs have at most $2n-3$ edges and by the definition of BT, we have that for positive instances of BT, $m$ is upper bounded by $\tau (2n-3)$.

BS is \NP-complete~\cite{DBLP:journals/siamcomp/Yannakakis81} and no FPT algorithm parameterized by $\sigma$ is known for it.
A trivial $O(m^\sigma n)$-time exact algorithm for BS can be devised analogously to the one for TLS.
By the density of outerplanar graphs and by the definition of BS, we have that for positive instances of BS, $m$ is upper bounded by~$2n-3+\sigma$.

\section{Preliminaries}\label{se:preliminaries}

For basic concepts related to graphs and their drawings, we refer the reader, e.g.,  to~\cite{DBLP:books/ph/BattistaETT99,DBLP:reference/crc/2013gd}. For the standard notation we adopt to represent quantum gates and circuits, and for basic concepts about quantum computation, we refer the reader, e.g., to~\cite{DBLP:books/daglib/0046438,10.5555/1973124}. 

\paragraph{Notation.} Let $k$ be a positive integer. 
To ease the description, we will denote the value $\lceil \log_2  k \rceil$ simply as $\log k$, and the set $\{0,\dots,k-1\}$ as $[k]$. 
We refer to any of the permutations of the integers in $[k]$ as a \emph{$k$-permutation}. A \emph{$k$-set} is a set of size $k$.

We denote the set of binary values $\{0, 1\}$ by $\mathbb B$. Consider a binary string $s$ of length $a \cdot b$, for some $a, b \in \mathbb{N}$, i.e., $s \in \mathbb{B}^{a \cdot b}$. We often regard $s$ as a sequence of $a$ binary integers, each represented with $b$ bits (where the specific $a$ and $b$ will always be clarified in the considered context). For $i \in [a]$, the $i$-th number in $s$, which we denote by $s[i]$, is given by the substring of $s$ formed by the bits $s[b \cdot i][b \cdot  i + 1] \dots s[b \cdot i + b -1 ]$. 
Moreover, for $j \in [b]$, we denote by $s[i][j]$ the $j$-th digit of $s[i]$, where $s[i][0]$ is the least significant~bit~of~$s[i]$.

\paragraph{Graph drawing.} A \emph{drawing} of a graph maps each vertex to a point in the plane and each edge to a Jordan arc between its end-vertices. %
In this paper, we only consider graph drawings that are \emph{simple}, i.e., every two edges cross at most once and no edge crosses itself. 
A graph is \emph{planar} if it can be drawn in the plane such that no two edges cross, i.e., it admits a \emph{planar drawing}. A graph is \emph{$k$-planar} (with $k \geq 0$), if it can be drawn in the plane such that each edge is crossed at most $k$ times, i.e., it admits a \emph{$k$-planar drawing}. Finally, a graph is \emph{quasi-planar}, if it can be drawn in the plane so that no three edges pairwise cross, i.e., it admits a \emph{quasi-planar drawing}.  

Let $G$ be a graph. A \emph{$\tau$-page book layout} of $G$ consists of a linear ordering 
$\prec$ of the vertices of $G$ along a line, called the \emph{spine}, and of a partition $\{E_1,\dots,E_\tau\}$ of the edges of $G$ into $\tau$ sets, called \emph{pages}. A \emph{$\tau$-page book embedding} of $G$ is a $\tau$-page book layout such that no two edges of the same page \emph{cross}. That is, there exist no two edges $(u,v)$ and $(w,z)$ in the same page $E_i$ such that $u \prec v$, $v \prec w$, $u \prec z$, and $z \prec v$. The \emph{book thickness} of $G$ is the minimum integer $\tau$ for which $G$ has a $\tau$-page book embedding. The \emph{book skewness} of $G$ is the minimum number of edges that need to be removed from $G$ so that the resulting graph has book thickness $1$, that is, it is outerplanar~\cite{DBLP:journals/jct/BernhartK79}.

Let $G=(U,V,E)$ be a bipartite graph, where $U$ and $V$ denote the two subsets of the vertex set of $G$, and $E$ denotes the edge set of $G$. 
A \emph{$2$-level drawing} of $G$ maps each vertex $u \in U$ to a point on a horizontal line $\ell_u$, which we call the \emph{$u$-layer}, each vertex $v \in V$ to a point on a horizontal line $\ell_v$ (distinct from $\ell_u$), which we call the \emph{$v$-layer}, and each edge in $E$ to a $y$-monotone curve between its endpoints. 
Observe that, from a combinatorial standpoint, a $2$-level drawing $\Gamma$ of $G$ is completely specified by the linear ordering in which the vertices in $U$ and the vertices in $V$ appear along $\ell_u$ and $\ell_v$, respectively. The \emph{$2$-level skewness} of $G$ is the minimum number of edges that need to be removed from $G$ so that the resulting graph admits a $2$-level planar drawing, that is, it is a forest of caterpillars~\cite{DBLP:conf/acsc/EadesMW86}. 

\smallskip
\noindent Next, we provide the definitions of the search problems we study concerning $2$-level drawings of graphs. 
\vspace{-2mm}

\problemQuestion{\sc Two-level Crossing Minimization (TLCM)}%
{A bipartite graph $G$ and a positive integer~$\rho$.}%
{A $2$-level drawing of $G$ with at most $\rho$ crossings, if one exists.}

\problemQuestion{\sc Two-level $k$-planarity (TLKP)}%
{A bipartite graph $G$ and a positive integer $k$.}%
{A $2$-level $k$-planar drawing of $G$, if one exists.}

\problemQuestion{\sc Two-level Quasi Planarity (TLQP)}%
{A bipartite graph $G$.}%
{A $2$-level quasi-planar drawing of $G$, if one exists.}

\problemQuestion{\sc Two-level Skewness (TLS)}%
{A bipartite graph $G$ and an integer $\sigma$.}%
{A set $S \subseteq E(G)$ such that $|S| \leq \sigma$ and the graph $G' = (V,E(G) \setminus S)$ is a forest of caterpillars, if one exists.}

\smallskip
\noindent Finally, we provide the definitions of the search problems we study concerning book embeddings of graphs.

\problemQuestion{\sc One-Page Crossing Minimization (OPCM)}%
{A graph $G$ and a positive integer $\rho$.}%
{A $1$-page book layout of $G$ with at most $\rho$ crossings, if one exists.
}

\problemQuestion{\sc Book Thickness (BT)}%
{A graph $G$ and an integer~$\tau$.}%
{A $\tau$-page book embedding of $G$, if one exists.}

\problemQuestion{\sc Book Skewness (BS)}%
{A graph $G$ and an integer~$\sigma$.}%
{A set $S \subseteq E(G)$ such that $|S| \leq \sigma$ and the graph $G' = (V,E(G) \setminus S)$  is outerplanar, if one exists.}

\paragraph{Cross-independent sets.} 
Let $X$ be a ground set. 
Let $X_k$ be the set of all $k$-sets of distinct elements of $X$, i.e., $X_k = \{\{a_1,a_2,\dots,a_k\} | \forall i \neq j: a_i \neq a_j; a_i,a_j \in X\}$.
A subset $S$ of $X_k$ is \emph{cross-independent} if, for any two $k$-sets $s_i, s_j \in S$, it holds that $s_i \cap s_j = \emptyset$.
In order to prove the depth bounds of our circuits we will exploit the following.

\begin{lemma}\label{le:partion-k}
The set $X_k$ of all $k$-sets of distinct elements of a set $X$ can be partitioned in $O(\sqrt{k^{k}}\cdot |X|^{k-1})$ cross-independent sets of \mbox{size~at~most~$\lfloor \frac{|X|}{k} \rfloor$, if $k < \frac{|X|+2}{3}$.}
\end{lemma}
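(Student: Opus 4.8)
The plan is to recast the statement as a bound on the chromatic index of the complete $k$-uniform hypergraph on $X$: identifying each $k$-set with a hyperedge, a cross-independent set is exactly a \emph{matching} (a family of pairwise disjoint hyperedges), and partitioning $X_k$ into cross-independent sets is exactly a decomposition of all hyperedges into matchings, i.e.\ a proper edge-colouring. Observe first that the size constraint comes for free: any family of pairwise disjoint $k$-subsets of an $|X|$-element set has at most $\lfloor |X|/k\rfloor$ members. Hence the whole content of the lemma is to exhibit a colouring of the $k$-sets in which any two same-coloured sets are forced to be disjoint, using only $O(\sqrt{k^k}\,|X|^{k-1})$ colours.

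The difficulty I expect to be the crux is meeting this colour budget \emph{without} a spurious logarithmic factor. The naive ``sum modulo $n$'' colouring that handles the case $k=2$ (a round-robin $1$-factorisation of $K_n$) fails for $k\ge 3$, since two intersecting $k$-sets may share the same sum; and the obvious probabilistic route---colour by a perfect-hash family, so that every $k$-set becomes ``rainbow'' with respect to some balanced partition of $X$ into $k$ parts---covers all $\binom{|X|}{k}$ sets only with $\Theta\!\left((k^k/k!)\log|X|\right)$ partitions, injecting a $\log|X|$ factor that the stated bound forbids. A Baranyai-type near-factorisation attains the optimal count but is heavy machinery. Instead I would colour algebraically, by polynomial level sets. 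Fix a prime $p$ with $|X|\le p<2|X|$ (Bertrand's postulate) and embed $X$ into the field $\mathbb F_p$. To a $k$-set $S$ associate $h_S(x)=\prod_{a\in S}(x-a)\in\mathbb F_p[x]$, the monic degree-$k$ polynomial whose root set is exactly $S$, and define the \emph{colour} of $S$ to be $g_S(x):=h_S(x)-h_S(0)$, the polynomial obtained by deleting the constant term. Then $g_S$ is monic of degree $k$ with zero constant term, so it is pinned down by its $k-1$ middle coefficients, and the number of colours is at most $p^{\,k-1}<(2|X|)^{k-1}=2^{\,k-1}|X|^{\,k-1}$.

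Cross-independence is immediate from the level-set viewpoint. Every $a\in S$ satisfies $g_S(a)=h_S(a)-h_S(0)=-h_S(0)=:v_S$, and since $g_S(x)-v_S=h_S(x)$ has exactly the roots $S$ (a degree-$k$ polynomial over a field has at most $k$ roots), we get $g_S^{-1}(v_S)=S$. If two distinct $k$-sets $S,T$ receive the same colour $g:=g_S=g_T$, then necessarily $v_S\ne v_T$ (otherwise $S=g^{-1}(v_S)=g^{-1}(v_T)=T$); hence $S=g^{-1}(v_S)$ and $T=g^{-1}(v_T)$ are preimages of distinct values of one and the same function $g\colon\mathbb F_p\to\mathbb F_p$, and are therefore disjoint. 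Thus every colour class is cross-independent, and the classes partition $X_k$.

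It remains to match the prefactor. Since $2^{\,k-1}\le k^{k/2}$ for every integer $k\ge 2$ (equality at $k=2$; $k=3$ by hand; for $k\ge4$ use $\sqrt k\ge 2$, whence $k^{k/2}=(\sqrt k)^{k}\ge 2^{k}$), the number of classes is $O(2^{k}|X|^{k-1})=O(\sqrt{k^k}\,|X|^{k-1})$, as required, and each has size at most $\lfloor |X|/k\rfloor$ automatically. In this argument the hypothesis $k<(|X|+2)/3$ is not essential; it only guarantees $\lfloor |X|/k\rfloor\ge 2$, so that the stated size bound is nontrivial. The single genuine idea is the choice of invariant: deleting the constant term collapses each $k$-set onto a single polynomial while keeping distinct same-colour sets on distinct level sets, which is exactly what buys cross-independence at only $p^{\,k-1}$ colours and sidesteps the logarithmic loss of the hashing approach.
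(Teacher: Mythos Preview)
Your proof is correct and takes a genuinely different route from the paper's. The paper argues non-constructively: it builds the auxiliary graph $\mathcal{A}(X_k)$ on the $k$-sets with edges between intersecting pairs, bounds its maximum degree by $\sum_{i=1}^{k-1}\binom{k}{i}\binom{|X|-k}{k-i}$, and then crudely upper-bounds this sum by $\binom{k}{k/2}\cdot 2|X|^{k-1}\in O(\sqrt{k^k}\,|X|^{k-1})$; a greedy $(\Delta+1)$-colouring then yields the partition. The hypothesis $k<\frac{|X|+2}{3}$ is used precisely to make the term $\binom{|X|-k}{k-i}$ monotone in $i$ so that the sum can be controlled. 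Your polynomial-level-set colouring over $\mathbb F_p$ is explicit and constructive, achieves the sharper count $(2|X|)^{k-1}$ directly, and, as you note, does not need the hypothesis on $k$ at all. What the paper's approach buys is that it requires no algebraic machinery (no prime, no field), only elementary counting and greedy colouring; what yours buys is an actual colouring rule, a cleaner bound, and removal of the side condition.
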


\begin{proof}
The fact that a cross-independent subset $S$ of $X_k$ contains at most $\lfloor \frac{|X|}{k} \rfloor$ elements is trivial, since each element of $S$ is a $k$-set and no two elements of $S$ may contain the same element of $X$. To show that $X_k$ admits a partition into $O(\sqrt{k^{k}}\cdot|X|^{k-1})$ cross-independent sets we proceed as follows. Consider the auxiliary graph ${\cal A}(X_k)$, whose vertices are in 1-to-1 correspondence with the elements of $X_k$, i.e., the $k$-sets of distinct elements of $X$. The edges of 
${\cal A}(X_k)$ connect pairs of vertices corresponding to $k$-sets with a non-empty intersection. We show that the maximum degree of ${\cal A}(X_k)$ is $O(\sqrt{k^{k}}\cdot|X|^{k-1})$. This immediately implies that ${\cal A}(X_k)$ admits a proper coloring with $O(\sqrt{k^{k}}\cdot|X|^{k-1})$ colors. Since each color class induced an independent set in ${\cal A}(X_k)$, we have that the vertices of ${\cal A}(X_k)$ in the same color class form a cross-independent set, which yields the proof that  $O(\sqrt{k^{k}}\cdot|X|^{k-1})$ cross-independent sets suffice to partition $X_k$.

We denote each vertex $v$ of ${\cal A}(X_k)$ by the corresponding $k$-set $\{a_1,a_2,\dots,a_k\}$ in $X_k$. Recall that $v$ is adjacent to all the vertices of ${\cal A}(X_k)$ whose corresponding sets contain at least one of $a_1,a_2,\dots,a_k$. We have that the number of vertices of  ${\cal A}(X_k)$ that contain a subset of $\{a_1,a_2,\dots,a_k\}$ of size $r$ is $\binom{k}{r}\binom{|X|-k}{k-r}$. Therefore, the degree of $v$ is upper bounded by 

\begin{equation}\label{eq:cross-independent}
\sum^{k-1}_{i=1}\binom{k}{i}\binom{|X|-k}{k-i}
\end{equation}

Note that, the term $t_1(i) = \binom{k}{i}$ is maximum when $i = \frac{k}{2}$, i.e., when $t_1(\frac{k}{2}) = \binom{k}{\frac{k}{2}} \in O(\sqrt{k^k})$. On the other hand, the term $t_2(i) = \binom{|X|-k}{k-i} \in O(|X|^{k-i})$
is monotone as long as $k-i < \frac{|X|-k}{2}$, i.e., $k < \frac{|X|+2i}{3}$. Therefore, since $1 \leq i \leq k-1$, we get that $t_2(i)$ is monotone under our hypothesis that $k < \frac{|X|+2}{3}$. Therefore, if $k < \frac{|X|+2}{3}$,
we get that \cref{eq:cross-independent}, which can we rewritten as 
$\sum^{k-1}_{i=1} t_1(i) t_2(i)$, is upper bounded by 
$t_1(\frac{k}{2}) \sum^{k-1}_{i=1} |X|^{k-i} \leq \binom{k}{\frac{k}{2}} \cdot 2 |X|^{k-1}$. This shows the claimed bound.
\end{proof}

We remark that, for $k=2$, the asymptotic bounds of \cref{le:partion-k} can also be derived from Vizing's Theorem~\cite{Vizing64}.

\paragraph{Mathematical formulations.} 
We introduce the mathematical formulations used in the the D-Wave quantum annealing platform.

A \emph{constrained binary optimization} (CBO) is the mathematical formulation of an optimization problem, in which the variables are binary. Note that, both the objective function and the constraints may have an arbitrary degree. In some cases, we focus on CBO formulations in which the objective function is not defined, and we aim at verifying whether a problem instance satisfies the given constraints.

A \emph{quadratic unconstrained binary optimization} (QUBO) is the mathematical formulation of an optimization problem, in which the variables are binary, the optimization function is quadratic, and there are no constraints. Specifically, 
let $Q$ be an upper triangular matrix $Q \in {\mathbb R}^{k\times k}$. Using $Q$, we can define a quadratic function $f_Q: {\mathbb B}^{k} \rightarrow \mathbb{R}$ that assigns a real value to a $k$-length binary vector. Namely, we let $f_Q(x) = x^TQx = \sum^k_{i=1}\sum^k_{j=1} Q_{ij}x_ix_j$.
The QUBO formulation for $f_Q$ asks for the binary vector $x^*$ that minimizes $f_Q$, i.e.,  $x^*= \underset{x \in {\mathbb B}^k}\argmin f_Q(x)$.

\section{Basic Quantum Circuits} \label{se:basic-quantum}

In this section, we introduce basic quantum circuit which we will exploit in~\cref{se:input-initializers,se:solution-detectors}. 
Let $D$ be a vertex-weighted directed acyclic graph (DAG). 
The \emph{depth} of $D$ is the number of vertices in a longest path of $D$. 
Two vertices $u$ and $v$ of $D$ are \emph{incomparable} if there exists no directed path from $u$ to $v$, or vice versa. 
An \emph{anti-chain} of $D$ is a maximal set of incomparable vertices. 
The \emph{weight} of an anti-chain is the sum of its vertex weights. 
The \emph{width} of $D$ is the maximum weight of an anti-chain of $D$. 
A quantum circuit $Q$ can be modeled as
a vertex-weighted directed acyclic graph $D_Q$, whose vertices correspond to the gates of $Q$ and whose directed edges represent
qubit input-output dependencies. Moreover, the weight of a vertex representing a gate $U$ corresponds to the number of elementary gates~\cite{DBLP:books/daglib/0046438,10.5555/1973124} needed to build~$U$.

The \emph{circuit complexity} of a quantum circuit is the number of elementary gates used to construct it.
The \emph{depth} and the \emph{width} of a quantum circuit are the depth and the width, respectively, of its associated weighted DAG.
Note that, 
the size of a circuit corresponds to the total number of operations that must be performed to execute the circuit,
the depth of a circuit corresponds to the number of distinct time steps at which gates are applied, and %
the width of a circuit corresponds to the maximum number of operations that can be performed ``in parallel''. 
Therefore, it is natural to upper bound the time complexity either by its depth, assuming the gates at each layer can be executed in parallel, or by its circuit complexity, assuming the gates are executed sequentially. In the lemmas and theorems that will follow, we describe circuits in terms of their circuit complexity, depth, and width. The width of the circuit is reported as an indication of the desired level of parallelism.

We denote by $\ket{0_k}$ the quantum basis state composed of $k$ qubits set to $\ket{0}$. 
We now describe some gates that will be used in the following sections. %
Let $\phi[i]$ and $\phi[j]$ be binary strings of length $\log t$, which we interpret as binary integers represented with $\log t$ bits. Also, let $\ket{\phi[i]}$ and $\ket{\phi[j]}$ be the basis states corresponding to $\phi[i]$ and $\phi[j]$, respectively.  First, we focus on gate $U_{=}$ that, given integers $\phi[i]$ and $\phi[j]$, verifies if $\phi[i]$ is equal to $\phi[j]$.

\begin{lemma}\label{le:gate-equal}
There exists a gate $U_{=}$ that, when provided with the input superposition $\ket{\phi[i]}\ket{\phi[j]}\ket{0_{\log t}}\ket{0}$, produces the output superposition $\ket{\phi[i]}\ket{\phi[j]}\ket{0_{\log t}}\ket{\phi[i]=\phi[j]}$. Gate $U_{=}$ has 
$O(\log t)$ circuit complexity, depth, and width.
\end{lemma}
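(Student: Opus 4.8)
The plan is to construct the gate $U_{=}$ as a reversible quantum circuit that compares the two $\log t$-bit registers bit by bit and writes the single-bit answer into the final ancilla qubit, using the scratch register $\ket{0_{\log t}}$ to hold intermediate comparison results. First I would recall the standard reversible equality test: two bits $a$ and $b$ are equal exactly when $a \oplus b = 0$, so I would compute $\phi[i][\ell] \oplus \phi[j][\ell]$ for each position $\ell \in [\log t]$ by applying two CNOT gates that target the $\ell$-th scratch qubit (one controlled by $\phi[i][\ell]$, one by $\phi[j][\ell]$). After this layer, the scratch register holds the bitwise XOR of the two inputs, and the two integers are equal if and only if every scratch qubit is $\ket{0}$.

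Next I would flip the output qubit conditioned on the \emph{entire} scratch register being all-zero. The natural tool is a multi-controlled NOT with all $\log t$ scratch qubits as controls, where each control is triggered on the $\ket{0}$ value (equivalently, sandwich an ordinary multi-controlled Toffoli between $X$ gates on the controls). Since the statement promises only $O(\log t)$ width, I cannot assume a single $\log t$-controlled gate is free; I would instead decompose it into a chain (or balanced tree) of Toffoli gates using a logarithmic number of additional ancillas, or more simply argue that a $\log t$-fold controlled NOT decomposes into $O(\log t)$ elementary Toffoli/CNOT gates with $O(\log t)$ ancillas. To return the scratch register to $\ket{0_{\log t}}$ as required by the output specification, I would then uncompute the XOR layer by repeating the same $2\log t$ CNOT gates in reverse, which restores the ancillas and leaves $\ket{\phi[i]}\ket{\phi[j]}$ unchanged while preserving the answer already recorded on the output qubit.

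The complexity bookkeeping then follows almost immediately: the XOR-compute and XOR-uncompute layers use $O(\log t)$ CNOTs each, and the all-zero-controlled NOT decomposes into $O(\log t)$ elementary gates, giving $O(\log t)$ total circuit complexity. For the depth, I would model the circuit as its associated weighted DAG (as defined just before the lemma) and observe that the compute/uncompute layers act on disjoint bit positions and so contribute $O(1)$ depth each, while the multi-controlled NOT — implemented as a balanced binary tree of Toffoli gates — has depth $O(\log(\log t)) = O(\log t)$; hence the overall depth is $O(\log t)$. The width is bounded by the number of qubits and gates that can be active simultaneously, which is $O(\log t)$ since we operate on $O(\log t)$ qubits.

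The main obstacle I anticipate is making the depth bound honest rather than hand-waved: a naively written multi-controlled NOT on $\log t$ qubits has depth linear in $\log t$ if built as a Toffoli chain through a carry-style ancilla cascade, but so does the claimed bound, so either decomposition meets the $O(\log t)$ target — the subtlety is only in ensuring that introducing extra ancillas to parallelize does not inflate the \emph{width} beyond $O(\log t)$. I would therefore commit to the logically simplest implementation (a single ancilla cascade of Toffoli gates) for which circuit complexity, depth, and width are all simultaneously $O(\log t)$, rather than optimizing depth at the expense of width, since the lemma asks only for a matching $O(\log t)$ bound on all three quantities.
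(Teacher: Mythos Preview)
Your construction is correct and follows essentially the same compute--aggregate--uncompute pattern as the paper's proof: compare the two registers bit by bit into the $\log t$ scratch qubits, apply a $(\log t)$-controlled NOT to set the answer qubit, then uncompute the scratch. The only cosmetic difference is that you compute the per-bit XOR via CNOTs and trigger the multi-controlled gate on all-zeros, whereas the paper computes the per-bit equality (XNOR) via two Toffoli gates and triggers on all-ones; both yield the same $O(\log t)$ bounds on circuit complexity, depth, and width.
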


\begin{proof}
\begin{figure}[tb!]
    \centering
    \includegraphics[page = 2, width = .8\textwidth]{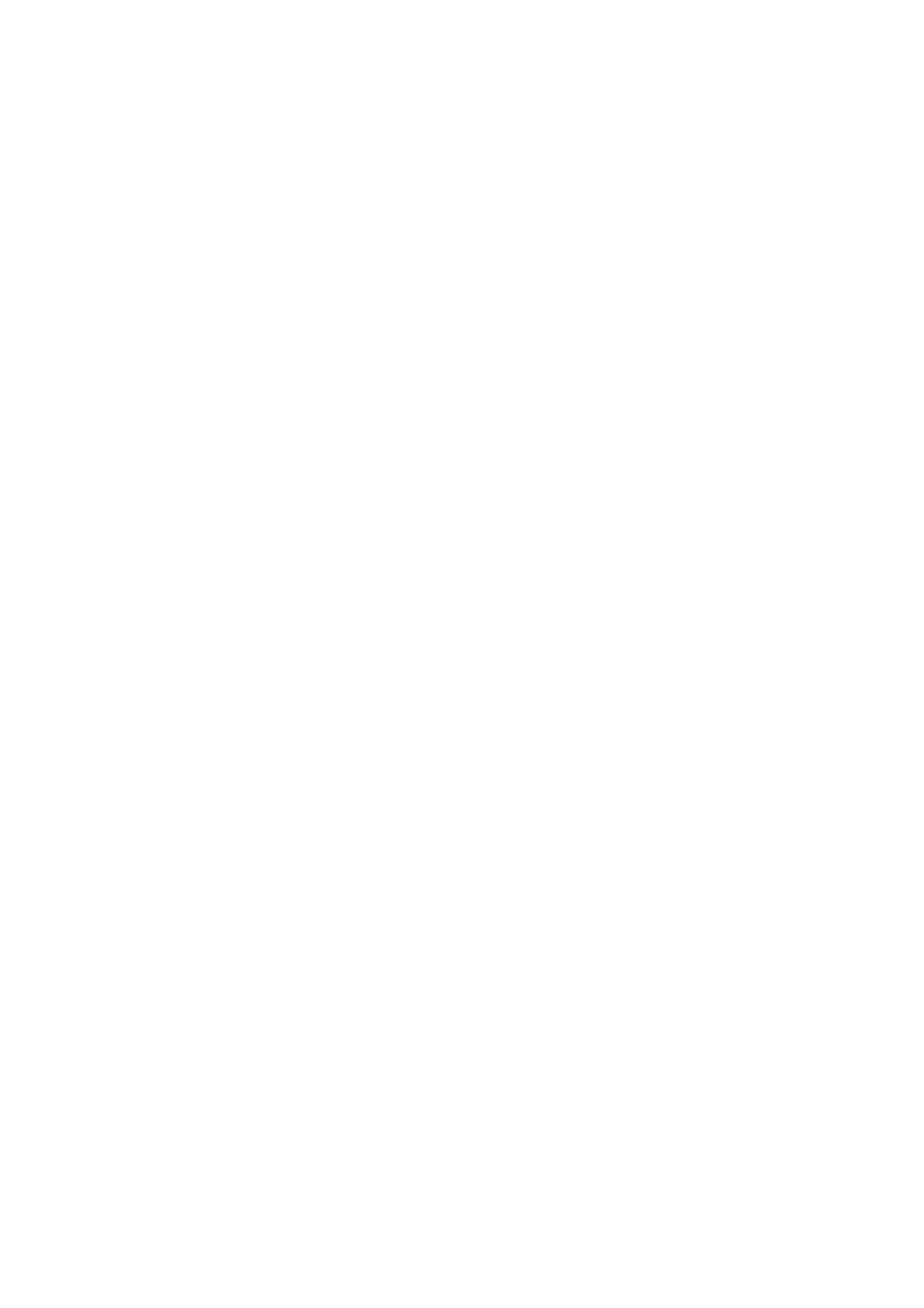}
    \caption{The gate $U_{=}$.}
    \label{fig:equal}
\end{figure}
Gate $U_{=}$ consists of two gates $\mu_{=}$ and $\mu_{=}^{-1}$ and in between such gates it executes a Toffoli gate; refer to \cref{fig:equal}. 
The input of $\mu_{=}$ is the superposition $\ket{\phi[i]}\ket{\phi[j]}\ket{0_{\log t}}.$ The output of $\mu_{=}$ is the superposition $$\ket{\phi[i]}\ket{\phi[j]}\ket{\phi[i][0] = \phi[j][0]}\dots \ket{\phi[i][\log(t)-1] = \phi[j][\log(t)-1]}.$$ 
Gate $\mu_{=}$, for each $a \in [\log t]$, computes qubit $\ket{\phi[i][a] = \phi[j][a]}$ with two Toffoli gates with three inputs and outputs. The input to both Toffoli gates are the two control qubits $\ket{\phi[i][a]}$, $\ket{\phi[j][a]}$, and a target qubit initialized to $\ket{0}$. 
The first Toffoli gate is activated when $\phi[i][a]=\phi[j][a]=1$. The second Toffoli gate is activated when $\phi[i][a]=\phi[j][a]=0$. The target qubit is set to $\ket{0 \oplus (\phi[i][a] = \phi[j][a])}$.
Qubits $\ket{\phi[i][0] = \phi[j][0]}\dots \ket{\phi[i][\log(t) -1] = \phi[j][\log(t) -1]}$ form the input of a Toffoli gate with $\log t +1$ inputs and outputs, which computes qubit $\ket{\phi[i]=\phi[j]}$. The control qubits are $\ket{\phi[i][0] = \phi[j][0]}\dots \ket{\phi[i][\log(t) -1] = \phi[j][\log(t) -1]}$. The target qubit is initialized to $\ket{0}$. The Toffoli gate is activated if $\ket{\phi[i][a] = \phi[j][a]}$ is equal to $\ket{1}$, for all $a \in [\log t]$. The target qubit is set to $\ket{0 \oplus \bigwedge_{a \in [\log t]}(\phi[i][a] = \phi[j][a])}$.
The qubits $$\ket{\phi[i]}\ket{\phi[j]}\ket{\phi[i][0] = \phi[j][0]}\dots \ket{\phi[i][\log(t) -1] = \phi[j][\log(t) -1]}$$ then enter $\mu_{=}^{-1}$ that, being the inverse of $\mu_{=}$, outputs the superposition $\ket{\phi[i]}\ket{\phi[j]}\ket{0_{\log t}}$.
Overall, $U_{=}$ is implemented using $4 \log t$ Toffoli gates with a constant number of inputs and outputs and one Toffoli gate with $\log t + 1$ inputs and outputs. In turn, this last Toffoli gate is implemented using $\log t$ Toffoli gates with $O(1)$ inputs and outputs. Gate $U_=$ has circuit complexity $O(\log t)$. Hence, it has the same bound for its depth and width.
\end{proof}

Second, we focus on gate $U_{<}$  that, given binary integers $\phi[i]$ and $\phi[j]$ represented with $\log t$ bits, verifies if $\phi[i]$ is less than $\phi[j]$.

\begin{lemma}\label{le:gate-less}
There exists a gate $U_{<}$ that,  when provided with the input superposition $\ket{\phi[i]}\ket{\phi[j]}\ket{0_{\log{t}}}\ket{0}$, produces the output superposition $\ket{\phi[i]}\ket{\phi[j]}\ket{0_{\log{t}}}\ket{\phi[i]<\phi[j]}$. Gate $U_{<}$ has 
$O(\log t)$ circuit complexity, depth, and width.
\end{lemma}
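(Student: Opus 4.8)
The plan is to mimic the structure of the proof of \cref{le:gate-equal}, building $U_{<}$ as a sandwich $\nu_{<}^{-1} \circ (\text{comparison logic}) \circ \nu_{<}$, where the inner helper gate $\nu_{<}$ writes auxiliary qubits into the $\ket{0_{\log t}}$ register, the comparison logic collapses them into the single answer qubit $\ket{\phi[i]<\phi[j]}$, and $\nu_{<}^{-1}$ uncomputes the auxiliary register back to $\ket{0_{\log t}}$. The only genuinely new ingredient is a reversible, shallow way to compute the ``less-than'' predicate on two $\log t$-bit integers; everything else (sandwiching to restore ancillas, counting Toffoli gates, and the observation that circuit complexity $O(\log t)$ forces the same asymptotic bound on depth and width) is identical in spirit to \cref{le:gate-equal}.

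First I would recall the standard bitwise characterization of the strict order on binary integers: writing $\phi[i]$ and $\phi[j]$ with $\log t$ bits indexed from the most significant down to the least, we have $\phi[i] < \phi[j]$ iff there is some position $a$ at which $\phi[i][a]=0$ and $\phi[j][a]=1$ while all more significant bits agree. Equivalently, scanning from the top bit downward, $\phi[i] < \phi[j]$ holds iff the first position where the two strings differ is a position where $\phi[i]$ has a $0$. I would realize this as a prefix computation: for each bit position $a$, compute the local ``equal'' indicator $e_a = (\phi[i][a]=\phi[j][a])$ (exactly the two-Toffoli construction already used in \cref{le:gate-equal}) and the local ``this is a witness'' indicator $w_a = (\neg\phi[i][a]) \wedge \phi[j][a]$ (a single Toffoli). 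The answer is then $\phi[i]<\phi[j] = \bigvee_{a} \bigl(w_a \wedge \bigwedge_{b \text{ more significant than } a} e_b\bigr)$, which $\nu_{<}$ records in the scratch register.

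The step I expect to be the main obstacle is achieving the claimed $O(\log t)$ \emph{depth} for this disjunction-of-prefixes, since a naive sequential scan accumulating an ``equal so far'' flag is inherently of depth $\Theta(\log t)$ only if each step is $O(1)$, and more careful reversible bookkeeping is needed to avoid a blow-up in either depth or ancilla count when uncomputing. I would handle this by computing all the $e_a$ indicators in parallel (they are independent, one layer of constant-depth gadgets), then forming the prefix-AND values $p_a = \bigbigwedge_{b>a} e_b$; the prefix-AND over $\log t$ bits can be done reversibly in $O(\log\log t)$ depth by a balanced-tree (Brent--Kung style) construction, or simply in $O(\log t)$ depth by a linear scan, either of which is within the stated bound. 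Each gated term $w_a \wedge p_a$ is a single Toffoli, and OR-ing the $\log t$ resulting qubits into the output is one multi-controlled gate (with negated controls) that decomposes into $O(\log t)$ elementary gates exactly as the $(\log t+1)$-input Toffoli did in \cref{le:gate-equal}.

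Finally I would tally the cost: the helper $\nu_{<}$ uses $O(\log t)$ Toffoli gates with $O(1)$ fan-in to produce the $e_a$ and $w_a$ and the prefix values, the combining stage is another $O(\log t)$ elementary gates, and $\nu_{<}^{-1}$ doubles this, so the total circuit complexity is $O(\log t)$. As in \cref{le:gate-equal}, once circuit complexity is $O(\log t)$ the depth and width are trivially $O(\log t)$ as well, since neither can exceed the total gate count; this yields all three stated bounds and completes the argument. The only subtlety to flag in writing the full proof is making the uncomputation of the prefix-AND scratch qubits explicit, so that the final register is genuinely restored to $\ket{0_{\log t}}$ as required by the lemma statement.
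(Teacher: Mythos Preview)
Your approach is correct and yields the stated $O(\log t)$ bounds, but it takes a genuinely different route from the paper. The paper does not use the lexicographic witness-and-prefix-AND decomposition you describe; instead it reduces the comparison to a single ripple-carry addition. Concretely, the paper's helper $\mu_<$ first complements $\phi[j]$ bitwise with $\log t$ Pauli-$X$ gates, then computes the chain of carries of the sum $\phi[i] + \overline{\phi[j]}$ using a two-bit carry gate $U_2$ on the least significant pair followed by three-bit carry gates $U_3$ on the remaining positions. The final carry is $1$ iff $\phi[i] > \phi[j]$, so a single anticontrolled \textsc{not} on that carry writes $\ket{\phi[i]<\phi[j]}$ into the target; $\mu_<^{-1}$ then uncomputes the carries and the complementation. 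This is a straight $O(\log t)$-depth chain with no prefix-tree machinery, and it uses exactly the $\log t$ ancillas named in the statement (one per carry bit).

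Your route is arguably cleaner logically and could in principle be pushed to $O(\log\log t)$ depth via parallel prefix, which the ripple-carry construction cannot match. On the other hand, as written your $\nu_<$ needs separate storage for the $e_a$, the prefix values $p_a$, and the gated terms $w_a\wedge p_a$---that is $\Theta(\log t)$ ancillas with a constant strictly larger than $1$, so it does not literally fit the single $\ket{0_{\log t}}$ register specified in the lemma without further reuse tricks. Since only the asymptotic width matters downstream this is harmless, but you should either flag the larger constant or sketch how to compress the intermediate values.
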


\begin{proof}
\begin{figure}[tb!]
    \centering
    \includegraphics[page = 8, width = .8\textwidth]{figures/Gate-Order-Initializer.pdf}
    \caption{The gate $U_<$.}
    \label{fig:less}
\end{figure}
Gate $U_{<}$ consists of two gates $\mu_{<}$ and $\mu_{<}^{-1}$ and in between such gates it executes an Anticontrolled NOT gate; refer to \cref{fig:less}. 
The input of $\mu_{<}$ is the superposition $\ket{\phi[i]}\ket{\phi[j]}\ket{0_{\log t}}.$ The output of $\mu_{<}$ is the superposition $\ket{\phi[i]}\ket{\phi[j]}\ket{c_0}\ket{c_2} \dots \ket{c_{\log (t)-1}}$, where $c_0$ is the carry of the sum  $\phi[i][0] + \phi[j][0]$ and $c_k$ is equal to the carry of the sum of $c_{k-1} + \phi[i][k] + \phi[j][k]$, with $2 \leq k < \log n$. 

Gate $\mu_{<}$ uses two gates $U_2$ and $U_3$. Gate $U_2$ computes the carry of the sum of two qubits, receiving as input $\ket{a}\ket{b}\ket{0}$ and outputs $\ket{a}\ket{b}\ket{a \wedge b}$. Gate $U_3$ computes the carry of the sum of three qubits, receiving as input $\ket{a}\ket{b}\ket{c}\ket{0}$ and outputs $\ket{a}\ket{b}\ket{c}\ket{(a \wedge b)\vee(a \wedge c)\vee(b \wedge c)}$. 
Gate $\mu_{<}$ first computes the complement to one $\overline{\phi[j]}$ of $\phi[j]$, using bit-flip (Pauli-$X$) gates, then it performs $U_2$ between $\phi[i][0]$ and $\overline{\phi[j][0]}$. The carry qubit of the previous sum, together with $\phi[i][1]$ and $\overline{\phi[j][1]}$ enters $U_3$. 
For $k=2, \dots \log (t) -1$ it performs the $U_3$ gate between the carry qubit of the prevoius sum, and $\phi[i][k]$ and $\overline{\phi[j][k]}$.
Note that, the last carry qubit is equal to one if and only if $\phi[i]$ is greater than $\phi[j]$. 

The last carry qubit of $\mu_{<}$ is used as the input control qubit of a {\sc Anticontrolled NOT} gate whose input target qubit is set to $\ket{0}$. The output target qubit is therefore ``flipped'' only if the input control qubit is $\ket{0}$, i.e., if $\phi[i]<\phi[j]$. 

All qubits, except for the one computed above, now enter gate $\mu_{<}^{-1}$ that is the inverse gate of $\mu_{<}$.

Overall, $U_{<}$ is implemented using $2 \log t$ Pauli-$X$ gates with one input and one output, and $6 \log t +2$ Toffoli gates with three inputs and outputs. Plus, it uses an Anticontrolled NOT, with two inputs and outputs.
This shows that the circuit complexity of $U_{<}$ is $O(\log t)$. Hence, it has the same bound for the depth and for the width.
\end{proof}

Third, we focus on gate $U_{1s}$  that, given a binary string $b$ of length $t$, counts how many bits set to $1$ are in it. For simplicity, we assume that $t$ is a power of $2$. If not, we can always append to $b$ the smallest number of $0$s such that this property holds. Observe that, in the worst case, the length of the string may double, which does not alter the asymptotic bounds of the next lemma.

\begin{lemma}\label{le:gate-counter}
There exists a gate $U_{1s}$ that, when provided with the input superposition $\ket{b_0 \dots b_{t-1}}\ket{0_{h}}\ket{0_{k}}$, where $t$ is a power of $2$, $k = 1 + \log t$, and $h = 4t-2\log t -4 $, produces the output superposition $\ket{b_0 \dots b_{t-1}}\ket{0_{h}}\ket{s}$, where $s$ is the binary representation, in $\log t + 1$ bits, of the total number $\sum^{t-1}_{i=0} b_i$ of qubits set to $1$ in $b$. Gate $U_{1s}$ has $O(t)$ circuit complexity, $O(\log^2 t)$ depth, and $O(t)$ width.
\end{lemma}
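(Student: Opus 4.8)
The plan is to realize $U_{1s}$ as a balanced binary \emph{adder tree} (a reversible popcount circuit). I would regard the $t=2^{\log t}$ input qubits $b_0,\dots,b_{t-1}$ as $t$ one-bit integers sitting at the leaves of a complete binary tree. At each internal node on level $\ell$ (for $\ell=1,\dots,\log t$) I add the two $\ell$-bit partial counts produced by its children, obtaining an $(\ell+1)$-bit partial count; the root then holds the desired $(\log t+1)$-bit Hamming weight $s=\sum_i b_i$. Each node is implemented by a reversible out-of-place ripple-carry adder that writes its $\ell+1$ sum bits (the carry-out being the most significant one) into fresh ancillas while rippling a carry through $\ell-1$ intermediate carry ancillas; the elementary half/full adders are just Toffoli and CNOT gates, exactly in the spirit of the carry gates $U_2,U_3$ already used in \cref{le:gate-less}. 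To guarantee that the ancilla register returns to $\ket{0_h}$, I would use the standard \emph{compute--copy--uncompute} pattern: run the whole tree forward (leaving every partial sum and every intermediate carry as garbage in the $h$-register), copy the root's $\log t+1$ bits into the output register with CNOTs, and then run the tree backward to reset all garbage to $\ket{0_h}$.

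The three complexity bounds then follow from the geometry of the tree. There are $2^{\log t-\ell}$ nodes on level $\ell$, each costing $O(\ell)$ elementary gates, so the circuit complexity of the forward pass is $\sum_{\ell=1}^{\log t} 2^{\log t-\ell}\,O(\ell)=O\!\big(t\sum_{\ell\ge1}\ell/2^\ell\big)=O(t)$, and the copy and the (symmetric) uncompute pass do not change this. For depth, the critical path traverses one adder per level and a ripple adder on $\ell$ bits has depth $O(\ell)$; summing along the path gives $\sum_{\ell=1}^{\log t} O(\ell)=O(\log^2 t)$, which the constant-depth copy step leaves unchanged. For width, the bottommost level performs $t/2$ half-adders fully in parallel, each of constant size, so the maximum antichain weight of the associated DAG is $O(t)$.

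The delicate part, and the step I would treat most carefully, is matching the \emph{exact} ancilla count $h=4t-2\log t-4$ rather than a mere $O(t)$ bound, since $h$ is prescribed in the statement. This reduces to two finite geometric sums. The total number of persistent sum bits over all levels is $\sum_{\ell=1}^{\log t}2^{\log t-\ell}(\ell+1)=t\big(3-(\log t+3)/t\big)=3t-\log t-3$, while the total number of intermediate carry bits (a level-$\ell$ adder uses $\ell-1$ of them, the lowest position being a carry-in-free half-adder) is $\sum_{\ell=1}^{\log t}2^{\log t-\ell}(\ell-1)=t\big(1-(\log t+1)/t\big)=t-\log t-1$. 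Adding these yields precisely $4t-2\log t-4$ garbage qubits held simultaneously during the forward pass, which is exactly $h$; the root's $\log t+1$ bits are among these garbage bits and are copied into $s$ before the uncomputation clears the register. The remaining care is purely bookkeeping: verifying that every intermediate carry is confined to its own adder so that the uncompute pass restores $\ket{0_h}$ cleanly, and checking that the power-of-two padding remark preceding the lemma does not perturb these counts asymptotically.
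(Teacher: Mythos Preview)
Your proposal is correct and is essentially the same construction as the paper's: a balanced binary tree of reversible ripple-carry adders (the paper calls the leaf layer $U_{init}$ and the level-$i$ adder $U_{sl(i)}$, built from one half-adder and $i-1$ full-adders), wrapped in a compute--copy--uncompute to clean the ancillas. Your ancilla accounting by splitting into $3t-\log t-3$ sum bits and $t-\log t-1$ intermediate carries is just a different way of summing the same quantity the paper obtains as $2Y$ with $Y=\sum_{i=1}^{\log t}(t/2^i)\,i=2t-\log t-2$ total HA/FA gates (each contributing two ancillas); both yield $h=4t-2\log t-4$.
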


\begin{proof}
\begin{figure}[tb!]      
\centering
\includegraphics[page = 38,width=.6\textwidth]{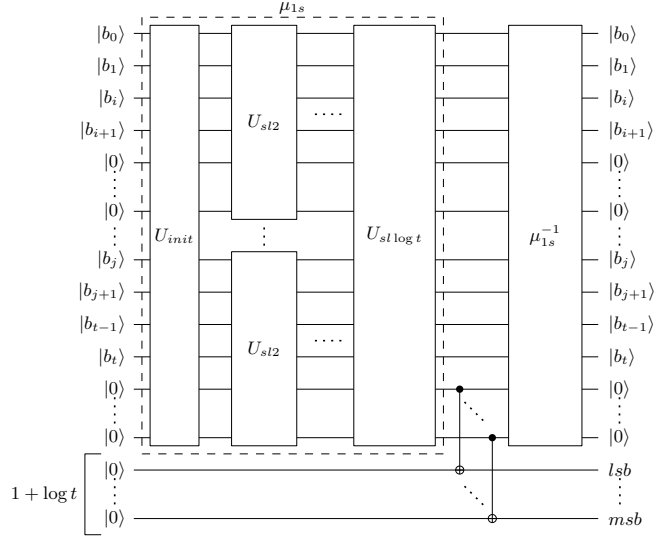}
\caption{Structure of gate $U_{1s}$.}
\label{fig:U_s1}
\end{figure}

\begin{figure}
\centering
\includegraphics[page = 58,width=.75\textwidth]{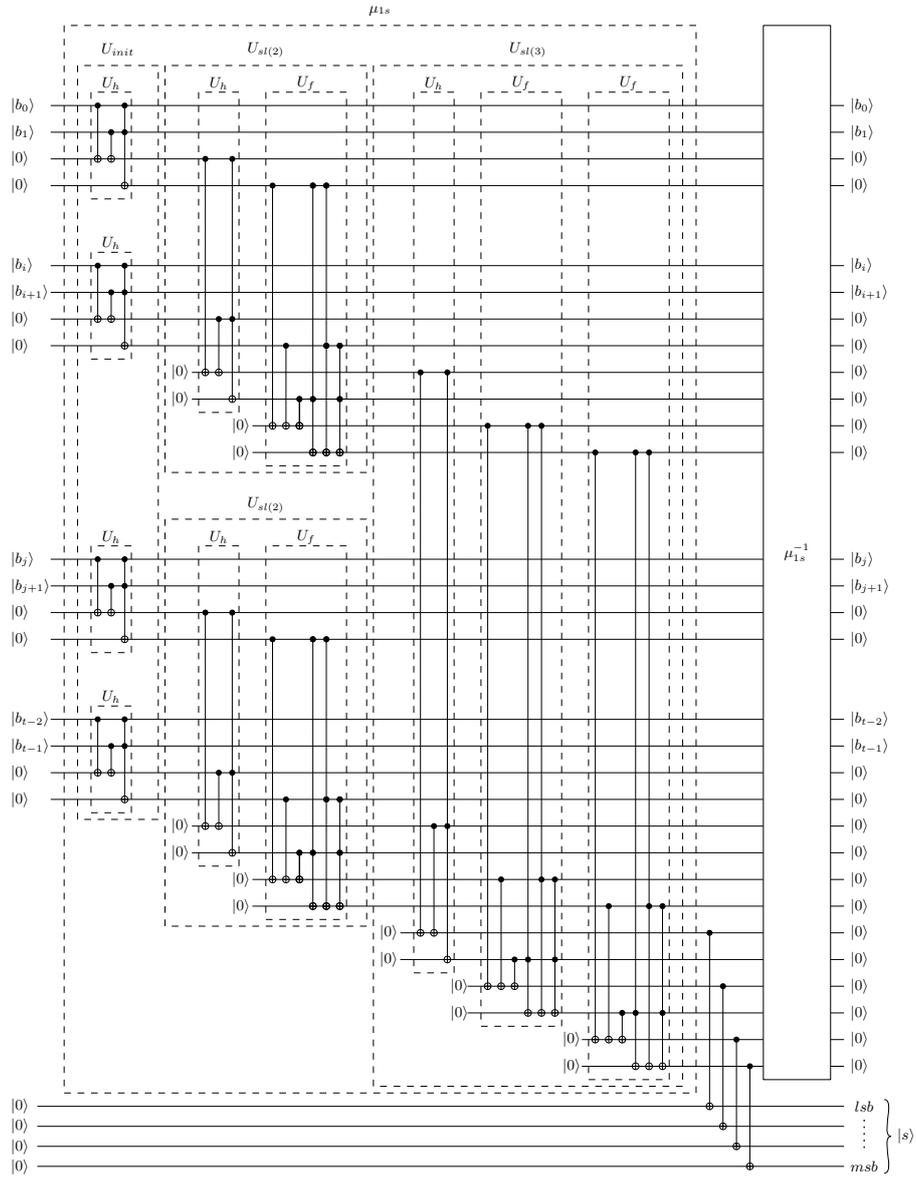}
\caption{First three steps of gate $U_{1s}$, with $t = 8$.}
\label{fig:U_part_s1}
\end{figure}

Gate $U_{1s}$ consists of two gates $\mu_{1s}$ and $\mu_{1s}^{-1}$, and in between such gates it executes $\log(t) +1$ C-NOT gates; refer to~\cref{fig:U_s1}. 
Let $b = b_0 \dots b_{t-1}$, i.e., $b_i$ is the $i$-th bit of the binary string $b$. 
The input of $\mu_{1s}$ is the superposition $\ket{b}\ket{0_{h-1-\log t}}\ket{0_{1+\log t}}$. Gate $\mu_{1s}$ outputs the superposition $\ket{b}\ket{0_{h-1-\log t}}\ket{s}$, where $s=\sum^{t-1}_{i=0} b_i$ is the binary representation, in $\log t +1$ qubits, of the total number of bits set to $1$ in $b$.

Gate $\mu_{1s}$ exploits the gate $U_{init}$ and the gates $U_{sl(i)}$, for $i = 2,\dots,\log(t)$; refer to~\cref{fig:U_part_s1}.
Specifically, $\mu_{1s}$ first executes the gate $U_{init}$. Then, it executes $\frac{t}{4}$ parallel gates $U_{sl(2)}$. Then, for $i=3,\dots,\log(t)$, in gate $\mu_{1s}$, the $\frac{t}{2^{i-1}}$  parallel gates $U_{sl({i-1})}$ are followed by $\frac{t}{2^{i}}$  parallel gates $U_{sl({i})}$; refer to~\cref{fig:U_part_s1}.

\begin{figure}[tb!]
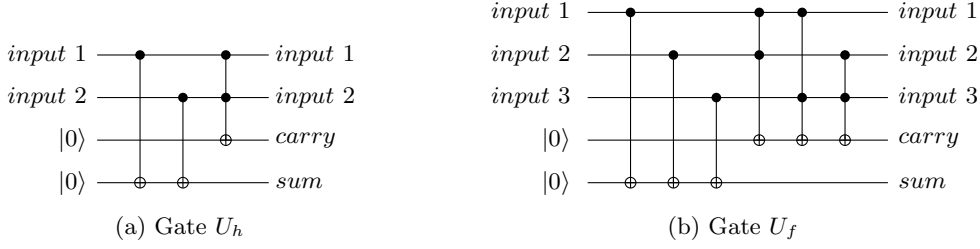

        \begin{subfigure}[b]{0.45\textwidth}    
        \centering
        \includegraphics[page = 4]{figures/Gate-Order-Initializer.pdf}
        \caption{Gate $U_h$}
        \label{fig:half-adder}
        \end{subfigure}
        \begin{subfigure}[b]{0.45\textwidth}
        \centering
        \includegraphics[page = 6]{figures/Gate-Order-Initializer.pdf}
        \caption{Gate $U_f$}
        \label{fig:full-adder}
        \end{subfigure}
        \caption{The half-adder gate $U_h$ (left) and the full-adder gate $U_f$ (right).}
        \label{fig:half-full}
\end{figure}
The purpose of gate $U_{init}$ is to ``transform'' the binary string $b$ of length $t$ in $\frac{t}{2}$ binary strings $bn_i$ each of length $2$, such that $bn_i$ is the binary encoding of the number $b_{2i} + b_{2i+1}$, for $i = 0,\dots,\frac{t-2}{2}$. Namely, this gate partitions the bits of $b$ into pairs and sums each pair to form a binary integer represented using two bits.
Gate $U_{init}$ exploits $\frac{t}{2}$ parallel gates {\sc Half-Adder} ({\sc HA}) $U_h$, which we describe next; refer to \cref{fig:half-adder}. 
Each gate HA takes in input the superposition $\ket{a b}\ket{0}\ket {0}$, where $a,b \in {\mathbb B}$, and outputs the superposition $\ket{a b}\ket{z}\ket{c}$, where $z = a \oplus b$ is the least significant bit of the binary sum $a + b$ and $c= a \wedge b$ is the most significant bit of the binary sum $a + b$. Observe that, $c$ is the carry bit of the bitwise sum of $a$ and $b$.
For each $i=0,\dots,\frac{t-2}{2}$, the $U_{init}$ exploits a gate HA to which the bits $b_{2i}$ and $b_{2i+1}$ are provided in place of the bits $a$ and $b$, respectively; see~\cref{fig:U_part_s1}. Note that, each gate HA has $O(1)$ circuit complexity, depth, and width. Thus, the gate $U_{init}$ has $O(t)$ circuit complexity, $O(1)$ depth, and $O(t)$ width.

The purpose of the gate $U_{sl(i)}$ is to compute the sum of two binary integers of length $i$, which it then outputs as a binary integer of length $i+1$. 
This gate exploit $1$ gate HA and $i-1$ gates {\sc Full-Adder} ({\sc FA}) $U_f$, which we describe next; refer to \cref{fig:full-adder}.
Each gate FA takes in input the superposition $\ket{a b c}\ket{0}\ket {0}$, where $a, b,c \in {\mathbb B}$, and outputs the superposition $\ket{a b c}\ket{z}\ket{c}$, where $z = a \oplus b \oplus c$ is the least significant bit of the binary sum $a+b+c$ and $c= (a \wedge b) \vee (a \wedge c) \vee (b \wedge c)$ is the most significant bit of the binary sum $a + b + c$. Observe that, $c$ is the carry bit of the bitwise sum of $a$, $b$, and $c$.
The gate $U_{sl(i)}$ uses a gate HA to compute the sum of the two least significant bits of its two input binary integers; let $c_0$ be the corresponding carry bit (possibly $c_0 = 0$). Then, it uses a gate FA to compute the sum of the carry bit $c_0$ with the second-least significant bit of the two input binary integers; let $c_1$ be the corresponding carry bit. Then, for $k=1,\dots,i-1$, it uses a gate FA to compute the sum of the carry bit $c_k$ with the $(k+1)$-th least-significant bit of the two input binary integers. Note that, each gate FA has $O(1)$ circuit complexity, depth, and width. Thus, the gate $U_{sl(i)}$ has $O(i)$ circuit complexity, $O(i)$ depth, and $O(1)$ width.
Altogether, the gate $U_{sl(i)}$ takes in input the superposition $\ket{a_0\dots a_{i-1}}\ket{b_0\dots b_{i-1}}\ket{0_{2i}}$, where $a_0,\dots,a_{i-1}, b_0,\dots,b_{i-1} \in {\mathbb B}$, and outputs the superposition $\ket{a_0 \dots a_{i-1}}\ket{b_0\dots b_{i-1}}\ket{z}\ket{c}\ket{g}$, where $z \in {\mathbb B}^i$, $c=c_i \in {\mathbb B}$, and the binary string $cz$ coincides with  
$a_0a_1\dots a_{i-1} + b_0b_1\dots b_{i-1}$ and $g=c_0 c_1 \dots c_{i-1}$ is the concatenation of the carry bits of the gate HA and of the gates FA.
Since the gate $U_{sl(i)}$ consists of one gate HA and $i-1$ gates FA, it has $O(i)$ circuit complexity, $O(i)$ depth, and $O(1)$ width.

To prove some of the next bounds, we will exploit the following.

\begin{claimx}\label{cl:series}   
For any positive integer $k \geq 1$, let $\phi(k) = \sum^{k}_{i=1} \frac{i}{2^i}$. It holds that 
\begin{equation}\label{eq:sum-log-minus-1}
\phi(k) = \frac{1}{2^k}(2^{k+1}-k-2)
\end{equation} 
\end{claimx}

\begin{proof}
We prove the statement by induction on $k$.

In the base case $k=1$. Then, by definition, $\phi(1)= \sum^{1}_{i=1} \frac{i}{2^i} = \frac{1}{2}$. 
Also, by \cref{eq:sum-log-minus-1}, $\phi(1)=\frac{1}{2}(4-1-2) = \frac{1}{2}$. Therefore, the statement holds.

Suppose now that $k>1$. Then, by definition, we have
\begin{equation}\label{eq:minus-1}
\phi(k) = \sum^{k}_{i=1} \frac{i}{2^i} = \sum^{k-1}_{i=1} \frac{i}{2^i} + \frac{k}{2^{k}} = \phi(k-1) + \frac{k}{2^{k}}.
\end{equation}

By induction and by \cref{eq:sum-log-minus-1}, we have that  $\phi(k-1) = \frac{1}{2^{k-1}}(2^{k}-k-1)$.  Thus, by replacing $\phi(k-1)$ with such an expression in \cref{eq:minus-1}, we have

\begin{multline}\label{eq:simple-induction}
\phi(k) = \frac{1}{2^{k-1}}(2^{k}-k-1)  + \frac{k}{2^{k}} = \\ \frac{1}{2^k}(2(2^k -k-1) + k) = 
\frac{1}{2^k}(2^{k+1} -2k-2 + k) = \frac{1}{2^k}(2^{k+1} -k-2).
\end{multline}

\cref{eq:simple-induction} concludes the proof of the inductive case and of the claim.
\end{proof}

Since gate $\mu_{1s}$ contains a gate $U_{init}$ and the gates $U_{sl(i)}$, for $i = 2,\dots,\log(t)$, we have the following. The circuit complexity of $\mu_{1s}$ 
can be estimated as follows. Gate $U_{init}$ contains $\frac{t}{2}$ gates HA. Also, for $i = 2,\dots,\log(t)$, gate $\mu_{1s}$ contains $\frac{t}{2^i}$ gates $U_{sl(i)}$.
Moreover, each gate $U_{sl(i)}$ contains a total number of gate HA and FA equal to $i$.
Therefore, the overall number of gates HA and FA in $\mu_{1s}$ is $Y = t \cdot \sum^{\log t}_{i=1} \frac{i}{2^i}$. Since $Y = t \cdot \phi(\log t)$,  by \cref{cl:series} we have that $Y = 2t-\log(t)-2$, which is in $O(t)$. 
 Therefore, the circuit complexity of $\mu_{1s}$ is upper bounded by $O(t)$. The depth of $\mu_{1s}$ in $O(\sum^{\log t}_{i=1} i) \in O(\log^2 t)$. Finally, the width of $\mu_{1s}$ is bounded by the number of parallel circuits in $U_{init}$, which is~$O(t)$.

Observe that, the circuit $U_{sl(\log t)}$ is the last circuit of $\mu_{1s}$, and its output (which coincides with the one of $\mu_{1s}$) is the superposition $\ket{b}\ket{0_{h-1-\log t}}\ket{s}$, where $s=\sum^t_{i=1} b_i$ is the binary representation, in $\log t +1$ qubits, of the total number of bits set to $1$ in $b$. In order to allow the reuse of the ancilla qubit of $\mu_{1s}$, and still include the  $\log t +1$ qubits of $s$ in the output of $U_{1s}$, the gate $U_{1s}$ contains $\log t +1$ C-NOT gates, which are then followed by the inverse circuit $\mu^-1_{1s}$. In particular, the control qubit of each of these C-NOT gates is one of the  $\log t +1$ bits of $s$ and the target qubit is initialized to $\ket{0}$. Since C-NOT gates have $O(1)$ circuit complexity, we have that the circuit complexity, depth, and width of the gate $U_{1s}$ have the same asymptotic bounds of the circuit complexity, depth, and width of the gate $\mu_{1s}$.

To complete the proof, we show the bounds on the number of ancilla qubit of the gate $U_{1s}$. First, observe that each C-NOT gate uses exactly one ancilla qubit, which it turned into one of the bits encoding the sum  $\sum^t_{i=1} b_i$
Since $1 + \log t$ bits may be needed to encode such a sum, we use $1 + \log t$ C-NOT gates. This shows that $k = 1 + \log t$. For the value $h$, observe that each gate HA and FA used in the gates $\mu_{1s}$ and  $\mu^{-1}_{1s}$ exploits two ancilla qubits. Therefore, the number $h$ of ancilla qubits in input to $\mu_{1s}$ (and to $\mu^{-1}_{1s}$) is $2Y = 2 (2t-\log(t)-2) = 4t - 2\log(t) - 4$. This concludes the proof.
\end{proof}

\section{A Quantum Framework for Graph Drawing Problems}

In this section, we establish a framework for dealing with several \NP-complete graph drawing problems; refer to~\cref{fig:framework}. The framework is based on the Grover's approach for quantum search~\cite{DBLP:conf/stoc/Grover96}, which builds upon three circuits. The first circuit is a Hadamard gate that builds a uniform superposition of a sequence of qubits representing a potential, possibly not well-formed, solution to the problem. The second circuit exploits an {\em oracle} to perform the so-called \emph{\sc Phase Inversion}. The third circuit executes the so-called \emph{\sc Inversion about the Mean}. The second and the third circuit are executed a number of times which guarantees that a final measure outputs a solution, if any, with high probability.

\begin{figure}[tb!]
    \centering
    \includegraphics[page = 48, width = .85\textwidth]{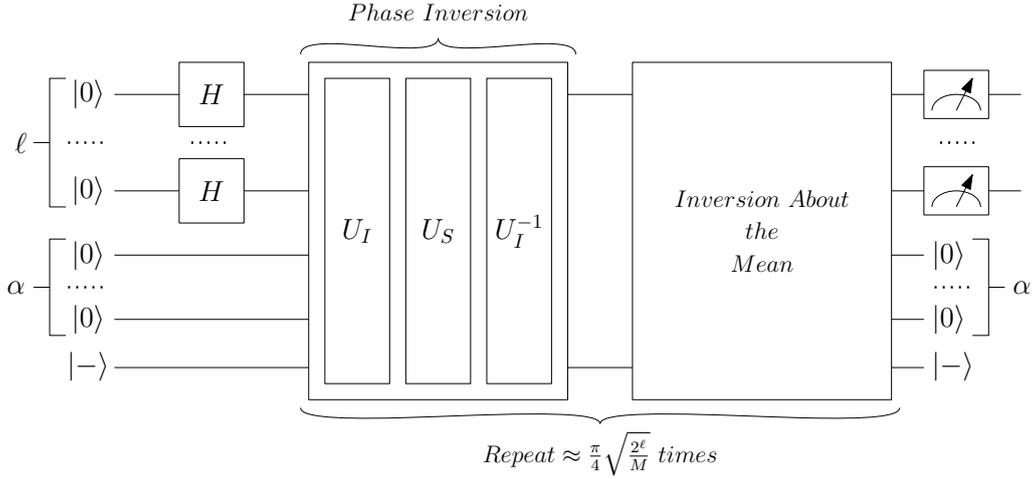}
    \caption{Overview of the quantum graph drawing framework based on Grover's approach.}
    \label{fig:framework}
\end{figure}

\begin{theorem}[Grover's search~\cite{GroverAaronson,DBLP:conf/stoc/Grover96}]\label{th:grover}
Let $P$ be a search problem whose solutions can be represented using $\ell$ bits and suppose that there exists a {\sc Phase Inversion} circuit for $P$ with $c(\ell)$ circuit complexity and $d(\ell)$ depth. Assume that $c(\ell)$ and $d(\ell)$ are $\Omega(\log \ell)$.
Then, there exists a quantum circuit that outputs a solution for $P$, if any, with $\frac{\pi}{4} \sqrt{\frac{2^\ell}{M}} \cdot O( c(\ell))$ circuit complexity and $\frac{\pi}{4} \sqrt{\frac{2^\ell}{M}} \cdot O( d(\ell))$ depth, where $M$ is the number of solutions of $P$.
\end{theorem}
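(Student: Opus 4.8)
The plan is to instantiate the standard Grover search routine with the supplied \textsc{Phase Inversion} circuit and then to account for the cost of the two auxiliary building blocks. First I would prepare the uniform superposition $|\psi\rangle = \frac{1}{\sqrt{2^\ell}}\sum_{x \in \mathbb{B}^\ell} |x\rangle$ over all $\ell$-bit candidate solutions by applying a single layer of $\ell$ Hadamard gates to $|0_\ell\rangle$; this costs $O(\ell)$ elementary gates in depth $1$ and is performed only once. I would then form the \emph{Grover iterate} $G = D \cdot O_P$, where $O_P$ is the given \textsc{Phase Inversion} circuit (which flips the phase of exactly the basis states encoding solutions of $P$) and $D = 2|\psi\rangle\langle\psi| - I$ is the \textsc{Inversion about the Mean} operator, and apply $G$ a suitable number $r$ of times before measuring in the computational basis.

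The core of the argument is the well-known two-dimensional geometric analysis of Grover's routine. Restricting attention to the plane spanned by the normalized uniform superposition $|\alpha\rangle$ over the $2^\ell - M$ non-solutions and the normalized uniform superposition $|\beta\rangle$ over the $M$ solutions, the initial state satisfies $|\psi\rangle = \cos\theta\,|\alpha\rangle + \sin\theta\,|\beta\rangle$ with $\sin\theta = \sqrt{M/2^\ell}$, and each application of $G$ acts as a rotation by angle $2\theta$ toward $|\beta\rangle$. Hence after $r = \lfloor \frac{\pi}{4\theta}\rfloor$ iterations the state is within $O(\theta)$ of $|\beta\rangle$, so a measurement returns a solution with probability $1 - O(M/2^\ell)$, bounded away from $0$ (and amplifiable to high probability by a constant number of repetitions). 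Using $\theta \geq \sin\theta = \sqrt{M/2^\ell}$ we obtain $r \leq \frac{\pi}{4\theta} \leq \frac{\pi}{4}\sqrt{2^\ell/M}$, matching the iteration count in the statement.

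It remains to bound the per-iterate cost and to absorb the overhead of $D$ into the cost of $O_P$. I would implement $D$ as $H^{\otimes \ell}\,(2|0_\ell\rangle\langle 0_\ell| - I)\,H^{\otimes \ell}$, whose central conditional phase flip is a multiply-controlled $Z$ on $\ell$ qubits; realized with a balanced binary tree of Toffoli gates and $O(\ell)$ ancillas this has $O(\ell)$ circuit complexity and $O(\log \ell)$ depth. Thus one iterate has circuit complexity $O(c(\ell) + \ell)$ and depth $O(d(\ell) + \log \ell)$. Since any correct \textsc{Phase Inversion} for $P$ must touch every one of the $\ell$ solution qubits we have $c(\ell) = \Omega(\ell)$, and by hypothesis $d(\ell) = \Omega(\log \ell)$; therefore the iterate costs collapse to $O(c(\ell))$ and $O(d(\ell))$. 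Multiplying by $r = \frac{\pi}{4}\sqrt{2^\ell/M}$ and adding the one-time $O(\ell)$ initialization yields the claimed circuit-complexity and depth bounds.

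I expect the main obstacle to be exactly this bookkeeping that lets the $O(\ell)$ circuit complexity and $O(\log \ell)$ depth of the diffusion operator be swallowed by $c(\ell)$ and $d(\ell)$: the depth side follows directly from the stated $\Omega(\log \ell)$ hypothesis, while the circuit-complexity side relies on the mild observation that a phase-inverting oracle cannot ignore any solution qubit, giving $c(\ell) = \Omega(\ell)$. A secondary subtlety is the usual point that $r$ depends on $M$; here $M$ appears in the statement, so I would simply treat it as given (otherwise one wraps the routine in the standard exponential search over guesses for $r$, which affects the bounds only by a constant factor).
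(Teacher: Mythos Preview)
The paper does not prove this theorem at all: it is stated with citations to Grover and to Aaronson and then used as a black box (together with \cref{lem:oracles}) to derive \cref{th:main}. There is therefore no ``paper's own proof'' to compare your proposal against.

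Your sketch is the standard Grover analysis and is correct. One small remark on the bookkeeping you flag yourself: the stated hypothesis is only that $c(\ell)$ and $d(\ell)$ are $\Omega(\log \ell)$, which is exactly what is needed to absorb the $O(\log \ell)$ depth of the diffusion operator, but is not by itself enough to absorb its $O(\ell)$ circuit complexity. You patch this by observing that any nontrivial \textsc{Phase Inversion} oracle must read all $\ell$ solution qubits, forcing $c(\ell)=\Omega(\ell)$; this is a fair additional observation (and holds for every oracle the paper actually builds), though it is not literally part of the theorem's hypotheses.
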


Let $n$ and $m$ be the number of vertices and edges of an input~graph~$G$, respectively. Observe that, in all the problems we consider, $G$ admits the sought layout if and only if each of its connected components does. Hence, in the following, we assume that $G$ is connected, and therefore $m \geq n-1$.
During the computation, we will manage a superposition $\ket{\Gamma} = \ket{\Phi}\ket{\Psi}\ket{\Theta}$, where $\ket{\Phi}$ is a superposition of $n \log n$ qubits, $\ket{\Psi}$ is a superposition of $m \log \tau$ qubits, and $\ket{\Theta}$ is a superposition of $\sigma \log m$ qubits. In particular, for some of the problems, $\ket{\Psi}$ and/or $\ket{\Theta}$ might not be present. We denote by $\ell$ the value $(n \log n) + (m \log \tau) + (\sigma \log m)$, where the second and/or third terms might be missing. 

The superposition $\ket{\Phi} = \sum_{\phi \in {\mathbb B}^{n \log n}} c_\phi \ket{\phi}$ represents the superposition of all sequences of $n$ natural numbers with values in $[n]$, each represented by a binary string $\phi$ of length $n\log n$ (according to notation defined in ~\cref{se:preliminaries} to represent sequences of integers). The digits corresponding to each natural number contained in $\phi$ form a consecutive sequence of length $\log n$. 
In particular, we denote by $\phi[i]$ the $i$-th natural number contained in $\phi$. %
The purpose of $\ket{\phi}$ is to represent the position of each vertex of $G$ in a total order. To this aim, observe that, within the superposition $\ket{\Phi}$, all possible states corresponding to assignments of positions from $0$ to $n-1$ for each vertex in $G$ are included.

The superposition $\ket{\Psi} = \sum_{\psi \in {\mathbb B}^{m \log\tau}} c_\psi \ket{\psi}$ represents the superposition of all sequences of $m$ natural numbers with values in $[\tau]$, each represented by a binary string $\psi$ of length $m\log \tau$. The digits corresponding to each natural number contained in $\psi$ form a consecutive sequence of length $\log \tau$. In particular, we denote by $\psi[i]$ the $i$-th natural number contained in $\psi$. 
The purpose of $\ket{\psi}$ is to represent a coloring of the edges of $G$ with color in $[\tau]$. To this aim, observe that, within the superposition $\ket{\Psi}$, all possible states corresponding to an assignment of integers from $0$ to $\tau-1$ for each edge in $G$ are included.

The superposition $\ket{\Theta} = \sum_{\theta \in {\mathbb B}^{\sigma \log m}} c_\theta \ket{\theta}$ represents the superposition of all sequences of $\sigma$ natural numbers with values in $[m]$, each represented by a binary string $\theta$ of length $\sigma \log m$. The digits corresponding to each natural number contained in $\theta$ form a consecutive sequence of length $\log m$. In particular, we denote by $\theta[i]$ the $i$-th natural number contained in $\theta$. 
The purpose of $\ket{\theta}$ is to represent a subset of the edges of $G$ of size at most $\sigma$, each labeled with an integer in $[m]$. To this aim, observe that, within the superposition $\ket{\Theta}$, all possible states corresponding to a selection of $\sigma$ edges of $G$, where each edge is indexed with an integer from $0$ to $m-1$, are included.

For problems {\sc TLCM}, {\sc TLKP}, {\sc TLQP}, and {\sc OPCM}, we have that $\ket{\Gamma} = \ket{\Phi}$.
For problem {\sc BT}, we have that $\ket{\Gamma} = \ket{\Phi}\ket{\Psi}$.
Finally, for problems {\sc TLS} and {\sc BS}, we have that $\ket{\Gamma} = \ket{\Phi}\ket{\Theta}$.

Next, we present an overview of how the superposition $\ket{\Gamma}$ evolves within the three main circuits of the framework; refer to \cref{fig:framework}.

First, in all problems we study, $\ell$ qubits set to $\ket{0}$ enter an Hadamard gate that outputs the uniform superposition $\ket{\Gamma} = H^{\otimes \ell}\ket{0_{\ell}} = \frac{1}{\sqrt{2^{\ell}}}\sum_{\gamma \in {\mathbb B}^{\ell}} \ket{\gamma}$. Observe that, such a superposition, corresponds to the tensor product of the uniform superpositions $\ket{\Phi} = H^{\otimes {n \log n}}\ket{0_{n \log n}}$, $\ket{\Psi} = H^{\otimes {m \log \tau}}\ket{0_{m \log \tau}}$, and $\ket{\Theta} = H^{\otimes {\sigma \log m}}\ket{0_{\sigma \log m}}$, where possibly $\Psi$ and/or $\Theta$ might be not present. Also, observe that, within $\ket{\Gamma}$, all possible solutions of the considered problems are included, if any exist.

Second, in Grover's approach, the {\sc Inversion about the Mean} circuit is prescribed. Hence, we now focus on the {\sc Phase Inversion} circuit. In the first iteration, it receives as input 
\begin{inparaenum}[\bf (i)]
\item the uniform superposition $\ket{\Gamma} = H^{\otimes \ell}\ket{0_{\ell}}$, 
\item $\alpha$ ancilla qubits set to $\ket{0}$, whose number depends on the type of problem we are addressing, and
\item a qubit set to $\ket{-}$. 
\end{inparaenum}
Namely, it receives as input the superposition $\ket{\Gamma}\ket{0_{\alpha}}\ket{-}$, where $\ket{\Gamma} = \frac{1}{\sqrt{2^{\ell}}}\sum_{\gamma \in {\mathbb B}^{\ell}} \ket{\gamma}$.
It outputs the superposition $\frac{1}{\sqrt{2^{\ell}}}\sum_{\gamma \in {\mathbb B}^\ell} (-1)^{f(\gamma)}\ket{\gamma} \ket{0_{\alpha}} \ket{-}$, where $f(\gamma)=1$ if and only if $\gamma$ represents a valid solution to the considered problem. 
In general, the {\sc Phase Inversion} circuit receives in input the superposition $\ket{\Gamma}\ket{0_{\alpha}}\ket{-}$, where $\ket{\Gamma} = \sum_{\gamma \in {\mathbb B}^{\ell}} c_\gamma \ket{\gamma}$.
It outputs the the superposition $\sum_{\gamma \in {\mathbb B}^{\ell}} (-1)^{f(\gamma)} c_\gamma\ket{\gamma} \ket{0_{\alpha}} \ket{-}$. We remark that the values of the complex coefficients $c_\gamma$ depend on the iteration.

For each problem we consider, we provide a specific {\sc Phase Inversion} circuit. All such circuits consist of three circuits (see  \cref{fig:framework}), the first is called {\sc Input Transducer} and is denoted by $U_I$, the second is called {\sc Solution Detector} and is denoted by $U_S$, and the third is the inverse $U_I^{-1}$ of the {\sc Input Transducer}. The purpose of the {\sc Input Transducer} circuits is to ``filter out'' the states of $\ket{\Gamma}$ that do not correspond to ``well-formed'' candidate solutions. The purpose of the {\sc Solution Detector} circuits is to invert the amplitude of the states of $\ket{\Gamma}$ that correspond to positive solutions, if any. The purpose of 
$U_I^{-1}$ is to restore the state of the ancilla qubits to $\ket{0}$ so that they may be employed in the subsequent iterations of the amplitude-amplification process.

The {\sc Input Transducer} circuits are described in \cref{se:input-initializers}.
The {\sc Solution Detector} circuits are described in \cref{se:solution-detectors}.
In that section, we also combine the {\sc Input Transducer} circuits and the {\sc Solution Detector} circuits to prove the following lemma; refer also to~\cref{tab:circuit-model-results}.

\begin{lemma}\label{lem:oracles}
The TLCM, TLKP, TLQP, TLS, OPCM, BT, and BS problems admit {\sc Phase Inversion} circuits whose circuit complexity, depth, and width are bounded as follows:

\begin{description}
\item[TLCM] Circuit complexity: $O(m^2)$. Depth: $O(n^2)$. Width $O(m^2)$. 
\item[TLKP] Circuit complexity: $O(m^2)$. Depth: $O(m \log^2 m)$. Width $O(m)$. 
\item[TLQP] Circuit complexity: $O(m^6)$. Depth: $O(m^4)$. Width $O(m^2)$. 
\item[TLS]  Circuit complexity: $O(m^2)$. Depth: $O(m)$. Width $O(m)$. 
\item[OPCM] Circuit complexity: $O(n^8)$. Depth: $O(n^6)$. Width $O(m^2)$. 
\item[BT]   Circuit complexity: $O(n^8)$. Depth: $O(n^6)$. Width $O(m)$. 
\item[BS]   Circuit complexity: $O(n^8)$. Depth: $O(n^6)$. Width $O(m)$. 
\end{description}

\end{lemma}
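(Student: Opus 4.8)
The plan is to read the stated bounds off the two constituent families of circuits and assemble them. Recall from the framework that, for every problem $P$, the \textsc{Phase Inversion} circuit is the sequential composition $U_I^{-1}\circ U_S\circ U_I$ of an \textsc{Input Transducer} $U_I$, a \textsc{Solution Detector} $U_S$, and the inverse transducer $U_I^{-1}$. Since $U_I^{-1}$ is obtained by reversing $U_I$, it has identical circuit complexity, depth, and width. For a sequential composition of circuits acting on the same register, circuit complexity and depth are additive while the width is the maximum of the parts (at any time-cut only one sub-circuit is active); hence the \textsc{Phase Inversion} circuit has circuit complexity $O(\mathrm{CC}(U_I)+\mathrm{CC}(U_S))$, depth $O(d(U_I)+d(U_S))$, and width $O(\max\{w(U_I),w(U_S)\})$. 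The proof therefore reduces to plugging, for each of the seven problems, the bounds for $U_I$ established in \cref{se:input-initializers} and those for $U_S$ established in \cref{se:solution-detectors}, and taking the dominating term in each of the three columns of \cref{tab:circuit-model-results}.

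For the transducer I would use the fact that, in all problems, the vertex-order register $\ket{\Phi}$ must be constrained to encode a genuine $n$-permutation; this is tested by pairwise inequality checks over the $n$ positions using the gate $U_=$ of \cref{le:gate-equal}, with \textsc{BT} additionally validating the page assignment in $\ket{\Psi}$ and \textsc{TLS}/\textsc{BS} validating the $\sigma$-edge selection in $\ket{\Theta}$. For the detector, the combinatorial predicates are expressed through the order-comparison gate $U_<$ of \cref{le:gate-less} (to decide whether two edges cross for a given spine or layer order) and the population-count gate $U_{1s}$ of \cref{le:gate-counter} (to tally crossings and compare them against $\rho$ or $k$). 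This yields the $O(m^2)$ bounds for \textsc{TLCM}, \textsc{TLKP}, and \textsc{TLS}, the $O(m^6)$ bound for \textsc{TLQP} (whose detector ranges over triples rather than pairs of edges), and the heavier $O(n^8)$ bounds for the book-layout problems \textsc{OPCM}, \textsc{BT}, and \textsc{BS}, whose detectors must certify the more global outerplanarity and book-embedding conditions. Each resulting bound is then combined with the corresponding transducer bound as above.

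I expect the depth estimates, rather than the gate counts, to be the main obstacle, because the quantum setting forbids fan-out: a single position qubit cannot feed two comparison gates in the same time step, so the many pairwise (and, for \textsc{TLQP}, triplewise) tests cannot naively be run in parallel. The key device is \cref{le:partion-k}: partitioning the set of index pairs ($k=2$) or triples ($k=3$) into $O(\sqrt{k^{k}}\,|X|^{k-1})$ cross-independent classes lets each class—consisting of mutually disjoint tuples—be evaluated in one parallel layer without qubit contention, so that the number of layers, multiplied by the $O(\log n)$ depth of a single comparison (or the $O(\log^2 m)$ depth of $U_{1s}$), matches the tabulated depths. Making these scheduling arguments line up with the claimed bounds, particularly for the book problems where the detectors are more intricate, is the delicate part; the width bounds then follow from the maximum number of tuples processed concurrently within a single layer.
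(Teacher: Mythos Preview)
Your high-level approach is exactly the paper's: the \textsc{Phase Inversion} circuit is $U_I^{-1}\circ U_S\circ U_I$, circuit complexity and depth add while width takes the max, and the lemma is obtained by combining the transducer bounds of \cref{le:gate-order-initializer,le:gate-skewness-initializer} with the detector bounds of \cref{le:gate-TLCM,le:gate-TLKP,le:gate-TLQP,le:gate-TLS,le:gate-OPCM,le:gate-BT,le:gate-BS}, with \cref{le:partion-k} providing the scheduling into cross-independent classes for the depth estimates.

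Two small corrections to your narrative. First, the \textsc{Order Transducer} does more than validate: besides the permutation flag $f(\phi)$ it also \emph{computes} all precedence bits $x_{i,j}$ (via $U_<$), and these bits are what the detectors consume; similarly, for \textsc{BT} there is no extra transducer validation of $\ket{\Psi}$, since every page assignment is well-formed. Second, the $O(n^8)$ for \textsc{OPCM}, \textsc{BT}, \textsc{BS} does not come from ``global outerplanarity'' checks: the detectors have the same local cross-finder/counter/comparator structure as in the two-level case, but deciding whether two edges cross in a $1$-page layout needs \emph{four} precedence variables $x_{i,k},x_{k,j},x_{j,\ell},x_{i,\ell}$ rather than two, so \cref{le:partion-k} is invoked with $k=4$ on the $\binom{n}{2}$ precedence bits, yielding $O(n^6)$ classes and hence the $O(n^8)$ circuit complexity and $O(n^6)$ depth.
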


\cref{th:grover,lem:oracles} imply the following.

\begin{theorem}\label{th:main}
In the quantum circuit model of computation, the TLCM, TLKP, TLQP, TLS, OPCM, BT, and BS problems can be solved with the following sequential and parallel time bounds (where $M$ denotes the number of solutions to the problem):

\begin{description}
\item[TLCM] Sequential: $\sqrt{\frac{2^{n \log n}}{M}} \cdot O(m^2)$. Parallel: $\sqrt{\frac{2^{n \log n}}{M}} \cdot O(n^2)$. 
\item[TLKP] Sequential: $\sqrt{\frac{2^{n \log n}}{M}} \cdot O(m^2)$. Parallel: $\sqrt{\frac{2^{n \log n}}{M}} \cdot O(m \log^2 m)$. 
\item[TLQP] Sequential: $\sqrt{\frac{2^{n \log n}}{M}} \cdot O(m^6)$. Parallel: $\sqrt{\frac{2^{n \log n}}{M}} \cdot O(m^4)$. 
\item[TLS]  Sequential: $\sqrt{\frac{2^{n \log n + \sigma\log m}}{M}}\cdot O(m^2)$. Parallel: $\sqrt{\frac{2^{n \log n + \sigma\log m}}{M}}\cdot O(m)$. 
\item[OPCM] Sequential: $\sqrt{\frac{2^{n \log n}}{M}}\cdot O(n^8)$. Parallel: $\sqrt{\frac{2^{n \log n}}{M}}\cdot O(n^6)$. 
\item[BT]   Sequential: $\sqrt{\frac{2^{n \log n + m\log \tau}}{M}}\cdot O(n^8)$. Parallel: $\sqrt{\frac{2^{n \log n + m\log \tau}}{M}}\cdot O(n^6)$. 
\item[BS]   Sequential: $\sqrt{\frac{2^{n \log n + \sigma\log m}}{M}}\cdot O(n^8)$. Parallel: $\sqrt{\frac{2^{n \log n + \sigma\log m}}{M}}\cdot O(n^6)$. 
\end{description}
\end{theorem}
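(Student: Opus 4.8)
The plan is to obtain \cref{th:main} as a direct corollary of \cref{th:grover} and \cref{lem:oracles}, the only real work being to read off, for each of the seven problems, the correct value of $\ell$ together with the corresponding Phase Inversion bounds $c(\ell)$ and $d(\ell)$. First I would fix, problem by problem, the number $\ell$ of qubits encoding a candidate solution, using the composition of $\ket{\Gamma}$ recalled just above the statement: for TLCM, TLKP, TLQP, and OPCM we have $\ket{\Gamma}=\ket{\Phi}$ and hence $\ell = n\log n$; for BT we have $\ket{\Gamma}=\ket{\Phi}\ket{\Psi}$ and hence $\ell = n\log n + m\log\tau$; and for TLS and BS we have $\ket{\Gamma}=\ket{\Phi}\ket{\Theta}$ and hence $\ell = n\log n + \sigma\log m$. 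These are exactly the exponents appearing under the square roots in the statement.

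Next I would verify that the hypotheses of \cref{th:grover} are met, i.e., that the Phase Inversion circuits supplied by \cref{lem:oracles} have $c(\ell)$ and $d(\ell)$ both $\Omega(\log \ell)$. This is immediate: every circuit-complexity and depth bound listed in \cref{lem:oracles} is at least linear in $m$ (and $m \ge n-1$, since $G$ is assumed connected), whereas $\log\ell$ is only polylogarithmic in $n$, $m$, $\tau$, and $\sigma$; hence the precondition holds for all seven problems. With the hypothesis discharged, I would invoke \cref{th:grover}, for each problem, with the Phase Inversion circuit whose existence is guaranteed by \cref{lem:oracles}.

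It then remains to substitute. \cref{th:grover} yields a circuit that outputs a solution, if any, with circuit complexity $\frac{\pi}{4}\sqrt{2^\ell/M}\cdot O(c(\ell))$ and depth $\frac{\pi}{4}\sqrt{2^\ell/M}\cdot O(d(\ell))$. Identifying the \emph{sequential} time bound with the circuit complexity (gates executed one after another) and the \emph{parallel} time bound with the depth (each layer executed simultaneously), absorbing the constant factor $\frac{\pi}{4}$ into the asymptotic term, and plugging in each row of \cref{lem:oracles} together with the $\ell$ determined above, reproduces the entire list of bounds in the statement. For instance, for TLCM one gets $\sqrt{2^{n\log n}/M}\cdot O(m^2)$ sequentially and $\sqrt{2^{n\log n}/M}\cdot O(n^2)$ in parallel, and analogously for the remaining six problems.

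The step I expect to require genuine care is the bookkeeping of $\ell$: one must correctly track which of the blocks $\ket{\Phi}$, $\ket{\Psi}$, $\ket{\Theta}$ is present for each problem, since an error there changes the exponent of the dominant $\sqrt{2^\ell/M}$ factor. There is no deeper obstacle, as the substantive content — the construction of efficient Phase Inversion circuits — has already been carried out in \cref{lem:oracles}, so the remaining precondition check and the absorption of the $\frac{\pi}{4}$ constant are routine. I would also remark in passing that the ``if any'' caveat and the dependence on the solution count $M$ are inherited verbatim from \cref{th:grover}, and that on positive instances $M\ge 1$, so the stated bounds are always well defined.
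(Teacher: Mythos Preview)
Your proposal is correct and matches the paper's approach exactly: the paper simply states that \cref{th:grover,lem:oracles} imply \cref{th:main}, without further argument. Your write-up is in fact more detailed than the paper's, explicitly verifying the $\Omega(\log\ell)$ precondition of \cref{th:grover} and tracking the value of $\ell$ for each problem, but the underlying reasoning is identical.
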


\section{Input Transducer Circuits}\label{se:input-initializers}

\begin{figure}[b!]
	\centering
	\includegraphics[page = 49, width = .6\textwidth]{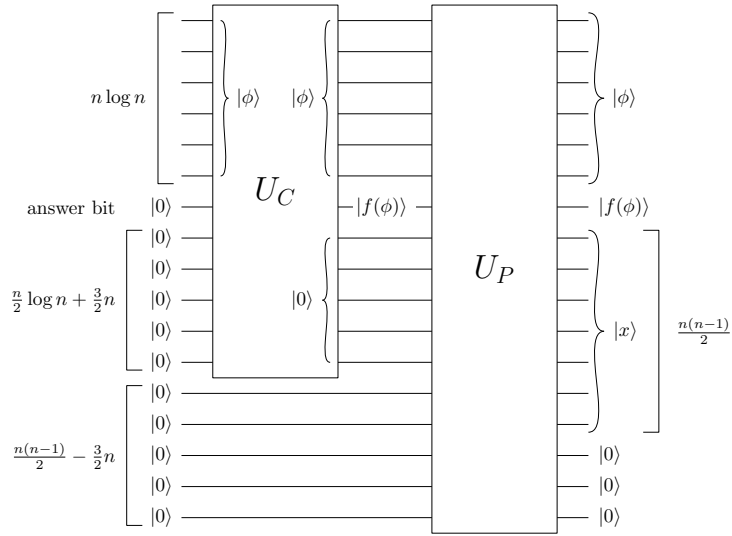}
	\caption{The {\sc Order Transducer} gate $U_{OT}$.}
	\label{fig:Order-Initializer}
\end{figure}

We use two different versions of circuit $U_I$, depending on the considered problem. Namely, for all problems but for the TLS and the BS problems, circuit $U_I$ consists of just one circuit $U_{OT}$, called \emph{\sc Order-Transducer} (refer to \cref{fig:Order-Initializer}). For problems TLS and BS, circuit $U_I$ executes, in parallel to $U_{OT}$, another circuit $U_{ST}$, called \emph{\sc Skewness-Transducer} (refer to \cref{fig:skewness-initializer}).

\subsection{Order Transducer}

Let $\phi$ be a binary string of length $n \log n$, which we interpret as a sequence of $n$ binary integers, each consisting of $\log n$ bits. 
Recall that, we denote by $\phi[i]$ the $i$-th binary integer contained in $\phi$. 
Also, let $\ket{\phi}$ be the basis state corresponding to $\phi$. This subsection is devoted to proving the following lemma.

\begin{lemma}\label{le:gate-order-initializer}
There exists a gate $U_{OT}$ that, when provided with the input superposition $\ket{\phi} \ket{0_{\alpha}}$, where $\alpha = \frac{n}{2}(n-1+\log n) +1$, produces the output superposition 
$$\ket{\phi} \ket{f(\phi)} \ket{x_{0,1}}\dots\ket{x_{0,n-1}}\dots\ket{x_{i,j}}\dots\ket{x_{n-2,n-1}} \ket{0_{\frac{n}{2}\log n}},$$ such that $\ket{x_{i,j}}=1$ if and only if $\phi[i] < \phi[j]$ and $f(\phi)=1$ if and only if $\phi$ represents an $n$-permutation. $U_{OT}$ has 
$O(n^2 \log n)$ circuit complexity, and $O(n\log n)$ depth and width.
\end{lemma}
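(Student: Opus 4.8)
The plan is to let $U_{OT}$ proceed in two conceptual stages: first it writes, into dedicated output qubits, the $\binom{n}{2}$ order bits $x_{i,j}=[\phi[i]<\phi[j]]$ for $0\le i<j\le n-1$, and then it writes into a single qubit the validity bit $f(\phi)$. I would allocate the $\alpha=\frac{n}{2}(n-1+\log n)+1$ ancillae as $\binom{n}{2}=\frac{n(n-1)}{2}$ qubits holding the $x_{i,j}$, one qubit holding $f(\phi)$, and a block of $\frac{n}{2}\log n$ \emph{scratch} qubits restored to $\ket{0_{\frac n2 \log n}}$. Each order bit is produced by one invocation of the gate $U_{<}$ of \cref{le:gate-less} (with $t=n$), which writes $[\phi[i]<\phi[j]]$ using $\log n$ internal scratch qubits that it returns to $\ket{0}$, and has $O(\log n)$ circuit complexity, depth, and width.

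First I would establish the depth and width by scheduling the $\binom{n}{2}$ comparisons in parallel. Applying \cref{le:partion-k} with $k=2$ and $X$ the set of $n$ vertex indices (equivalently, invoking Vizing's theorem / round-robin edge colouring of $K_n$), I partition the pairs into $O(n)$ cross-independent sets, i.e. matchings, each containing at most $\lfloor n/2\rfloor$ mutually disjoint pairs. Within one matching the $\le n/2$ comparisons act on pairwise-disjoint $\phi$-registers, so each can be given a private block of $\log n$ scratch qubits, for a total of $\frac n2\log n$ scratch used simultaneously; hence a matching executes in depth $O(\log n)$ and width $O(n\log n)$. Since there are $O(n)$ matchings and the scratch is reused across them, the whole first stage has depth $O(n\log n)$, width $O(n\log n)$, and circuit complexity $\binom n2\cdot O(\log n)=O(n^2\log n)$, as required.

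For the validity bit I would use the characterization that $\phi$ encodes an $n$-permutation if and only if its $n$ entries are pairwise distinct and each lies in $[n]$ (by pigeonhole, $n$ distinct values in $[n]$ form exactly a permutation). For each pair $\{i,j\}$ I test a collision with the gate $U_{=}$ of \cref{le:gate-equal}, which sets a bit $e_{i,j}=[\phi[i]=\phi[j]]$; then $\phi$ is distinct exactly when $\sum_{i<j}e_{i,j}=0$. I would run these $\binom n2$ tests on the same $O(n)$-matching schedule used above and---since the scratch budget forbids storing all $\binom n2$ collision bits at once---fold them incrementally into a short $O(\log n)$-qubit counter, summing each matching's collisions with the adder/counter gate of \cref{le:gate-counter} and uncomputing the $e_{i,j}$ before the next matching reuses the scratch. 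Finally $f(\phi)$ is obtained by testing the counter for $0$ and ANDing in the $n$ range comparisons $[\phi[i]<n]$ against the constant $n$; reversing the accumulation and the auxiliary comparisons restores every scratch qubit to $\ket 0$. This stage keeps the profile $O(n^2\log n)$ complexity, $O(n\log n)$ depth, $O(n\log n)$ width.

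The main obstacle is the bookkeeping of the second stage under the tight scratch budget: because only $\frac n2\log n$ scratch qubits are available while distinctness inherently depends on $\Theta(n^2)$ pairwise facts, the verdicts cannot be merged by a single tree over all pairs but must be combined round-by-round through a fully reversible accumulator whose inputs are uncomputed before the next matching reuses the scratch. Verifying that this compute--combine--uncompute schedule indeed restores the scratch to $\ket{0_{\frac n2\log n}}$ while respecting the stated depth, width, and complexity bounds (and that the counter and range checks fit within the lower-order ancilla slack) is the delicate part of the argument.
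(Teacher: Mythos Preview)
Your first stage coincides with the paper's \textsc{Precedence Constructor} $U_P$: the $U_<$ comparisons are scheduled on $O(n)$ matchings of $K_n$ exactly as you describe.

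For $f(\phi)$ the paper takes a simpler route that dissolves the obstacle you flag. It runs the collision test \emph{before} the precedence computation, so the $\binom{n}{2}$ qubits destined for the $x_{i,j}$ are still $\ket{0}$ and available as extra scratch. With that room the paper simply stores one bit per round: for each matching $S_i$ it computes the $\le n/2$ collision bits with $U_=$, AND-reduces them via a single multi-controlled Toffoli into a flag $f(\phi_i)$, uncomputes the collision bits, and proceeds; a final multi-Toffoli over the $O(n)$ round-flags yields $f(\phi)$, and a symmetric pass clears the flags. No counter, no cross-round accumulator---just $O(n)$ persistent bits, which fit comfortably in the $\binom{n}{2}$ block.

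Your ordering ties up those $\binom{n}{2}$ qubits and leaves only $\frac{n}{2}\log n$ scratch for the distinctness test. That budget is already exhausted by the internal scratch of the $n/2$ parallel $U_=$ calls in a single round; the $n/2$ collision outputs, the $\Theta(n)$ ancillae required by $U_{1s}$, and the $O(\log n)$-qubit counter have nowhere to live. As written, the accumulator scheme overshoots the exact $\alpha$ in the statement by $\Theta(n)$. Swapping the two stages removes the crunch and makes the accumulator unnecessary.

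On one point you are more careful than the paper: the range checks $[\phi[i]<n]$. The paper's collision detector tests only pairwise distinctness, which certifies an $n$-permutation only when $n$ is a power of~$2$.
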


\paragraph{\sc Proof of \cref{le:gate-order-initializer}.} The input of $U_{OT}$ is composed of $n \log n + \alpha$ qubits, the first $\ell=n\log n$ form a superposition $\ket{\phi}$ and the rest are set to $\ket{0}$.
First, $\ket{\phi}$ and $\frac{n}{2}(\log n +3) +1$ qubits set to $\ket{0}$ enter a gate $U_C$, 
called \emph{\sc Collision Detector}. The purpose of $U_C$ is to compute the superposition $\ket{\phi}\ket{f(\phi)}\ket{0_{ \frac{n}{2}(\log n +3)}}$, where $f(\phi)=1$ if and only if $\phi[i] \neq \phi[j]$ for each $i \neq j$. It has $O(n^2\log n)$ circuit complexity, and it has $O(n\log n)$ depth and width.
Second, $\ket{\phi}$ and $\frac{n}{2}\log n + \frac{n(n-1)}{2}$ qubits set to $\ket{0}$ enter a gate $U_P$ called \emph{\sc Precedence Constructor}. The purpose of $U_P$ is to compute a superposition $\ket{\phi}\ket{x}\ket{0_{\frac{n}{2} \log n}}$, where $\ket{x}=\ket{x_{0,1}}\dots\ket{x_{0,n-1}}\dots\ket{x_{i,j}}\dots\ket{x_{n-2,n-1}}$ and $x_{i,j}=1$ if and only if $\phi[i] < \phi[j]$. 
It has circuit complexity $O(n^2\log n)$, and it has $O(n\log n)$ depth and width.
Gate $U_{OT}$ has $O(n^2\log n)$ circuit complexity, and it has $O(n\log n)$ depth and width.

\paragraph{\sc Collision-detector.} Gate $U_C$ works as follows. Refer to \cref{fig:UC}.
\begin{figure}[tb!]
    \centering
    \includegraphics[page = 9, width = .8\textwidth]{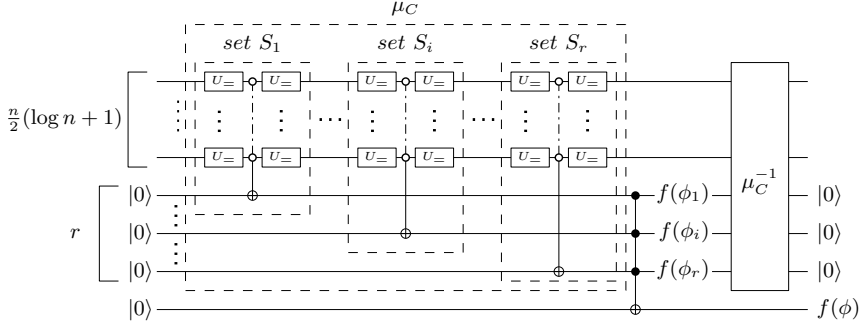}
    \caption{The gate $U_{C}$.}
    \label{fig:UC}
\end{figure}

It executes two gates $\mu_{C}$ and $\mu_{C}^{-1}$ and in between such gates it executes a Toffoli gate.
The input of $\mu_{C}$ is the superposition $\ket{\phi[0]} \dots \ket{\phi[n-1]}\ket{0_{\frac{n}{2}(\log n +1)}}\ket{0_n}$. The output of $\mu_{C}$ is the superposition $\ket{\phi[0]} \dots \ket{\phi[n-1]}\ket{0_{\frac{n}{2}(\log n +1)}}\ket{f(\phi_0)}\dots\ket{f(\phi_i)}\dots \ket{f(\phi_{n-1})}$.
In gate $\mu_{C}$, we compare the unordered pairs of numbers $\phi[i]$ and $\phi[j]$ in parallel as follows. Consider that if two numbers are compared,  none of the two can be compared with another number at the same time. Hence, we partition the pairs using \cref{le:partion-k} (with $k=2$ and $|X| = n$) into $r = O(n)$ cross-independent sets $S_1, \dots, S_{r}$ each containing at most $\frac{n}{2}$ pairs. %

For each pair $(i,j)$ of $S_1$ (refer to \cref{fig:UC}) we use a $U_{=}$ gate to compare $\phi[i]$ and $\phi[j]$. Recall that the gate $U_{=}$ outputs a superposition $\ket{\phi[i]}\ket{\phi[j]}\ket{0}\dots \ket{0}\ket{\phi[i]==\phi[j]}$. All the last output qubits of the $U_{=}$ gates for $S_1$ enter a Toffoli gate with $|S_i| +1$ inputs and outputs, which computes the qubit $\ket{f(\phi_0)}$ such that $\phi_0=1$ if and only if all of them first $|S_i|$ input qubits are equal to $\ket{0}$, i.e., all pairs correspond to different numbers.

After dealing with $S_1$, we deal with $S_2$ with the same technique and keep on dealing with the $S_i$ sets until $S_{r}$ is reached.

All the last output qubits of the $U_{=}$ gates for $S_i$ enter a Toffoli gate that outputs a qubit $\ket{f(\phi_i)}$ such that $f(\phi_1) = 1$ if and only if all of them are equal to $\ket{0}$, i.e., if does not exist pair $(j,k)$ of $S_i$ where $\phi[j]=\phi[k]$. In order to allow the reuse of the ancilla qubits, except for the qubit $\ket{f(\phi_i)}$, gate $U_C$ executes in parallel a gate $U^{-1}_{=}$ for each pair in $S_i$.

All the qubits $\ket{f(\phi_i)}$ and the qubit $\ket{\phi}$ enter a Toffoli gate with $r+1$ inputs and outputs. The first $r$ qubits are control qubits, the target qubit is $\ket{f(\phi)}$, which is initialized to $\ket{0}$. The target qubit is set to $\ket{\bigwedge^r_{i=1} \phi_i}$. 
In order to allow the reuse of the ancilla qubits, we apply to the entire circuit preceding the Toffoli gate its inverse gate. Recall that, by \cref{le:gate-equal}, gate $U_{=}$ has $O(\log n)$ circuit complexity, depth, and width.
Therefore, gate $U_{C}$ has $O(n^2 \log n)$ circuit complexity, and it has $O(n\log n)$ depth and width.

\paragraph{\sc Precedence Constructor.}

Gate $U_P$ works as follows. Refer to \cref{fig:precedence}.
\begin{figure}[tb]
    \centering
    \includegraphics[page = 10, width = .6\textwidth]{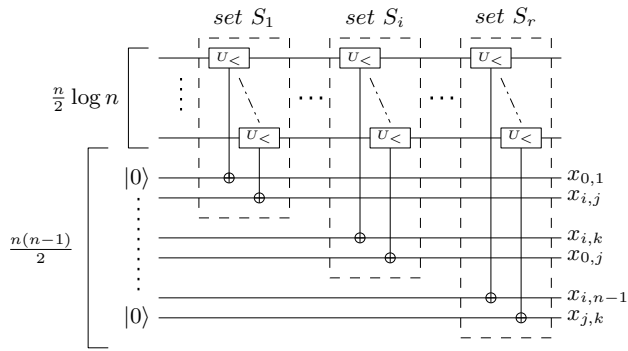}
    \caption{The gate $U_P$.}
    \label{fig:precedence}
\end{figure}

For each pair $i,j$ ($i,j \in [n]$ and $i \neq j$) $U_P$ exploits $U_{<}$ that outputs a qubit $\ket{x_{i,j}}$ such that $x_{i,j}=\phi[i]<\phi[j]$. %
Using several gates $U_{<}$, we compare the ordered pairs of numbers $\phi[i]$ and $\phi[j]$ in parallel as follows. As for gate $U_C$, if two numbers are compared,  none of the two can be compared with another number at the same time. Hence, we partition the pairs  
using \cref{le:partion-k} (with $k=2$ and $|X| = n$) into $r \in O(n)$ cross-independent sets $S_1, \dots, S_{r}$ each containing at most $\frac{n}{2}$ pairs.

For each pair $(i,j)$ of $S_1$ (refer to \cref{fig:precedence}) we use a $U_<$ gate to compare $\phi[i]$ and $\phi[j]$. Recall that the gate $U_<$ outputs a superposition $\ket{\phi[i]}\ket{\phi[j]}\ket{0}\dots \ket{0}\ket{\phi[i]<\phi[j]}$. In order to allow the reuse of the ancilla qubit different from $\ket{x_{i,j}}$ for each pair $(i,j) \in S_1$, we use a symmetric circuit that transforms the ancilla output qubits into a sequence of $\ket{0}$ qubits, that will be re-used in the following step.

After dealing with $S_1$, we deal with $S_2$ with the same technique and keep on dealing with the $S_i$ sets until $S_{r}$ is reached.

Recall that, by \cref{le:gate-less}, gate $U_<$ has $O(\log n)$ circuit complexity, depth, and width. Therefore, gate $U_P$ has $O(n^2 \log n)$ circuit complexity, and it has $O(n \log n)$ depth and width. 

\subsection{Skewness Transducer}

Let $\theta$ be a binary string of length $\sigma\log m$, which we interpret as a sequence of $\sigma$ binary integers, each consisting of $\log m$ bits. Recall that, we denote by $\theta[i]$ the $i$-th binary integer contained in~$\theta$. Also, let $\ket{\theta}$ be the basis state corresponding to $\theta$.
This subsection is devoted to proving the following lemma.

\begin{lemma}\label{le:gate-skewness-initializer} 
There exists a gate $U_{ST}$ that, when provided with the input superposition $\ket{\theta}\ket{0}\ket{0_{m}}$, 
produces the output superposition $\ket{\theta}\ket{f(\theta)}\ket{e_0}\dots\ket{e_i}\dots\ket{e_{m-1}}$, such that 
$f(\theta) = 1$ if and only if $\theta$ represents a subset of size $\sigma$ of the set $[m]$, and when $f(\theta)=1$ it holds that 
$e_i = 1$ if and only if there exists $j \in [\sigma]$ such that $\theta[j]$ coincides with (the binary representation of) the integer $i$.
Gate $U_{ST}$ has $O(\sigma m\log m)$ circuit complexity and depth, and $O(\sigma\log m)$ width.
\end{lemma}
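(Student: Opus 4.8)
The plan is to construct $U_{ST}$ as a composition of two subtasks, mirroring the structure of the $U_{OT}$ gate from \cref{le:gate-order-initializer}. The first subtask verifies that the $\sigma$ integers stored in $\theta$ are pairwise distinct (so that $\theta$ genuinely encodes a $\sigma$-subset of $[m]$, yielding the flag $f(\theta)$), and the second subtask computes, for each $i \in [m]$, the indicator qubit $\ket{e_i}$ signalling whether $i$ appears among $\theta[0],\dots,\theta[\sigma-1]$. The distinctness check is essentially identical to the {\sc Collision Detector} gate $U_C$ used inside $U_{OT}$: we compare all unordered pairs $\theta[a],\theta[b]$ with copies of the gate $U_{=}$ from \cref{le:gate-equal}, aggregate the (negated) equality bits through a Toffoli tree, and uncompute the ancillae with the inverse sub-circuit. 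The only change is that we compare $\sigma$ integers of $\log m$ bits each rather than $n$ integers of $\log n$ bits.

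Second I would build the membership part. For each $i \in [m]$, the qubit $e_i$ must be set to $1$ exactly when some $\theta[j]$ equals the binary encoding of $i$. The natural way is to instantiate, for each pair $(i,j) \in [m]\times[\sigma]$, a comparison via $U_{=}$ against the constant $i$, and then OR together the $\sigma$ results for a fixed $i$ into $e_i$ (an OR over $\sigma$ bits is a De Morgan wrapping of a Toffoli, i.e. flip the inputs, apply a Toffoli with the output pre-flipped, and flip back, all with $O(\sigma)$ elementary gates). This produces all $m$ output qubits $e_0,\dots,e_{m-1}$. As with the other gates, every comparison ancilla is restored to $\ket 0$ by applying the inverse of the comparison sub-circuit, so that the only surviving non-input qubits are $f(\theta)$ and the $e_i$'s.

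For the resource bounds, the membership part dominates: it performs $\sigma \cdot m$ invocations of $U_{=}$, each of $O(\log m)$ circuit complexity by \cref{le:gate-equal}, giving $O(\sigma m \log m)$ total circuit complexity, which absorbs the $O(\sigma^2 \log m)$ cost of the distinctness check. For the \emph{depth}, I would note that the comparisons feeding a single $e_i$ can be laid out so that the $m$ membership computations overlap only through the shared copies of $\theta[j]$; since each $\theta[j]$ is read by $m$ comparators, these reads must be serialized (or fanned out with $O(m)$ extra CNOTs), so the depth remains $O(\sigma m \log m)$ as claimed. For the \emph{width}, the key observation is that at any time slice we are comparing each of the $\sigma$ stored integers against at most one constant, so only $O(\sigma)$ comparators of width $O(\log m)$ are simultaneously active, giving $O(\sigma \log m)$ width.

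The main obstacle I anticipate is the depth/width accounting rather than the correctness, which is straightforward from the gadget semantics. Specifically, the delicate point is justifying why the depth is $O(\sigma m \log m)$ and not lower despite the apparent parallelism over the $m$ output qubits: the shared read-access to the qubits of $\ket\theta$ forces a serialization (or a fan-out tree that would inflate the width), and one must argue the chosen schedule keeps width at $O(\sigma \log m)$ while paying the full $O(\sigma m \log m)$ in depth. I would make this precise by describing an explicit schedule in which, at step $i$, the single constant $i$ is compared in parallel against all $\sigma$ entries $\theta[0],\dots,\theta[\sigma-1]$ (width $O(\sigma\log m)$, depth $O(\log m)$ per step), folding the results into $e_i$, and then looping over the $m$ values of $i$ sequentially (depth $O(m\log m)$ overall, times the $O(\sigma)$ factor absorbed into the per-step parallel comparisons), so that the stated bounds follow directly.
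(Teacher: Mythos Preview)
Your proposal is correct and follows the same decomposition as the paper: reuse the {\sc Collision Detector} $U_C$ (with $\sigma$ integers of $\log m$ bits) for $f(\theta)$, then an {\sc Edge Constructor} $U_E$ that, for each $i\in[m]$, scans the $\sigma$ entries of $\theta$ to set $e_i$. The paper's $U_E$ is slightly leaner than yours: rather than invoking $U_{=}$ against a loaded constant and OR-ing the $\sigma$ results, it uses one multi-controlled Toffoli per pair $(i,j)$ whose control pattern \emph{is} the binary encoding of $i$, and lets all $\sigma$ of these Toffolis share the single target qubit $e_i$---so the aggregation is XOR, which coincides with OR precisely because under $f(\theta)=1$ at most one $\theta[j]$ equals $i$.
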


\begin{figure}[tb!]
    \centering
    \includegraphics[page = 50, scale = 0.8]{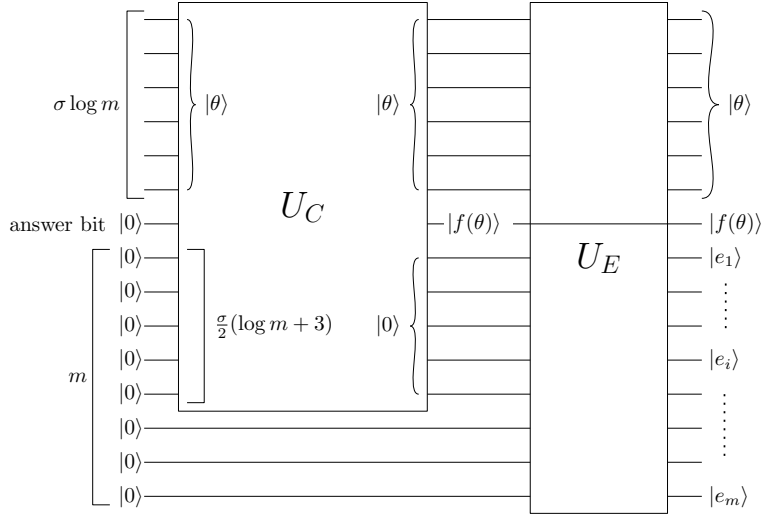}
    \caption{The {\sc Skewness Transducer} gate $U_{ST}$.}
    \label{fig:skewness-initializer}
\end{figure}

\paragraph{\sc Proof of \cref{le:gate-skewness-initializer}.} 
The input of $U_{ST}$ is composed of $\sigma \log m + 1 + m$ qubits, the first $\ell = \sigma\log m$ qubits form a superposition $\theta$ and the rest are set to $\ket{0}$. First, $\ket{\theta}$, $\ket{0}$, and $h=\frac{\sigma}{2}(\log m + 3)$ qubits set to $\ket{0}$ enter an instance of the {\sc Collision Detector} gate $U_C$ used in the proof of \cref{le:gate-order-initializer}, where the qubits of the superposition $\ket{\theta}\ket{0}\ket{0_h}$ play the role of the qubits of the superposition $\ket{\phi}\ket{0}\ket{0_{\frac{n}{2}(\log n +3)}}$. 
The purpose of this instance of $U_C$ is to compute the superposition $\ket{\theta}\ket{f(\theta)}\ket{0_{h}}$, where $f(\theta) = 1$ if and only if $\theta[a]\neq \theta[b]$ for each $0\leq a<b \leq \sigma-1$.
It has $O(\sigma^2 \log m)$ circuit complexity and it has $O(\sigma\log(m))$ depth and width.
Second, the superpositions $\ket{\theta}$ and $\ket{0_{m}}$ enter a gate $U_E$, called \emph{\sc Edge Constructor}; refer to \cref{fig:U_e}. The purpose of $U_E$ is to compute the superposition $\ket{\theta}\ket{e}$, where $\ket{e}=\ket{e_0}\dots\ket{e_i}\dots\ket{e_{m-1}}$ and, when $f(\theta)=1$, it holds that $e_i=1$ if and only if there exists a $j\in [\sigma]$ such that $\theta[j]$ coincides with (the binary representation of) the integer $i$. 
It has $O(\sigma m\log m)$ circuit complexity, $O(\sigma m)$ depth, and $O(\log m)$ width.
Recall that, $\sigma <m$. Thus, we have that
gate $U_{ST}$ has  $O(\sigma m\log m)$ circuit complexity, $O(\sigma m)$ depth, and $O(\sigma \log m)$ width.

\paragraph{\sc Edge Constructor.}
The gate $U_E$ exploits instances of the auxiliary gate $U_{e_i}$, defined for each edge $e_i \in E$ as follows. Refer to \cref{fig:U_e_i}.
When provided with the input superposition $\ket{\theta}\ket{0}$, the gate $U_{e_i}$ produces the output superposition $\ket{\theta}\ket{e_i}$, where -- provided that $\theta$ represents a subset of size $\sigma$ of the set $[m]$, i.e., $f(\theta)=1$ -- it holds that
$e_i=1$ if and only if there exists $j \in [\sigma]$ such that $\theta[j]$ coincides with the integer $i$. 
The gate $U_{e_i}$ contains $\sigma$ Toffoli gates, each with $\log(m) + 1$ inputs and outputs. All such Toffoli gates share the same target qubit.  
The control qubits of the first Toffoli gate $T_0$ are $\ket{\theta[0][0]}\dots\ket{\theta[0][\log(m)-1]}$ and its target qubit is initialized to $\ket{0}$. It turns the target qubit into $\ket{1}$ if and only if $\theta[0][0]\dots\theta[0][\log(m)-1]$ coincides with the integer $i$. For $j=1,\dots,\sigma-1$, gate $U_{e_i}$ contains a Toffoli gate $T_j$ that takes in input the superposition
$\ket{\theta[j][0]}\dots\ket{\theta[j][\log(m)-1]}$ and the target qubit that has been output by $T_{j-1}$. It flips the value of the target qubit if and only if $\theta[j][0]\dots\theta[j][\log(m)-1]$ coincides with  (the binary representation of) the integer $i$. Thus, if at most one of the $j$ integers composing~$\theta$ coincides with the integer $i$, then  $e_i = 1$ if and only if there exists $j \in [\sigma]$ such that $\theta[j]$ coincides with (the binary representation of) the integer $i$.
Gate $U_{e_i}$ has $O(\sigma\log m)$ circuit complexity, $O(\sigma)$ depth, and $O(\log m)$ width.

The gate $U_E$ computes $\ket{e}=\ket{e_0}\dots\ket{e_i}\dots\ket{e_{m-1}}$ as follows. For $i \in [m]$, it computes $\ket{e_i}$ by applying the gate $U_{e_i}$ to the superposition $\ket{\theta}\ket{0}$, where the last qubit is the $i$-th qubit of the $m$ qubits compositing the superposition $\ket{0_m}$, which is provided in input to the gate $U_E$. 
Gate $U_E$ has circuit complexity $O(\sigma m\log m)$, depth $O(\sigma m)$, and width $O(\log m)$.

\begin{figure}[tb!]
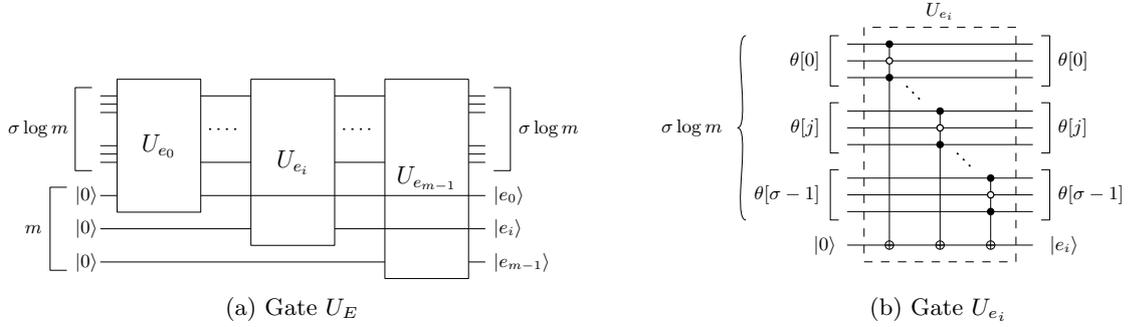

        \begin{subfigure}{0.5\textwidth}        
        \centering
        \includegraphics[page = 39, width = 0.95\textwidth]{figures/Gate-Order-Initializer.pdf}
        \caption{Gate $U_E$}
        \label{fig:U_e}
        \end{subfigure}
        ~~~~~
        \begin{subfigure}{0.5\textwidth}        
        \centering
        \includegraphics[page = 59, width = 0.95\textwidth]{figures/Gate-Order-Initializer.pdf}
        \caption{Gate $U_{e_i}$}
        \label{fig:U_e_i}
        \end{subfigure}
        \caption{Illustration for the construction of gate $U_E$.}
        \label{fig:U_ei_U_e}
\end{figure}

\section{Solution Detector Circuits}\label{se:solution-detectors}

In this section, we present the details of the {\sc Solution Detector} circuits $U_S$ for the problems we consider. %

Recall that, the {\sc Order Trasducer} circuit produces the output superposition 
$$\ket{\phi} \ket{f(\phi)} \ket{x_{0,1}}\dots\ket{x_{0,n-1}}\dots\ket{x_{i,j}}\dots\ket{x_{n-2,n-1}} \ket{0_{\frac{n}{2}\log n}},$$ such that $x_{i,j}=1$ if and only if $\phi[i] < \phi[j]$, and $f(\phi)=1$ if and only if $\phi$ represents an $n$-permutation. In the following, for simplicity, we denote the superposition $\ket{x_{0,1}}\dots\ket{x_{0,n-1}}\dots\ket{x_{i,j}}\dots\ket{x_{n-2,n-1}}$ as $\ket{x}$. 
We interpret the values $x_{0,1},\dots, x_{0,n-1},\dots, x_{i,j}, \dots, x_{n-2,n-1}$ as the entries above the main diagonal of a square binary matrix $X$, where $X[i][j] = x_{i,j}$, and whose entries along and below the main diagonal are undefined.
We use such entries to represent the precedence between vertices in a graph drawing. Let $x$ be the string obtained by concatenating $x_{0,1},\dots, x_{0,n-1},\dots, x_{i,j}, \dots, x_{n-2,n-1}$. 
We will use $x$ both for book layouts and for $2$-level drawings as follows.

Consider book layouts of a graph $G = (V,E)$.
We denote by $\Pi(x)$ the vertex order along the spine of a book layout of $G$ defined as follows. We have that, if $x_{i,j}=1$, then vertex $v_i$ precedes vertex $v_j$ in $\Pi(x)$. Conversely, if $x_{i,j}=0$,  then vertex $v_j$ precedes vertex $v_i$ in $\Pi(x)$.
Consider now a $2$-level drawing of a graph $G = (U,V,E)$. We assume that the vertices in $U$ are labeled as $u_0,\dots,u_{|U|-1}$, and the vertices of $V$ are labeled as $v_{|U|},\dots,v_{|U|+|V|-1}$.
We denote by $D(x)$ the $2$-level drawing of $G$ defined as follows. Let $w_i$ and $w_j$ be two vertices of $G$. 
Suppose that $w_i, w_j \in U$. If $x_{i,j}=1$, then  $w_i \prec w_j$ along the horizontal line $\ell_u$, otherwise, $w_j \prec w_i$ along $\ell_u$.
Suppose that $w_i, w_j \in V$. If $x_{i,j}=1$, then  $w_i \prec w_j$ along the horizontal line $\ell_v$, otherwise, $w_j \prec w_i$ along $\ell_v$.
Suppose now that $w_i \in U$ and $w_j \in V$. Then, we assume that  $x_{i,j}=0$, which we interpret as the absence of a precedence relation between such vertices.

We remark that, if $\phi$ is not an $n$-permutation, then $\Pi(x)$ and $D(x)$ do not correspond to actual spine orders and $2$-level drawings, respectively. In this case, we say that they are \emph{degenerate}.

We will exploit $\ket{x}$ to compute a superposition $\ket{\chi_{0,1}}\dots\ket{\chi_{0,m-1}}\dots\ket{\chi_{i,j}}\dots\ket{\chi_{m-2,m-1}}$, which we will denote for simplicity by $\ket{\chi}$.
We interpret the values $\chi_{0,1},\dots, \chi_{0,m-1},\dots, \chi_{i,j}, \dots, \chi_{m-2,m-1}$ as the entries of a square binary matrix $C$, where $C[i][j] = \chi_{i,j}$ and whose entries along and below the main diagonal are undefined. We use such entries to represent the existence of crossings between pairs of edges in a graph drawing. Namely, $\chi_{i,j}=1$ if $e_i$ and $e_j$ belong to $E$ and cross, and $\chi_{i,j} = 0$ if either $e_i$ and $e_j$ belong to $E$ and do not cross or at least one of $e_i$ and $e_j$ does not belong to $E$. Let $\chi$ be the string obtained by concatenating $\chi_{0,1},\dots, \chi_{0,m-1},\dots, \chi_{i,j}, \dots, \chi_{m-2,m-1}$. 
The values of $\chi$ are completely determined by $x$ and by whether the considered layout is a book layout or a $2$-level drawing. 
For every $0 \leq a < b \leq m-1$, consider the value $\chi_{a,b}$, where $e_a = (v_i,v_k)$ and $e_b = (v_j,v_\ell)$.
In a book layout of $G$ in which the vertex order is $\Pi(x)$, we have that $\chi_{a,b} = 1$ if $e_a$ and $e_b$ belong to $E$ and cross (refer to the conditions in \cref{fig:book-crossing}), and $\chi_{a,b} = 0$ if either $e_a$ and $e_b$ belong to $E$ and do not cross or at least one of $e_a$ and $e_b$ does not belong to $E$.
If $D(x)$ corresponds to a $2$-level drawing of $G$, then we have that $\chi_{a,b} = 1$ if $e_a$ and $e_b$ belong to $E$ and cross (i.e., $x_{i,j} \neq x_{k,\ell}$), and $\chi_{a,b} = 0$ if either $e_a$ and $e_b$ belong to $E$ and do not cross (i.e., $x_{i,j} = x_{k,\ell}$) or at least one of $e_a$ and $e_b$ does not belong to $E$.

Recall that, the {\sc Skewness Trasducer} circuit outputs the superposition 
$\ket{\theta} \ket{f(\theta)} \ket{e_0}\dots\ket{e_i}\dots\ket{e_{m-1}}$ such that, when $f(\theta)=1$, it holds that $e_i=1$ if and only if there exists a $j \in [\sigma]$ such that $\theta[j]$ coincides with (the binary representation of) the integer $i$. In the following, for simplicity, we denote the superposition $\ket{e_0}\dots\ket{e_i}\dots\ket{e_{m-1}}$ as $\ket{e}$. 
Observe that, during the computation for problems TLS and BS, we manage the superposition $\ket{\Theta} = \sum_{\theta \in {\mathbb B}^{\sigma \log m}} c_\theta \ket{\theta}$, which includes all possible states corresponding to a selection of $\sigma$ edges of $G$.
Specifically, consider any basis state $\theta$ that appears in $\ket{\Theta}$, which represents a (multi)subset $N(\theta)$ of size $\sigma$ of the set $[m]$. Recall that, the integers contained in $\theta$ are the labels of the edges of $G$. 
We denote by $K(\theta)$ the subset of the edges of $G$ whose indices appear in $N(\theta)$.
Observe that, if $N(\theta)$ does not contain repeated entries, then $K(\theta)$ is a subset of $\sigma$ edges of $G$ (with no repeated edges); this occurs exactly when $f(\theta) = 1$. If $N(\theta)$ contains repeated entries, then we say that it is \emph{degenerate}. %

\subsection{Problem TLCM} We call TLCM the {\sc Solution Detector} circuit for problem TLCM.
Recall that, for the TLCM problem, we denote by $\rho$ the maximum number of crossings allowed in the sought $2$-level drawing of $G$.

\begin{lemma}\label{le:gate-TLCM}
There exists a gate TLCM that, when provided with the input superposition 
$\ket{f(\phi)} \ket{x} \ket{0_{h}} \ket{-}$, where $h=5\frac{m(m-1)}{2} - \log m -\log(m-1)-2$, produces the output superposition 
$(-1)^{g(x)f(\phi)}\ket{f(\phi)}\ket{x} \ket{0_{h}}\ket{-}$, where $g(x) = 1$ if $D(x)$ is not degenerate and the $2$-level drawing $D(x)$ of $G$ has at most $\rho$ crossings.  
TLCM has $O(m^2)$ circuit complexity, $O(n^2)$ depth, and $O(m^2)$ width. 
\end{lemma}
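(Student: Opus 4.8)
The plan is to build the TLCM {\sc Solution Detector} as a sandwich $U_\chi \cdot U_{\leq \rho} \cdot U_\chi^{-1}$, following the same ``compute--act--uncompute'' pattern used throughout the paper (e.g.\ in $U_{=}$, $U_{<}$, and $U_C$). First I would use $\ket{x}$ to compute the crossing matrix $\ket{\chi}$: for every pair of edges $e_a=(v_i,v_k)$ and $e_b=(v_j,v_\ell)$, the drawing $D(x)$ has a crossing exactly when $x_{i,j}\neq x_{k,\ell}$, so $\chi_{a,b}$ is obtained by a single parity (C-NOT/XOR) computation on the two relevant precedence bits of $\ket{x}$. There are $\binom{m}{2}\in O(m^2)$ such pairs, each needing $O(1)$ gates, giving a subcircuit $U_\chi$ of $O(m^2)$ circuit complexity. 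The depth here is the main subtlety: since each precedence bit $x_{i,j}$ is read by many edge-pairs, the reads cannot all be parallelized trivially, but by grouping edge-pairs that touch disjoint precedence bits (an application of the cross-independent-set partition of \cref{le:partion-k}, now over the ground set of vertex-pairs rather than vertices) one can schedule the parity computations in $O(n^2)$ depth, matching the claimed bound. The width is bounded by the $O(m^2)$ parallel XOR computations.

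Next I would count the crossings and compare against $\rho$. The middle gate $U_{\leq\rho}$ applies the $1$-counting gate $U_{1s}$ of \cref{le:gate-counter} to the $\binom{m}{2}$ qubits of $\ket{\chi}$ (padding to a power of two as that lemma permits), producing the binary encoding $\ket{s}$ of the total number of crossings in $\log\!\binom{m}{2}+1 = O(\log m)$ qubits; this costs $O(m^2)$ circuit complexity, $O(\log^2 m^2)=O(\log^2 m)$ depth, and $O(m^2)$ width. I would then compare $\ket{s}$ with the constant $\rho$ using the $U_{<}$/$U_{=}$ machinery of \cref{le:gate-less,le:gate-equal} (an $O(\log m)$-cost comparator) to obtain a single qubit flagging $s\leq\rho$. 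Conditioned on this flag \emph{and} on $\ket{f(\phi)}$ (so that degenerate, non-permutation inputs are excluded), a controlled operation on the $\ket{-}$ qubit realizes the phase kickback $(-1)^{g(x)f(\phi)}$; I would use the standard trick that a C-NOT with the $\ket{-}$ target imparts exactly the sign $(-1)^{(\,\cdot\,)}$.

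Finally I would run $U_\chi^{-1}$ and the inverse of $U_{\leq\rho}$ to restore the $h$ ancillae to $\ket{0}$, leaving the advertised output $(-1)^{g(x)f(\phi)}\ket{f(\phi)}\ket{x}\ket{0_h}\ket{-}$. The ancilla bookkeeping gives the stated $h$: the $\binom{m}{2}=\tfrac{m(m-1)}{2}$ crossing qubits plus the $O(m^2)$ workspace qubits of the counter (whose $4t-2\log t-4$ count from \cref{le:gate-counter} with $t=\binom{m}{2}$ produces the $5\frac{m(m-1)}{2}-\log m-\log(m-1)-2$ figure after absorbing the comparator ancillae). Combining the three stages, circuit complexity is dominated by $U_\chi$ and $U_{1s}$ at $O(m^2)$, width by the parallel crossing computations at $O(m^2)$, and depth by the $U_\chi$ scheduling at $O(n^2)$ (which dominates the $O(\log^2 m)$ counter depth since $m\geq n-1$ and the comparator adds only $O(\log m)$).

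\textbf{The main obstacle} I expect is establishing the $O(n^2)$ depth rather than a naive $O(m^2)$: one must argue that the $O(m^2)$ XOR computations defining $\ket{\chi}$ can be parallelized so that at each time step no precedence bit $x_{i,j}$ is touched by two concurrent gates, and that $O(n^2)$ rounds suffice. This is exactly where a partition of the edge-pairs by the vertex-pairs they read—analogous to the cross-independent-set argument—must be invoked carefully, since the crossing condition for each edge-pair depends on \emph{two} entries of $\ket{x}$ simultaneously; verifying that the conflict structure admits an $O(n^2)$-coloring is the crux of the depth bound.
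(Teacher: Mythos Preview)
Your proposal is correct and essentially identical to the paper's proof. The paper decomposes the TLCM gate into $U_\chi$ (the {\sc tl-Cross Finder}, computing each $\chi_{a,b}$ by a constant-size XOR gate $U_{cr}$ on $x_{i,j},x_{k,\ell}$), $U_{cc}$ (the {\sc Cross Counter}, an instance of $U_{1s}$ on the $\binom{m}{2}$ crossing qubits), $U_{c<}$ (an instance of $U_{<}$ comparing the sum with $\rho$), and $U_{fc}$ (a Toffoli controlled on $\ket{f(\phi)}$ and $\ket{g(x)}$ targeting $\ket{-}$), followed by the inverse circuits; the $O(n^2)$ depth of $U_\chi$ is obtained exactly as you anticipate, by invoking \cref{le:partion-k} with $k=2$ and $|X|=\tfrac{n(n-1)}{2}$ (the ground set of precedence variables $x_{i,j}$) to schedule the $U_{cr}$ gates into $O(n^2)$ cross-independent rounds.
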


\paragraph{\bf Proof of \cref{le:gate-TLCM}.} 
\begin{figure}[tb!]
    \centering
    \includegraphics[page = 29, width = .9\textwidth]{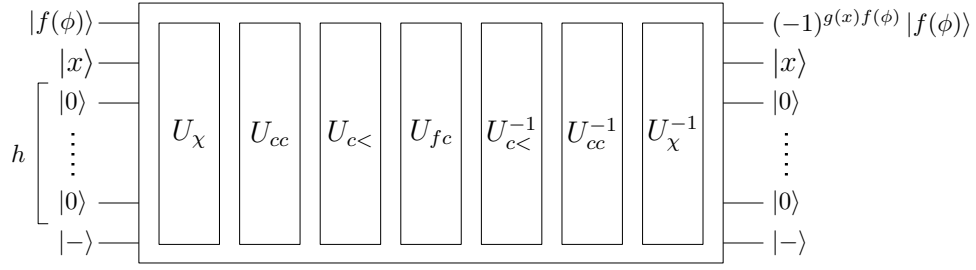}
    \caption{TLCM Oracle Pipeline.}
    \label{fig:TLCM oracle}
\end{figure}
Gate TLCM uses four gates: {\sc tl-Cross Finder} $U_\chi$, {\sc Cross Counter} $U_{cc}$, {\sc Cross Comparator} $U_{c<}$, and {\sc Final Check} $U_{fc}$, followed by the inverse gates $U^{-1}_{c<}$, $U^{-1}_{cc}$, and $U^{-1}_{\chi}$. Refer to \cref{fig:TLCM oracle}.

\paragraph{\sc tl-cross finder.}

\begin{figure}[htb!]
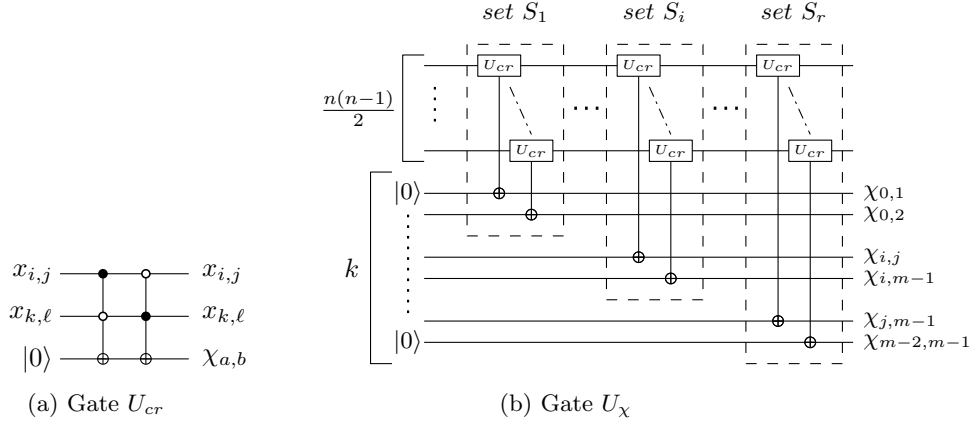

        \centering
        \begin{subfigure}{0.15\textwidth}        
        \centering
        \includegraphics[page = 15]{figures/Gate-Order-Initializer.pdf}
        \caption{Gate $U_{cr}$}
        \label{fig:U_cr}
        \end{subfigure}
        ~~~~~~~~~~
        \begin{subfigure}{0.45\textwidth}        
        \centering
        \includegraphics[page = 17]{figures/Gate-Order-Initializer.pdf}
        \caption{Gate $U_{\chi}$}
        \label{fig:U_chi}
        \end{subfigure}
        \caption{The gate $U_{cr}$ (a) and gate $U_{\chi}$ (b). In (b), it holds $k=\frac{m(m-1)}{2}$.}
        \label{fig:U_cr-U_chi}
\end{figure}

The purpose of $U_{\chi}$ is to compute the crossings in $D(x)$ (under the assumption that $D(x)$ is not degenerate), determined by the vertex order corresponding to $x$; refer to~\cref{fig:U_chi}. When provided with the input superposition $\ket{x}\ket{0_k}$, where $k=\frac{m(m-1)}{2}$, the gate $U_{\chi}$ produces the output superposition $\ket{x}\ket{\chi}$. 

The gate $U_{\chi}$ exploits the auxiliary gate $U_{cr}$, whose purpose is to check if two edges cross; refer to~\cref{fig:U_cr}. When provided with the input superposition $\ket{x_{i,j}}\ket{x_{k,\ell}}\ket{0}$, the gate produces the output superposition $\ket{x_{i,j}}\ket{x_{k,\ell}}\ket{\chi_{a,b}}$, where $e_a = (u_i,v_k)$,  $e_b = (u_j,v_\ell)$, and $\chi_{a,b} = x_{i,j} \oplus x_{k,\ell}$ (which is $1$ if and only if $e_a$ and $e_b$ cross in $D(x)$). %
It is implemented using two Toffoli gates with three inputs and outputs. The first one is activated when the qubit $\ket{x_{i,j}}$ is equal to $\ket{1}$ and the qubit $\ket{x_{k,\ell}}$ is equal to $\ket{0}$. The second one is activated when the qubit $\ket{x_{i,j}}$ is equal to $\ket{0}$ and the qubit $\ket{x_{k,\ell}}$ is equal to $\ket{1}$. 
 $U_{cr}$ has $O(1)$ circuit complexity, depth, and width.

The gate $U_{\chi}$ works as follows. Consider that if two variables $x_{i,j}$ and $x_{k,\ell}$ are compared to determine whether the edges $(u_i,v_k)$ and $(u_j,v_\ell)$ cross, none of these variables can be compared with another variable at the same time. Therefore, we partition the pairs of such variables using \cref{le:partion-k} (with $k=2$ and $|X| = \frac{n(n-1)}{2}$) into $r \in O(n^2)$ cross-independent sets $S_1, \dots, S_{r}$ each containing at most $\frac{n(n-1)}{4}$ pairs.  
For $i=1,\dots,r$, the gate $U_{\chi}$ executes in parallel a $U_{cr}$ gate, for each pair $(x_{i,j},x_{k,\ell})$ in $S_i$ (refer to \cref{fig:U_chi}), in order to output the qubit $\ket{\chi_{a,b}}$.
$U_{\chi}$ has $O(n^4)$ circuit complexity, $O(n^2)$ depth, and $O(n^2)$ width.

\paragraph{\sc cross counter.}

The purpose of gate $U_{cc}$ is to count the total number of crossings in the drawing $D(x)$. When provided with the input superposition $\ket{\chi}\ket{0_h}\ket{0_k}$, where $h=\log m + \log(m-1)$ and $k=2m(m-1) -2(\log m + \log(m-1)) -2$, the gate $U_{cc}$ produces the output superposition $\ket{\chi}\ket{\sigma(x)}\ket{0_h}$, where $\sigma(x) = \sum_{e_i,e_j \in E} \chi_{i,j}$ is a binary integer of length $\log m + \log(m-1)$ representing the total number of crossings.
The gate $U_{cc}$ is an instance of the gate $U_{1s}$ (refer to \cref{fig:U_s1}), where the qubits of the superposition $\ket{\chi} = \ket{\chi_{0,1}}\dots\ket{\chi_{0,m-1}}\dots\ket{\chi_{i,j}}\dots\ket{\chi_{m-2,m-1}}$ play the role of the qubits of the superposition $\ket{b_0 \dots b_{t-1}}$, where $t=m$, which forms part of the input of $U_{1s}$. 
Observe that the number of crossings in $D(x)$ is at most $\frac{m(m-1)}{2}$. Therefore, the number $\sigma(x)$ of crossings in $D(x)$ can be represented by a binary string of length $h \leq \log m + \log(m-1)$. By \cref{le:gate-counter}, replacing $t =\frac{m(m-1)}{2}$, we get that the parameter $k=2m(m-1) -2(\log m + \log(m-1)) -2$.
By \cref{le:gate-counter}, gate $U_{cc}$ has $O(m^2)$ circuit complexity, $O(\log^2 m)$ depth, and $O(m^2)$ width. %

\paragraph{\sc cross comparator.}

The purpose of gate $U_{c<}$ is to verify if the total number of crossings $\sigma(x)$ in $D(x)$ computed by gate $U_{cc}$ is less than the allowed number of crossings $\rho$. When provided with the input superposition $\ket{\sigma(x)}\ket{\rho}\ket{0_h}\ket{0}$, where $h=\log m + \log(m-1)$, the gate $U_{c<}$ produces the output superposition $\ket{\sigma(x)}\ket{\rho}\ket{0_h}\ket{g(x)}$, where $g(x)=1$ if $D(x)$ is not degenerate and $\sigma(x) < \rho$. The gate $U_{c<}$ is an instance of the gate $U_{<}$ (refer to \cref{fig:less}), where the $h=\log m + \log(m-1)$ qubits of the superposition $\ket{\sigma(x)}$ play the role of the qubits of the superposition $\ket{\phi[i]}$ and where $h$ qubits initialized to the binary representation of $\rho$ play the role of $\ket{\phi[j]}$. 
By \cref{le:gate-less}, gate $U_{c<}$ has $O(\log m)$ circuit complexity, depth, and width.

\paragraph{\sc final check.}

\begin{figure}[tb!]
    \centering
    \includegraphics[page = 21, width = .4\textwidth]{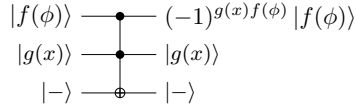}
    \caption{Gate $U_{fc}$.}
    \label{fig:final-check}
\end{figure}

The purpose of gate $U_{fc}$ is to check whether the current solution is admissible, i.e., whether the $2$-level drawing $D(x)$ of $G$ is not degenerate and it has at most $\rho$ crossings. Refer to \cref{fig:final-check}. When provided with the input superposition $\ket{f(\phi)}\ket{g(x)}\ket{-}$, the gate $U_{fc}$ produces the outputs superposition $(-1)^{g(x)f(\phi)}\ket{f(\phi)}\ket{g(x)}\ket{-}$. $U_{fc}$ exploits a Toffoli gate with three inputs and outputs. The control qubits are $\ket{f(\phi)}$ and $\ket{g(x)}$, and the target qubit is $\ket{-}$.
When $f(\phi)=g(x)=1$, the target qubit is transformed into the qubit $-\ket{-}$. Otherwise it leaves unchanged. %
Gate $U_{fc}$ has $O(1)$ circuit complexity, depth, and width. 

\paragraph{The inverse circuits.} The purpose of circuits $U^{-1}_{c<}$, $U^{-1}_{cc}$, and $U^{-1}_{\chi}$ is to restore the $h$ ancilla qubit to $\ket{0}$ so that they can be used in the subsequent steps of Grover's approach.

\paragraph{Correctness and complexity.}
For the correctness of \cref{le:gate-TLCM}, observe that the gates $U_\chi$, $U_{cc}$, $U_{c<}$, and $U_{fc}$ verify all the necessary conditions for which $D(x)$ has at most $\rho$ crossings, under the assumption that $D(x)$ is not degenerate. Therefore, the sign of the output superposition of gate TLCM, which is determined by the expression $(-1)^{g(x)f(\phi)}$, is positive when either $D(x)$ is degenerate or $D(x)$ is not degenerate and the number of crossings $\sigma(x)$ in $D(x)$ is larger than $\rho$, and it is negative when $D(x)$ is not degenerate and the number of crossings $\sigma(x)$ in $D(x)$ is smaller than $\rho$. 
The bound on the circuit complexity descends from the circuit complexity of the gate $U_{cc}$, the bound on the depth descends from the depth of the gate $U_{\chi}$, and the bound on the width descends from the width of $U_{cc}$.\qed

\subsection{Problem TLKP} We call TLKP the {\sc Solution Detector} circuit for problem  TLKP. Recall that, for TLKP problem, we denote by $k$ the maximum number of crossings allowed for each edge in the sought $2$-level drawing of $G$.

\begin{lemma}\label{le:gate-TLKP}
There exists a gate TLKP that, when provided with the input superposition 
$\ket{f(\phi)} \ket{x} \ket{0_{h}} \ket{-}$, where $h=\frac{m(m-1)}{2} + 4m - 2\log m -4 + m(\log m +1)$, outputs the superposition 
$(-1)^{g(x)f(\phi)}\ket{f(\phi)} \ket{x} \ket{0_{h}}\ket{-}$, where $g(x)=1$ if $D(x)$ is not degenerate and each edge of the $2$-level drawing $D(x)$ of $G$ has at most $k$ crossings. TLKP has
$O(m^2)$ circuit complexity, $O(m\log^2 m)$ depth and $O(m)$ width. 
\end{lemma}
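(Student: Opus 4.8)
The plan is to build the gate TLKP as a pipeline that closely mirrors the TLCM gate of \cref{le:gate-TLCM}, replacing its single global crossing count by a family of per-edge crossing counts. First I would reuse, essentially verbatim, the {\sc tl-Cross Finder} $U_\chi$ from the proof of \cref{le:gate-TLCM}: fed with $\ket{x}\ket{0_{m(m-1)/2}}$ it produces $\ket{x}\ket{\chi}$, where $\chi_{a,b}=x_{i,j}\oplus x_{k,\ell}$ records whether $e_a=(u_i,v_k)$ and $e_b=(u_j,v_\ell)$ cross in $D(x)$. These $\binom{m}{2}$ qubits account for the first summand $\frac{m(m-1)}{2}$ of $h$. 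The essential new ingredient is that, rather than summing all of $\chi$ and testing the total against one threshold, I test for every edge $e_i$ \emph{separately} that the number of crossings incident to it does not exceed $k$.

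For the per-edge stage I would observe that the crossings incident to $e_i$ are exactly the $m-1$ qubits forming the ``hook'' of the stored upper triangle at index $i$: the row entries $\chi_{i,j}$ for $j>i$ together with the column entries $\chi_{j,i}$ for $j<i$, all of which lie in the stored triangle. For each $i\in[m]$ I feed this length-$(m-1)$ string (padded by one $\ket 0$) into an instance of the counter gate $U_{1s}$ of \cref{le:gate-counter} with $t=m$, obtaining the crossing-degree $d_i$ of $e_i$ as a $(\log m+1)$-bit integer. The $m$ resulting integers are written into $m$ disjoint blocks of $\log m+1$ qubits, which is the third summand $m(\log m+1)$ of $h$; crucially, since each $U_{1s}$ restores its working register to $\ket 0$, I can reuse a single block of $4m-2\log m-4$ ancillae (the second summand of $h$) across all $m$ invocations and run them \emph{sequentially}. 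I would then compare each $d_i$ against $k$ with the gate $U_{<}$ of \cref{le:gate-less} (testing $d_i\le k$, equivalently $d_i<k+1$), record the outcomes $\ket{y_i}$, combine them with a logarithmic-depth tree of Toffoli gates into a single qubit $\ket{g(x)}$ that is $1$ exactly when $\max_i d_i\le k$, and finish with the {\sc Final Check} $U_{fc}$ controlled on $\ket{f(\phi)}$ and $\ket{g(x)}$ to imprint the phase $(-1)^{g(x)f(\phi)}$. Inverting the AND-tree, the comparators, the counters, and $U_\chi$ then returns the ancillae to $\ket{0_h}$.

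For the resource bounds I would argue as follows. The circuit complexity is $O(m^2)$ from $U_\chi$ (one $O(1)$-size $U_{cr}$ per edge pair) plus $m\cdot O(m)=O(m^2)$ from the $m$ counters, with the comparators and the AND-tree contributing only $O(m\log m)$. The width is held to $O(m)$ precisely because the counters are run one at a time---each $U_{1s}$ has width $O(m)$---and because $U_\chi$ is scheduled in $O(m)$ parallel rounds of $O(m)$ crossing checks each (refining the cross-independent partition of \cref{le:partion-k}); the stored matrix and degree blocks are mere wires and do not enter the width. This sequentialization is exactly what forces the depth up to $m\cdot O(\log^2 m)=O(m\log^2 m)$, since each counter contributes $O(\log^2 m)$ depth, while $U_\chi$ (depth $O(m)$), the comparators, and the AND-tree are all dominated by this term.

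The correctness argument is the routine one: when $\phi$ is an $n$-permutation (so $D(x)$ is non-degenerate and $f(\phi)=1$), the hook at index $i$ contains precisely the indicators of the edges crossing $e_i$, hence $d_i$ is the true crossing-degree and $g(x)=1$ iff every edge is crossed at most $k$ times; when $D(x)$ is degenerate the phase stays positive because $U_{fc}$ is controlled on $f(\phi)$. The hard part will not be the logic but the simultaneous bookkeeping of the three resource bounds: I must check that the sequential reuse of a single $U_{1s}$ ancilla block is legitimate (each invocation leaves it clean), and that trading TLCM's single wide global counter for $m$ narrow per-edge counters genuinely lowers the width from $O(m^2)$ to $O(m)$ at the cost of only the expected factor-$m$ increase in depth.
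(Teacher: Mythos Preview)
Your proposal is correct and follows essentially the same approach as the paper: compute $\ket{\chi}$ with the {\sc tl-Cross Finder} $U_\chi$, then for each edge $e_i$ sequentially apply an instance of $U_{1s}$ (with $t=m$) to the $m-1$ crossing indicators incident to $e_i$ (reusing a single ancilla block of size $4m-2\log m-4$), compare each resulting count against $k$ with $U_{<}$, AND the $m$ outcomes into $\ket{g(x)}$, apply $U_{fc}$, and uncompute. The only cosmetic differences are that the paper packages the per-edge counters into a named gate $U_{ecc}$ and the comparators into $U_{<k}$ (using a single wide Toffoli rather than your log-depth AND-tree), and it does not explicitly discuss rescheduling $U_\chi$ to achieve width $O(m)$---it simply attributes all three bounds to the edge-cross-counter stage.
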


\paragraph{\bf Proof of \cref{le:gate-TLKP}.} 
Gate TLKP uses four gates: {\sc tl-cross finder} $U_{\chi}$, {\sc edge cross counter} $U_{ecc}$, {\sc edge cross comparator} $U_{<k}$, and {\sc final check} $U_{fc}$, followed by the inverse gates $U_{\chi}^{-1}$, $U_{ecc}^{-1}$, and $U_{<k}^{-1}$. Refer to \cref{fig:TLKP-oracle}.

\begin{figure}[tb!]
    \centering
    \includegraphics[page = 30, width = .9\textwidth]{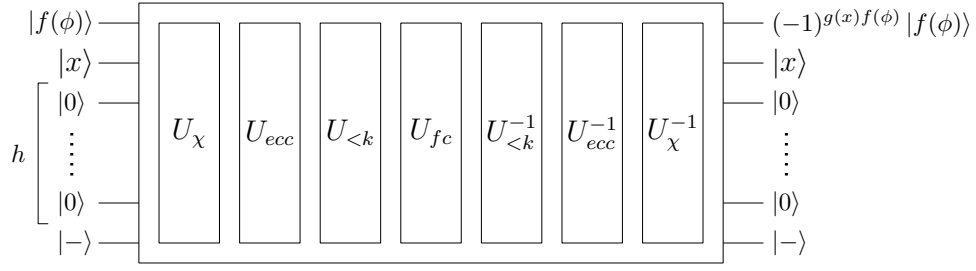}
    \caption{TLKP Oracle Pipeline.}
    \label{fig:TLKP-oracle}
\end{figure}

\paragraph{\sc tl-cross finder.} For the definition of gate $U_\chi$, refer to the proof of \cref{le:gate-TLCM}. Recall that the purpose of $U_{\chi}$ is to compute the crossings in $D(x)$ (under the assumption that $D(x)$ is not degenerate), determined by the vertex order corresponding to $x$. Also recall that, when provided with the input superposition $\ket{x}\ket{0_k}$, the gate $U_{\chi}$ produces the output superposition $\ket{x}\ket{\chi}$. 

\paragraph{\sc edge cross counter.}

The purpose of gate $U_{ecc}$ is to count, for each edge $e_i\in E$, the total number of crossings of each edge $e_i$ in the drawing $D(x)$. When provided with the input superposition $\ket{\chi}\ket{0_h}\ket{0_l}$, where $h=4m-2\log m -4$ and $l=m(1 + \log m)$, the gate $U_{ecc}$ produces the output superposition $\ket{\chi}\ket{0_h}\ket{\sigma_{e_0}(x)}\dots\ket{\sigma_{e_i}(x)}\dots\ket{\sigma_{e_{m-1}}(x)}$, where $\sigma_{e_i}(x)=\sum_{a < i} \chi_{a,i} + \sum_{b > i} \chi_{i,b}$ is a binary integer of length $1+\log m$ representing the total number of crossings of the edge $e_i$ in $D(x)$.
The gate $U_{ecc}$ exploits the auxiliary gate $U_{cc(e_i)}$, whose purpose, for each edge $e_i$, is to compute the binary integer $\sigma_{e_i}(x)$. When provided with the input superposition $\ket{\chi_{0,i}}\dots\ket{\chi_{i,i+1}}\dots\ket{\chi_{i,m-1}}\ket{0_h}\ket{0_{1+\log m}}$, the gate $U_{cc(e_i)}$ produces the output superposition $\ket{\chi_{0,i}}\dots\ket{\chi_{i,i+1}}\dots\ket{\chi_{i,m-1}}\ket{0_h}\ket{\sigma_{e_i}(x)}$. The gate $U_{cc(e_i)}$ is an instance of the gate $U_{1s}$ (refer to \cref{fig:U_s1}), where the qubits of the superposition $\ket{\chi_{0,i}}\dots\ket{\chi_{i,i+1}}\dots\ket{\chi_{i,m-1}}$ play the role of the qubits of the superposition $\ket{b_0\dots b_{t-1}}$, with $t=m$, which forms part of the input of $U_{1s}$. Observe that, for each edge $e_i$, the number of crossings of $e_i$ in $D(x)$ is at most $m-1$. Therefore, $\sigma_{e_i}(x)$ can be represented by a binary string of length $1 +\log m$. By \cref{le:gate-counter}, gate $U_{cc(e_i)}$ has circuit complexity $O(m)$, depth complexity $O(\log^2m)$, and width complexity $O(m)$.
Gate $U_{ecc}$ works as follows; refer to~\cref{fig:U_ecc}. Gate $U_{ecc}$ executes in sequence gates $U_{cc(e_0)}$, $U_{cc(e_1)}, \dots, U_{cc(e_{m-1})}$. Gate $U_{ecc}$ has circuit complexity $O(m^2)$, depth $O(m\log^2 m)$, and width $O(m)$.

\begin{figure}[tb!]
    \centering
    \includegraphics[page = 46, width = .8\textwidth]{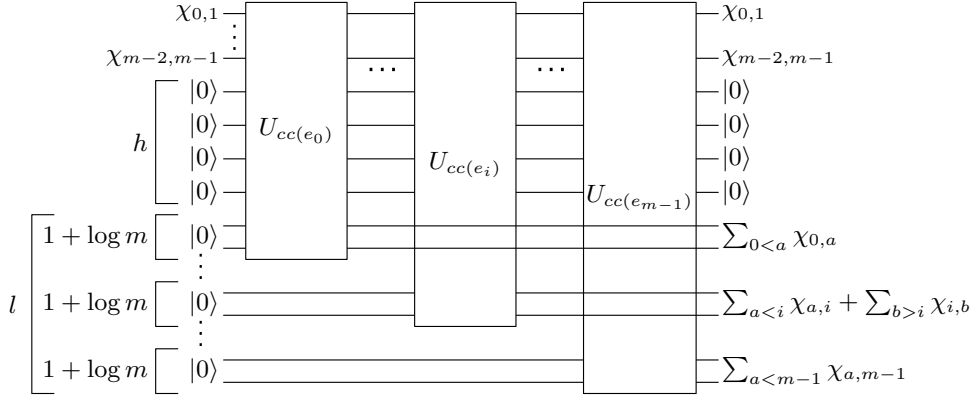}
    \caption{The gate $U_{ecc}$.}
    \label{fig:U_ecc}
\end{figure}

\paragraph{\sc edge cross comparator.}

The purpose of $U_{<k}$ is to verify, for each edge $e_i \in E$, if the total number of crossings $\sigma_{e_i}(x)$ of $e_i$ in $D(x)$ computed by $U_{ecc}$ is less than the allowed number of crossings for each edge $k$; refer to \cref{fig:U_edge_cross_comparator}. When provided with the input superposition $\ket{\sigma_E(x)}\ket{k}\ket{0_h}\ket{0_{m}}\ket{0}$, where $\sigma_E(x) = \sigma_{e_0}(x)\dots \sigma_{e_i}(x) \dots \sigma_{e_{m-1}}(x)$ and $h=\log m + 1$, the gate $U_{<k}$ produces the output superposition $\ket{\sigma_E(x)}\ket{k}\ket{0_h}\ket{0_{m}}\ket{g(x)}$, where $g(x)=1$ if $D(x)$ is not degenerate and $\sigma_{e_i}(x)<k$ for each $e_i\in E$. 
The gate $U_{<k}$ exploits $m$ instances of gate $U_<$ (refer to \cref{fig:less} and to \cref{le:gate-less}), where the qubits of the superposition $\ket{\sigma_{e_i}(x)}$ play the role of the qubits of the superposition $\ket{\phi[i]}$ and the qubits initialized to the binary representation of $k$ play the role of $\ket{\phi[j]}$. Recall that, by \cref{le:gate-less}, gate $U_<$ has $O(\log m)$ circuit complexity, depth, and width. Each of the $m$ gates $U_<$ provides an answer qubit ($\ket{\kappa_i}$), which is equal to $1$ if and only if $\sigma_{e_i}(x) < k$ for the considered edge $e_i$.
At the end of the $m$-th computation a Toffoli gate with $m+1$ inputs and outputs is applied to check if, for each $e_i\in E$, $\sigma_{e_i}(x)<k$. %
Gate $U_{<k}$ has $O(m\log m)$ circuit complexity, $O(m\log m)$ depth, and $O(m)$ width.

\begin{figure}[tb!]
    \centering
    \includegraphics[page = 34, width = 0.8\textwidth]{figures/Gate-Order-Initializer.pdf}
    \caption{$U_{<k}$.}
    \label{fig:U_edge_cross_comparator}
\end{figure}

\paragraph{\sc final check.}

The purpose of gate $U_{fc}$ is to check wheter the current solution is admissibile, i.e., whether the $2$-level drawing $D(x)$ of $G$ is not degenerate and each edge has at most $k$ crossings. Refer to \cref{fig:final-check}. When provided with the input superposition $\ket{f(\phi)}\ket{g(x)}\ket{-}$, the gate $U_{fc}$ produces the outputs superposition $(-1)^{f(\phi)g(x)}\ket{-}$. $U_{fc}$ exploits a Toffoli gate with three inputs and outputs. The control qubits are $\ket{f(\phi)}$ and $\ket{g(x)}$, and the target qubit is $\ket{-}$. When at least one of $f(\phi)$ and $g(x)$ are equal to $0$, the target qubit leaves unchanged. On the other hand, when $f(\phi)=g(x)=1$, the target qubits is transformed into the qubit $-\ket{-}$. Gate $U_{fc}$ has $O(1)$ circuit complexity, depth and width.

\paragraph{The inverse circuits.} The purpose of circuits $U^{-1}_{{<k}}$, $U^{-1}_{cc(E)}$, and $U^{-1}_{\chi}$ is to restore the $h$ ancilla qubit to $\ket{0}$ so that they can be used in the subsequent steps of Grover's approach.

\paragraph{Correctness and complexity.}
For the correctness of \cref{le:gate-TLKP}, observe that the gates $U_\chi$, $U_{cc(E)}$, $U_{<k}$, and $U_{fc}$ verify all the necessary conditions for which $D(x)$, for each edge of $G$, has at most $k$ crossings, under the assumption that $D(x)$ is not degenerate. Therefore, the sign of the output superposition of gate TLKP, which is determined by the expression $(-1)^{g(x)f(\phi)}$, is positive when either $D(x)$ is degenerate or $D(x)$ is not degenerate and the number of crossings $\sigma_{e_i}(x)$ in $D(x)$ is larger than $k$, for some edge $e_i\in E$, and it is negative when $D(x)$ is not degenerate and the number of crossings $\sigma_{e_i}(x)$ in $D(x)$ is smaller than $k$, for each edge $e_i\in E$.
The bounds on the circuit complexity, depth, and width descend from the circuit complexity, depth, and width of the gate $U_{cc(E)}$.\qed

\subsection{Problem TLQP} We call TLQP the {\sc Solution Detector} circuit for problem TLQP.

\begin{lemma}\label{le:gate-TLQP}
There exists a gate TLQP that, when provided with the input superposition 
$\ket{f(\phi)} \ket{x} \ket{0_{h}} \ket{-}$, where 
$h \in O(m^4)$,
produces the output superposition 
$(-1)^{f(\phi)g(x)}\ket{f(\phi)} \ket{x} \ket{0_{h}}\ket{-}$, and $g(x) = 1$ if $D(x)$ is not degenerate and the $2$-level drwaing $D(x)$ of $G$ is quasi-planar. 
TLQP has 
$O(m^6)$ circuit complexity, $O(m^4)$ depth, and $O(m^2)$ width. 
\end{lemma}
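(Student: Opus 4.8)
The plan is to assemble gate TLQP as a pipeline analogous to gates TLCM and TLKP (\cref{le:gate-TLCM}): compute all pairwise crossings, detect the configuration forbidden by quasi-planarity, flip the phase, and uncompute. The key combinatorial fact is that a non-degenerate drawing $D(x)$ is quasi-planar if and only if no three edges pairwise cross, i.e., there is no triple $e_a,e_b,e_c$ with $\chi_{a,b}=\chi_{a,c}=\chi_{b,c}=1$. Thus $g(x)$ can be computed from the crossing matrix $\chi$ by a search over triples of edges, and the degeneracy of $D(x)$ is handled, as in \cref{le:gate-TLCM}, by conjoining $f(\phi)$ only at the final phase flip.

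First I would reuse the {\sc tl-cross finder} $U_\chi$ from the proof of \cref{le:gate-TLCM} to map $\ket{x}\ket{0_k}$ to $\ket{x}\ket{\chi}$; since the graph is connected we have $n\in O(m)$, so this step costs $O(m^4)$ circuit complexity, $O(m^2)$ depth, and $O(m^2)$ width, and is dominated by the triple search. Next I would introduce a {\sc triple-cross finder} $U_{tc}$ that, for each edge triple $(e_a,e_b,e_c)$, writes a qubit $\ket{t_{a,b,c}}=\ket{\chi_{a,b}\wedge\chi_{a,c}\wedge\chi_{b,c}}$ using one Toffoli gate on the three relevant crossing qubits. To parallelize, I would apply \cref{le:partion-k} with $k=3$ to the ground set of the $\binom{m}{2}\in O(m^2)$ crossing variables $\chi_{i,j}$, partitioning the triples into $O(m^4)$ cross-independent sets of size $O(m^2)$; within one set the Toffoli gates act on disjoint qubits and run simultaneously. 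Executing the $O(m^4)$ sets in sequence with $O(m^2)$ parallel gates each yields $O(m^4)$ depth and $O(m^2)$ width, and bounding the circuit complexity by the product of depth and width gives $O(m^6)$.

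To decide $g(x)$ while respecting the $O(m^2)$ width bound, I would not keep all $\Theta(m^3)$ triple qubits live at once. Instead, for each cross-independent set I would OR its triple qubits into a single accumulator qubit; maintaining one accumulator per set accounts for the $h\in O(m^4)$ ancillas. A final (sequential) disjunction over the $O(m^4)$ accumulators produces a global ``bad'' qubit, and $\ket{g(x)}$ is its negation, with $g(x)=1$ exactly when no edge triple mutually crosses. I would then reuse the {\sc final check} $U_{fc}$ of \cref{le:gate-TLCM} on controls $\ket{f(\phi)}$ and $\ket{g(x)}$ to obtain the phase $(-1)^{f(\phi)g(x)}$, and append $U_{tc}^{-1}$ and $U_\chi^{-1}$ to reset every ancilla to $\ket{0}$. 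Correctness then follows as in \cref{le:gate-TLCM}: the phase is $-1$ precisely when $f(\phi)=1$ and $g(x)=1$, i.e., when $D(x)$ is non-degenerate and quasi-planar.

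The main obstacle I anticipate is the jump from a pairwise (degree-$2$) test, as used for TLCM and TLKP, to a genuinely triple (degree-$3$) test. This forces the $k=3$ instance of \cref{le:partion-k} and, more delicately, a reversible evaluation of the ``some triple mutually crosses'' predicate that simultaneously covers all $\Theta(m^3)$ edge triples, keeps the width at $O(m^2)$ rather than letting it grow with the number of triples, and uncomputes all scratch qubits so that the oracle can be iterated inside Grover's search. Getting the accumulate-then-uncompute schedule right, so that the stated $O(m^6)$, $O(m^4)$, and $O(m^2)$ bounds all hold at once, is the part requiring the most care.
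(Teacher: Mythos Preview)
Your proposal is correct and follows essentially the same approach as the paper: both build the TLQP gate as $U_\chi$ followed by a quasi-planarity tester that computes, for each edge triple, the conjunction $\chi_{a,b}\wedge\chi_{a,c}\wedge\chi_{b,c}$ via a constant-size Toffoli, parallelizes these over the $O(m^4)$ cross-independent sets given by \cref{le:partion-k} with $k=3$ on the $\binom{m}{2}$ crossing variables, aggregates one result qubit per set, cascades these into $\ket{g(x)}$, applies the same $U_{fc}$ as in \cref{le:gate-TLCM}, and uncomputes. The paper's quasi-planarity tester $U_Q$ and your $U_{tc}$ differ only cosmetically (the paper stores $q_i=1$ when set $S_i$ is ``good'' rather than accumulating a ``bad'' bit), and both arrive at the identical $O(m^6)$/$O(m^4)$/$O(m^2)$ bounds with $h\in O(m^4)$ ancillas.
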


\paragraph{\bf Proof of \cref{le:gate-TLQP}.} 
Gate TLQP executes three gates: {\sc TL-cross finder} $U_\chi$, {\sc quasi-planarity tester} $U_{Q}$, and {\sc final check} $U_{fc}$, followed by their inverse gates $U_{Q}^{-1}$ and $U_{\chi}^{-1}$. Refer to \cref{fig:TLQP-oracle}.
\begin{figure}[tb!]
    \centering
    \includegraphics[page = 33, width = .7\textwidth]{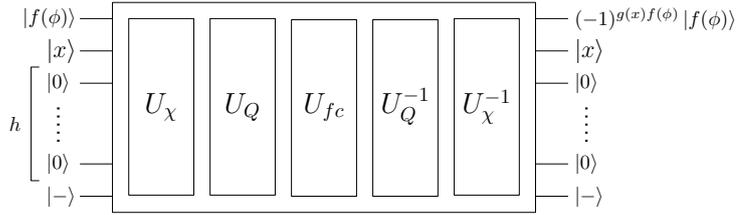}
    \caption{TLQP Oracle Pipeline.}
    \label{fig:TLQP-oracle}
\end{figure}

\paragraph{\sc tl-cross finder.} For the definition of gate $U_\chi$, refer to the proof of \cref{le:gate-TLCM}. Recall that the purpose of $U_{\chi}$ is to compute the crossings in $D(x)$ (under the assumption that $D(x)$ is not degenerate), determined by the vertex order corresponding to $x$. Also recall that, when provided with the input superposition $\ket{x}\ket{0_k}$, the gate $U_{\chi}$ produces the output superposition $\ket{x}\ket{\chi}$.

\paragraph{\sc quasi-planarity tester.}

The purpose of gate $U_Q$ is to verify the absence of any three edges that pairwise cross in $D(x)$; refer to \cref{fig:U_Q}. When provided with the input the superposition $\ket{\chi}\ket{0_h}\ket{0}$, where 
$h \in O(m^4)$, 
the gate $U_Q$ produces the output superposition $\ket{\chi}\ket{0_h}\ket{g(x)}$, where $g(x)=1$ if $D(x)$ is not degenerate and there are not three edges that pairwise cross in $D(x)$.

The gate $U_{Q}$ exploits the auxiliary gate $U_{q\chi}$, whose purpose is to check if three edges pairwise cross; refer to \cref{fig:U_q_chi}. When provided with the input superposition $ \ket{\chi_{i,j}} \ket{ \chi_{i,k}} \ket{\chi_{j,k}}\ket{0}$, the gate provide the output superposition $\ket{\chi_{i,j}} \ket{ \chi_{i,k}} \ket{\chi_{j,k}}\ket{q\chi_{i,j,k}}$, where $q\chi_{i,j,k} = \chi_{i,j} \wedge \chi_{i,k} \wedge \chi_{j,k}$ (which is $1$ if and only if $e_i$, $e_j$, and $e_k$ pairwise cross in $D(x)$). It is implemented using a Toffoli gate with four inputs and outputs, which is activated when $\chi_{i,j}=\chi_{i,k}=\chi_{j,k}=1$. The circuit complexity, depth, and width of $U_{q\chi}$ is $O(1)$.

The gate $U_Q$ works as follows. Consider that if three variables $\chi_{i,j}$, $\chi_{i,k}$, and $\chi_{j,k}$ are compared to determine whether the edges $e_i$, $e_j$, and $e_k$ pairwise cross, none of these variables can be compared with another variable at the same time. Therefore, we partition the pairs of such variables using \cref{le:partion-k} (with $k=3$ and $|X| = \frac{m(m-1)}{2}$) into $p \in O(m^4)$ cross-independent sets $S_1,\dots,S_p$ each containing at most $\frac{m(m-1)}{6}$ unordered triples. 
For $i=1,\dots,p$, the gate $U_Q$ executes in parallel a gate $U_{q\chi}$ for each triple $\{\chi_{i,j}, \chi_{i,k}, \chi_{j,k}\}$ in $S_i$ (refer to \cref{fig:U_Q}). All the last output qubits of the $U_{q\chi}$ gates in $S_i$ enter a Toffoli gate that outputs a qubit $\ket{q_i}$ such that $q_i = 1$ if and only if all of such qubits are equal to $\ket{0}$, i.e., it does not exist a triple of edges that pairwise cross. In order to allow the reuse of the ancilla qubit, except for the qubit $\ket{q_i}$, gate $U_Q$ executes in parallel a gate $U_{q\chi}^{-1}$ for each triple in $S_i$.
All the qubits $\ket{q_i}$, with $i = 1,\dots,p$, enter in cascade a Toffoli gate, with three inputs and outputs, that checks that all of them are equal to $\ket{1}$, i.e., there exist not three edges that pairwise cross. The output qubit of the last Toffoli gate is the qubit $\ket{g(x)}$.
The gate $U_{Q}$ has circuit complexity $O(m^6)$, $O(m^4)$ depth, and $O(m^2)$ width.

\begin{figure}[tb!]
    \centering
        \includegraphics[page = 26, width = .2\textwidth]{figures/Gate-Order-Initializer.pdf}
        \caption{Gate $U_{q\chi}$}
        \label{fig:U_q_chi}
        \end{figure}
        \begin{figure}[tb!]  
            \centering

        \includegraphics[page = 55, width = .95\textwidth]{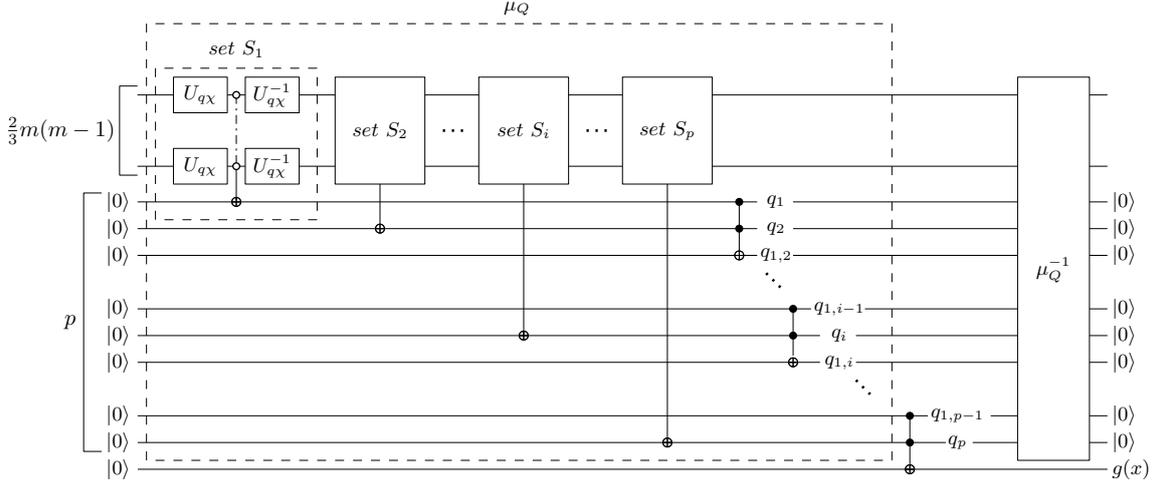}
        \caption{Gate $U_{Q}$}
        \label{fig:U_Q}
\end{figure}

\paragraph{\sc final check.}

We use a gate $U_{fc}$ to check whether the current solution is admissibile, i.e., whether the $2$-level drawing $D(x)$ of $G$ is not degenerate and the $2$-level drawing $D(x)$ of $G$ is quasi-planar. Refer to \cref{fig:final-check} and to \cref{le:gate-TLCM}. To this aim, we provide the 
$U_{fc}$ with the qubit $\ket{f(\phi)}$, a qubit $\ket{-}$, and the qubit $\ket{g(x)}$ provided by gate $U_Q$.
Recall that, gate $U_{fc}$ has $O(1)$ circuit complexity, depth, and width.

\paragraph{The inverse circuits.} The purpose of circuits $U^{-1}_{Q}$ and $U^{-1}_{\chi}$ is to restore the $h$ ancilla qubit to $\ket{0}$ so that they can be used in the subsequent steps of Grover's approach.

\paragraph{Correctness and complexity.}
For the correctness of \cref{le:gate-TLQP}, observe that the gates $U_\chi, U_Q$, and $U_{fc}$ verify all the necessary conditions for which $D(x)$ is a quasi-planar drawing of $G$, under the assumption that $D(x)$ is not degenerate. Therefore, the sign of the output superposition of gate TLQP, which is determined by the expression $(-1)^{g(x)f(\phi)}$, is positive when either $D(x)$ is degenerate or $D(x)$ is not degenerate and it is not a quasi-planar drawing of $G$, and it is negative when $D(x)$ is not degenerate and $D(x)$ is a quasi-planar drawing of $G$. 
The bounds on the circuit complexity, depth, and width descend from the circuit complexity, depth, and width of the gate $U_{Q}$.\qed

\subsection{Problem TLS} 
Recall that the purpose of the basis state $\ket{\theta}$ is to represent a subset $K(\theta)$ of the edges of $G$ of size at most~$\sigma$, each labeled with an integer in $[m]$, and that we denote by $N(\theta)$ the set of indices of the edges in $K(\theta)$.
Also, recall that we denote by $D(x)$ the $2$-level drawing of $G$ associated with the vertex order corresponding to $x$. In the following, for a subgraph $G'$ of $G$, we use the notation $D(x,G')$ to denote the $2$-level drawing of $G'$ induced by $D(x)$.

We call TLS the {\sc Solution Detector} circuit for problem  TLS. 

\begin{lemma}\label{le:gate-TLS}
There exists a gate TLS that, provided with the input superposition
$\ket{f(\phi)}\ket {f(\theta)} \ket{x} \ket{e}\ket{0_{h}} \ket{-}$, where $h \in O(m^2)$, produces the output superposition 
$(-1)^{g(x)f(\phi)f(\theta)}\ket{f(\phi)} \ket {f(\theta)} \ket{x} \ket{e} \ket{0_{h}}\ket{-}$, such that, if $D(x)$ and $N(\theta)$ are not degenerate, then $g(x) = 1$ if and only if $D(x,G')$ is planar, where $G'=(V, E \setminus K(\theta))$.
Gate TLS has  $O(m^2)$ circuit complexity, $O(m)$ depth, and $O(m)$ width.
\end{lemma}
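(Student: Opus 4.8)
The plan is to reuse the pipeline architecture of the TLCM and TLKP oracles. Gate TLS first applies the {\sc tl-cross finder} $U_\chi$ from the proof of \cref{le:gate-TLCM} to turn $\ket{x}\ket{0_k}$ into $\ket{x}\ket{\chi}$, thereby exposing the crossing qubit $\chi_{a,b}$ for every pair of edges of $D(x)$. It then runs a new gate that computes a single qubit $\ket{g(x)}$ deciding whether the drawing $D(x,G')$ of $G'=(V,E\setminus K(\theta))$ is crossing-free, and finally applies a {\sc final check} $U_{fc}$ that flips the phase of $\ket{-}$ exactly when $f(\phi)=f(\theta)=g(x)=1$. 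As in the other oracles, the sub-circuits computing $\ket{\chi}$ and $\ket{g(x)}$ are each followed by their inverse so that the $h\in O(m^2)$ ancillas (dominated by the $\binom{m}{2}$ qubits of $\ket{\chi}$) are reset to $\ket{0}$ for subsequent Grover iterations.

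The conceptual core is the observation that, when $D(x)$ and $N(\theta)$ are non-degenerate, $D(x,G')$ is planar if and only if there is no pair of edges that simultaneously cross in $D(x)$ and both survive the removal. Combining the crossing qubits $\ket{\chi}$ with the indicator qubits $\ket{e}=\ket{e_0}\dots\ket{e_{m-1}}$ produced by the {\sc Skewness Transducer} of \cref{le:gate-skewness-initializer} (where $e_i=1$ iff $e_i\in K(\theta)$), this reads: for every $0\le a<b\le m-1$ the conjunction $\chi_{a,b}\wedge\overline{e_a}\wedge\overline{e_b}$ equals $0$. Hence I would introduce, for each unordered pair $\{a,b\}$, a \emph{bad-crossing} qubit $\beta_{a,b}=\chi_{a,b}\wedge\overline{e_a}\wedge\overline{e_b}$ computed by one Toffoli gate controlled on $\ket{\chi_{a,b}}$ and anti-controlled on $\ket{e_a}$ and $\ket{e_b}$, and set $g(x)=1$ precisely when all $\beta_{a,b}=0$ (a logical NOR of the $\beta$'s, realized by a multi-controlled gate with a negated target).

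To schedule the $\Theta(m^2)$ pairwise Toffoli gates without violating the no-fan-out restriction on a qubit $\ket{e_a}$, which is shared by every pair containing $e_a$, I would partition the edge pairs with \cref{le:partion-k} (taking $k=2$ and $X$ equal to the edge set, so $|X|=m$) into $O(m)$ cross-independent classes, each of size at most $\lfloor m/2\rfloor$. Within one class every index $a$ occurs at most once, so all $\beta$-gates of that class act on disjoint qubits and run in parallel; processing the classes in sequence and folding each class into the running NOR-flag (uncomputing its $\beta$ qubits before moving on) keeps the width at $O(m)$, the depth at $O(m)$, and the total gate count at $O(m^2)$. Since $U_\chi$ fits within the same $O(m^2)$, $O(m)$, $O(m)$ budget in this bipartite setting and $U_{fc}$ contributes only $O(1)$, the three bounds of the statement follow. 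The {\sc final check} here is the three-control analogue of $U_{fc}$ from \cref{le:gate-TLCM}, now reading $\ket{f(\phi)}$, $\ket{f(\theta)}$, and $\ket{g(x)}$.

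For correctness I would argue, exactly as in the TLCM case, that under non-degeneracy the qubit $g(x)$ is the true indicator of $D(x,G')$ being planar, so the global sign $(-1)^{g(x)f(\phi)f(\theta)}$ is negative precisely on the genuine $(\phi,\theta)$ that encode a permutation, a true $\sigma$-subset of edges, and a crossing-free drawing of $G'$, and positive on all degenerate or non-solution states. The step I expect to be the main obstacle is the parallel scheduling and ancilla bookkeeping of the bad-crossing stage: aggregating the $\Theta(m^2)$ qubits $\beta_{a,b}$ into the single flag $g(x)$ while respecting no-fan-out on the shared $\ket{e_a}$ and keeping the width at $O(m)$. This is exactly where the cross-independent partition of \cref{le:partion-k} does the heavy lifting, letting me trade the quadratic number of comparisons against a linear number of parallel layers.
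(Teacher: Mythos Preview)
Your proposal is correct and essentially identical to the paper's proof: the paper's gate TLS is exactly the pipeline $U_\chi$ -- {\sc skewness cross tester} $U_{sk}$ -- $U_{fc}$ (plus inverses), where $U_{sk}$ computes, for each pair $\{a,b\}$, the qubit $s_{a,b}=\neg e_a\wedge\neg e_b\wedge\chi_{a,b}$ via a single Toffoli (your $\beta_{a,b}$), schedules these using \cref{le:partion-k} with $k=2$ and $|X|=m$ into $O(m)$ cross-independent classes processed sequentially with per-class uncomputation, and folds the resulting flags into $\ket{g(x)}$ by a cascade of Toffolis; the $U_{fc}$ is the four-qubit Toffoli on $\ket{f(\phi)},\ket{f(\theta)},\ket{g(x)},\ket{-}$ you describe.
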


\paragraph{\bf Proof of \cref{le:gate-TLS}.} 
Gate TLS uses three gates: {\sc TL-cross finder} $U_{\chi}$, {\sc skewness cross tester} $U_{sk}$, and {\sc final check} $U_{fc}$, followed by the inverse gates $U^{-1}_{sk}$ and $U^{-1}_{\chi}$.
\cref{fig:TLS-oracle}

\begin{figure}[tb!]
    \centering
    \includegraphics[page = 32, width = .8\textwidth]{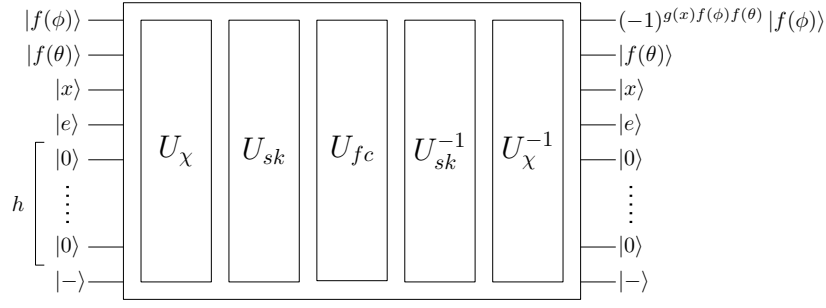}
    \caption{TLS Oracle Pipeline.}
    \label{fig:TLS-oracle}
\end{figure}

\paragraph{\sc TL-cross Finder.}

For the definition of gate $U_\chi$, refer to the proof of \cref{le:gate-TLCM}. Recall that the purpose of $U_{\chi}$ is to compute the crossings in $D(x)$ (under the assumption that $D(x)$ is not degenerate), determined by the vertex order corresponding to $x$. Also recall that, when provided with the input superposition $\ket{x}\ket{0_k}$, the gate $U_{\chi}$ produces the output superposition $\ket{x}\ket{\chi}$. 

\paragraph{\sc skewness cross tester.}
Consider the subgraph $G'$ of $G$ obtained by removing from $G$ all the edges in $K(\theta)$.
The purpose of gate $U_{sk}$ is to determine which of the crossings stored in $\ket{\chi}$ involve pairs of edges that are both absent from $\ket{e}$. In fact, all the edges not in $\ket{e}$ form the edge set of $G'$. Therefore, $U_{sk}$ verifies whether the $2$-layer drawing $D(x,G')$ of $G'$ is planar; refer to \cref{fig:U_SK}. When provided with the input superposition $\ket{\chi}\ket{e}\ket{0_{\frac{m}{2}+p}}\ket{0}$, where $p\in O(m)$, the gate $U_{sk}$ produces the output superposition $\ket{\chi}\ket{e}\ket{0_{\frac{m}{2}+p}}\ket{g(x)}$, 
such that, if $D(x)$ and $N(\theta)$ are not degenerate, then $g(x) = 1$ if and only if $D(x,G')$ is planar.

The gate $U_{sk}$ exploits the auxiliary gate $U_{k}$, whose purpose is to check if any two edges in $G'$ cross; refer to \cref{fig:U_kappa}. When provided with the input superposition $\ket{e_a}\ket{e_b}\ket{\chi_{a,b}}\ket{0}$, the gate $U_k$ provides the output superposition $\ket{e_a}\ket{e_b}\ket{\chi_{a,b}}\ket{s_{a,b}}$, where $s_{a,b}=\neg e_a \wedge \neg e_b \wedge \chi_{a,b}$ (which is $1$ if and only if $e_a$ and $e_b$ cross in $D(x,G')$). The gate $U_k$ is implemented using a Toffoli gate with four inputs and outputs.
The control qubits are $\ket{e_a}$, $\ket{e_b}$, and $\ket{\chi_{a,b}}$.
The target qubit is set to $\ket{0}$. 
The gate is activated when $e_a=e_b=0$ and $\chi_{a,b}=1$. The circuit complexity, depth, and width of $U_{k}$ is $O(1)$.

\begin{figure}[tb!]
    \centering
        \includegraphics[page = 40, width = .2\textwidth]{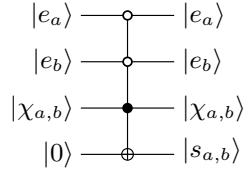}
        \caption{Gate $U_{k}$}
        \label{fig:U_kappa}
\end{figure}

The gate $U_{sk}$ works as follows. Consider that if two variables $e_a$ and $e_b$ are compared to determine whether the corresponding edges cross in $D(x,G')$, none of these variables can be compared with another variable at the same time. Therefore, we partition the pairs of such variables using \cref{le:partion-k} (with $k=2$ and $|X|=m$) into $p \in O(m)$ cross-independent sets $S_1,\dots,S_p$ each containing at most $\frac{m}{2}$ pairs. For $i=1,\dots,p$, the gate $U_{sk}$ executes in parallel a gate $U_k$, for each pair of variables ${e_a,e_b}$ in $S_i$ (together with the corresponding qubit $\chi_{a,b}$), in order to output the qubit $\ket{\neg e_a \wedge \neg e_b \wedge \chi_{a,b}}$. All the last output qubits of the $U_k$ gates in $S_i$ enter a Toffoli gate that outputs a qubit $\ket{sk_i}$ such that $sk_i=1$ if and only if all of them are equal to $\ket{0}$, i.e., there exist no two crossing edges among the pairs in $S_i$. In order to allow the reuse of the ancilla qubit, except for the qubit $\ket{sk_i}$, gate $U_{sk}$ executes in parallel a gate $U^{-1}_{k}$ for each pair in $S_i$. 
To check if all qubits $\ket{sk_i}$ are equal to $\ket{1}$, for $i=1,\dots,p$, we use a series of Toffoli gates $T_i$, each with three inputs and outputs.
The first Toffoli gate $T_1$ receives in input the qubits $\ket{sk_1}$, $\ket{sk_2}$, and a qubit set to $\ket{0}$, and outputs the qubit $\ket{sk_{1,2}} = \ket{sk_{1} \wedge sk_{2}}$. For $i=2,\dots,p$, the Toffoli gate $T_i$ receives in input the qubits $\ket{sk_{1,i-1}}$, $\ket{sk_i}$, and a qubit set to $\ket{0}$, and outputs the qubit $\ket{sk_{1,i}} = \ket{sk_{1,i-1} \wedge sk_{i}}$. 
The output qubit $\ket{sk_{1,p}}$ of the last Toffoli gate $T_p$ is the qubit $\ket{g(x)}$. The gate $U_{sk}$ has circuit complexity $O(m^2)$, depth $O(m)$, and width $O(m)$.

\begin{figure}[tb!]  
\centering
        \includegraphics[page = 54, width = .95\textwidth]{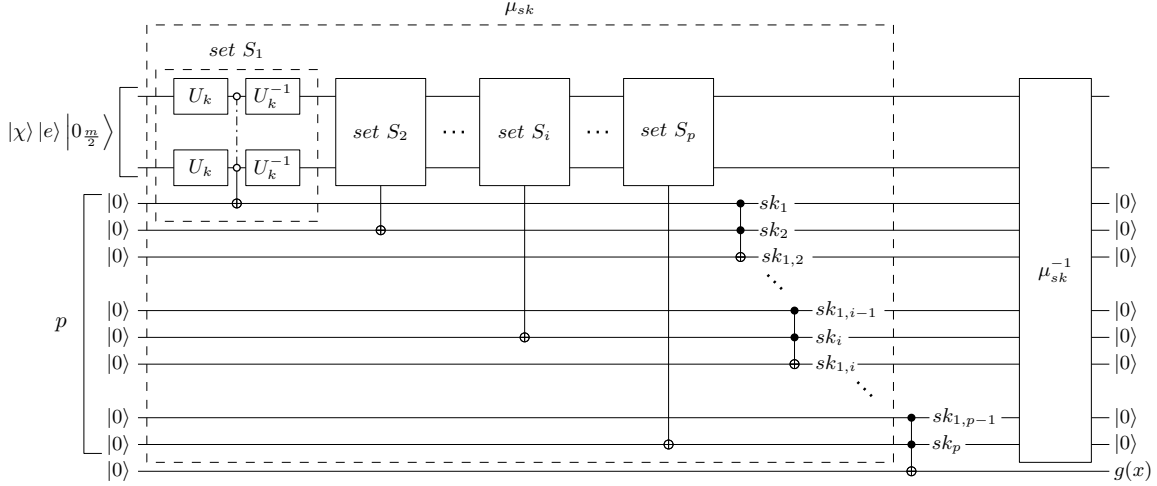}
        \caption{Gate $U_{sk}$.}
        \label{fig:U_SK}
\end{figure}

\paragraph{\sc final check.}

\begin{figure}[tb!]
    \centering
    \includegraphics[page = 52, width = .45\textwidth]{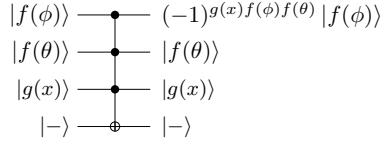}
    \caption{Gate $U_{fc}$.}
    \label{fig:final-check-s}
\end{figure}

The purpose of gate $U_{fc}$ is to check whether the current solution is admissible, i.e., whether the $2$-level drawing $D(x)$ and the set of indices $N(\theta)$ are both not degenerate and the $2$-level drawing $D(x,G')$ of $G'$ is planar. See~\cref{fig:final-check-s}. When provided with the input superposition $\ket{f(\phi)}\ket{f(\theta)}\ket{g(x)}\ket{-}$, the gate $U_{fc}$ produces the outputs superposition $(-1)^{g(x)f(\phi)f(\theta)}\ket{f(\phi)}\ket{f(\theta)}\ket{g(x)}\ket{-}$. Gate $U_{fc}$ exploits a Toffoli gate with four inputs and outputs. The control qubits are $\ket{f(\phi)}$, $\ket{f(\theta)}$, and $\ket{g(x)}$, and the target qubit is $\ket{-}$.
When at least one of $f(\phi)$, $f(\theta)$, and $g(x)$ are equal to $0$, the target qubit leaves unchanged. On the other hand, when $f(\phi)=f(\theta)=g(x)=1$, the target qubit is transformed into the qubit $-\ket{-}$. %
Gate $U_{fc}$ has $O(1)$ circuit complexity, depth, and width. 

\paragraph{The inverse circuits.} The purpose of circuits  $U^{-1}_{sk}$ and $U^{-1}_{\chi}$ is to restore the $h$ ancilla qubit to $\ket{0}$ so that they can be used in the subsequent steps of Grover's approach.

\paragraph{Correctness and complexity.}
For the correctness of \cref{le:gate-TLS}, observe that the gates $U_\chi$, $U_{sk}$, and $U_{fc}$ verify all the necessary conditions for which $D(x,G')$ is a $2$-level planar drawing of $G'$, under the assumption that $D(x)$ and $N(\theta)$ are not degenerate. Therefore, the sign of the output superposition of gate TLS, which is determined by the expression $(-1)^{g(x)f(\phi)f(\theta)}$, is defined as follows. It is
positive when either $D(x)$ or $N(\theta)$ are degenerate or $D(x)$ and $N(\theta)$ are not degenerate and the drawing $D(x,G')$ of $G'$ is not planar. It is negative when $D(x)$ and $N(\theta)$ are not degenerate and the $2$-level drawing $D(x,G')$ of $G'$ is planar. 
The bounds on the circuit complexity, depth, and width of gate TLS descend from those of gate $U_{sk}$.\qed

\subsection{Problem OPCM} We call OPCM the {\sc Solution Detector} circuit for problem  OPCM. Recall that, for the OPCM problem, we denote by $\rho$ the maximum number of crossings allowed in the sought $1$-page layout of $G$. Also, recall that we denote by $\Pi(x)$ the vertex order along the spine of a book layout of $G$ defined by the vertex order corresponding to $x$.

\begin{lemma}\label{le:gate-OPCM}
There exists a gate OPCM that, when provided with the input superposition 
$\ket{f(\phi)}\ket{x}\ket{0_{h}} \ket{-}$, where 
$h \in O(m^2)$, 
produces the output superposition 
$(-1)^{g(x,f(\phi))}\ket{f(\phi)}\ket{x}\ket{0_{h}}\ket{-}$, where $g(x)=1$ if $\Pi(x)$ is not degenerate and the $1$-page layout of $G$ defined by $\Pi(x)$ has at most $\rho$ crossings. OPCM has $O(n^8)$ circuit complexity, $O(n^6)$ depth, and $O(m^2)$ width. 
\end{lemma}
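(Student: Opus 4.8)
The plan is to follow the same four-stage pipeline used for TLCM in \cref{le:gate-TLCM}: a crossing finder $U_\chi$, a crossing counter $U_{cc}$, a crossing comparator $U_{c<}$, and a final check $U_{fc}$, followed by the inverses $U_{c<}^{-1}$, $U_{cc}^{-1}$, $U_\chi^{-1}$ that restore the $h$ ancillae. The counter, comparator, and final check can be taken essentially verbatim from the TLCM construction, so the only genuinely new ingredient is the crossing finder: in a $1$-page layout the crossing predicate is no longer the parity $x_{i,j}\oplus x_{k,\ell}$ of the $2$-level setting, but the interleaving condition on the four endpoints of the two edges (refer to the conditions in \cref{fig:book-crossing}).

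First I would build a \emph{book cross finder} $U_\chi$ that maps $\ket{x}\ket{0_{\binom{m}{2}}}$ to $\ket{x}\ket{\chi}$, where $\chi_{a,b}=1$ iff the edges $e_a=(v_i,v_k)$ and $e_b=(v_j,v_\ell)$ cross along the spine order $\Pi(x)$. The key observation is that two edges with four distinct endpoints cross iff exactly one endpoint of $e_b$ lies strictly between the endpoints of $e_a$, i.e. $\chi_{a,b}=\mathrm{betw}(j;i,k)\oplus\mathrm{betw}(\ell;i,k)$, where $\mathrm{betw}(j;i,k)=1$ iff $v_j$ lies between $v_i$ and $v_k$ on the spine. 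Crucially, $\mathrm{betw}(j;i,k)$ depends only on the two stored precedences of the index pairs $\{i,j\}$ and $\{k,j\}$, so each bit $\chi_{a,b}$ is a fixed Boolean function of exactly the \emph{four} precedence qubits $x_{\{i,j\}},x_{\{k,j\}},x_{\{i,\ell\}},x_{\{k,\ell\}}$ (complementing a qubit whenever its pair is stored with the opposite orientation, since only the entries of $X$ above the diagonal are defined). Hence each $\chi_{a,b}$ is computed by a constant-size reversible gate $U_{bcr}$, the book analogue of the gate $U_{cr}$ of \cref{le:gate-TLCM}; pairs of adjacent edges, which never cross, are simply skipped.

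For the remaining stages I would reuse the TLCM components. The counter $U_{cc}$ is the instance of $U_{1s}$ from \cref{le:gate-counter} applied to the $\binom{m}{2}$ qubits of $\ket{\chi}$, obtaining $\sigma(x)=\sum_{a<b}\chi_{a,b}$ on $\log m+\log(m-1)$ qubits; the comparator $U_{c<}$ is the instance of $U_{<}$ of \cref{le:gate-less} that sets $\ket{g(x)}$ to $1$ exactly when $\sigma(x)<\rho$; and $U_{fc}$ is the Toffoli controlled by $\ket{f(\phi)}$ and $\ket{g(x)}$ that imprints the phase on $\ket{-}$. Correctness then follows as in \cref{le:gate-TLCM}: for a non-degenerate $\Pi(x)$ the precedences form a total order, so the betweenness formulas are faithful and $g(x)=1$ iff the layout has at most $\rho$ crossings, while for degenerate $\Pi(x)$ the guard $f(\phi)$ suppresses the phase flip, so no incorrect phase is imprinted. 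The inverses zero all ancillae, and the accounting of the crossing register ($O(m^2)$ qubits) together with the counter ancillae ($O(m^2)$ qubits, by \cref{le:gate-counter}) gives $h\in O(m^2)$.

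The main obstacle, and the source of the stated bounds, is the parallel scheduling of $U_\chi$. Two gates $U_{bcr}$ can be applied in the same layer only if they do not read a common precedence qubit, so I would partition the edge pairs by the four precedence qubits each one touches, applying \cref{le:partion-k} with $k=4$ and ground set the $\binom{n}{2}=O(n^2)$ precedence variables. This yields $O(\sqrt{4^4}\cdot(n^2)^{3})=O(n^6)$ cross-independent classes, each of size at most $\binom{n}{2}/4=O(n^2)$, which I would process one layer at a time. Summing the constant-size $U_{bcr}$ gates over all classes bounds the circuit complexity by $O(n^6)\cdot O(n^2)=O(n^8)$ and the depth by $O(n^6)$, while the width is dominated by the $O(m^2)$ crossing qubits carried into $U_{cc}$; the counter and comparator are absorbed by these estimates via \cref{le:gate-counter,le:gate-less}. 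The delicate points I would verify are that each betweenness formula selects the correctly oriented stored qubit, that adjacent-edge pairs are excluded so that $\mathrm{betw}$ is never evaluated at an endpoint, and that within every class no precedence qubit serves as a control of two gates in the same layer.
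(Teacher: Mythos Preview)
Your proposal is correct and follows essentially the same approach as the paper: the paper also builds OPCM from an \textsc{op-cross finder} $U_{\chi_p}$ (their analogue of your $U_\chi$), followed by the TLCM components $U_{cc}$, $U_{c<}$, $U_{fc}$ and their inverses, and it parallelises $U_{\chi_p}$ via \cref{le:partion-k} with $k=4$ and $|X|=\binom{n}{2}$ to obtain the same $O(n^6)$ depth and $O(n^8)$ circuit-complexity bounds. The only cosmetic difference is the constant-size crossing gadget: the paper's $U_\lambda$ enumerates the eight interleaving configurations of \cref{fig:book-crossing} with eight Toffoli gates on the same four cross-pair precedence qubits, whereas you compute the equivalent predicate via the betweenness/XOR formula; both are $O(1)$-size circuits on the same inputs, so the analyses coincide.
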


\paragraph{\bf Proof of \cref{le:gate-OPCM}.} 
Gate OPCM executes four gates: {\sc OP-cross finder} $U_{\chi_p}$, {\sc cross counter} $U_{cc}$, {\sc cross comparator} $U_{c<}$, and {\sc final check} $U_{fc}$, followed by the inverse gate $U_{cc}^{-1}$, $U_{c<}^{-1}$, and $U_{\chi_p}^{-1}$. Refer to \cref{fig:OPCM-oracle}.

\begin{figure}[tb!]
    \centering
    \includegraphics[page = 43, width = 0.9\textwidth]{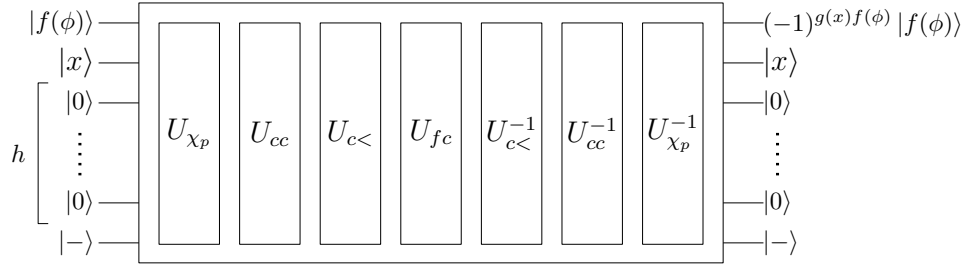}
    \caption{Oracle OPCM.}
    \label{fig:OPCM-oracle}
\end{figure}

\paragraph{\sc op-cross finder.}

The purpose of $U_{\chi_p}$ is to compute the crossings in the $1$-page layout of $G$ defined by $\Pi(x)$, determined by the vertex order corresponding to $x$; refer to \cref{fig:U_ochi_lambda}.  When provided with the input superposition $\ket{x}\ket{0_k}$, where $k=\frac{m(m-1)}{2}$, the gate $U_{p\chi}$ produces the output superposition $\ket{x}\ket{\chi}$.

The gate $U_{\chi_p}$ exploits the auxiliary gate $U_{\lambda}$, whose purpose is to check if two edges cross in the $1$-page layout of $G$ defined by $\Pi(x)$; refer to \cref{fig:U_lambda}. When provided with the input superposition $\ket{x_{i,k}}\ket{x_{k,j}}\ket{x_{j,n}}\ket{x_{i,n}}\ket{0}$, the gate $U_\lambda$ produces the output superposition $\ket{x_{i,k}}\ket{x_{k,j}}\ket{x_{j,n}}\ket{x_{i,n}}\ket{\chi_{a,b}}$, where $e_a = (v_i,v_j)$, $e_b=(v_k,v_n)$, and $\chi_{a,b}$ equals $1$ if and only if $e_a$ and $e_b$ cross in the $1$-page layout of $G$ defined by $\Pi(x)$; refer to \cref{fig:book-crossing} and to the expression $\chi_{a,b}$ in \cref{sse:bool-embeddings}. It is implemented using eight Toffoli gates, each with four inputs and outputs. 
In the following, we assume that $i < k < j < \ell$. 
The first is activated when $x_{i,\ell}=x_{j,k}=1$ and $x_{j,\ell}=0$
(see \cref{fig:book-crossing}, top row, first column). 
The second is activated when $x_{i,k}=x_{j,\ell}=1$ and $x_{j,k}=0$
(see \cref{fig:book-crossing}, top row, second column).
The third is activated when $x_{j,\ell}=x_{i,k}=1$ and $x_{i,\ell}=0$ 
(see \cref{fig:book-crossing}, top row, third column).
The fourth is activated when $x_{j,k}=x_{i,\ell}=1$ and $x_{i,k}=0$
(see \cref{fig:book-crossing}, top row, fourth column).
The fifth is activated when $x_{i,k}=1$ and $x_{i,\ell}=x_{j,k}=0$
(see \cref{fig:book-crossing}, second row, first column). 
The sixth is activated when $x_{j,k}=1$ and $x_{j,\ell}=x_{i,k}=0$
(see \cref{fig:book-crossing}, second row, second column).
The seventh is activated when $x_{i,l}=1$ and $x_{i,k}=x_{j,\ell}=0$
(see \cref{fig:book-crossing}, second row, third column).
The eighth is activated when  $x_{j,\ell}=1$ and $x_{j,k}=x_{i,l}=0$
(see \cref{fig:book-crossing}, second row, fourth column).

The gate $U_{\chi_p}$ works as follows. Consider that if four variables $x_{i,k}, x_{k,j}, x_{j,n}$ and $ x_{i,n}$ are compared to determine whether the edges $(v_i, v_j)$ and $(v_k,v_n)$ cross, none of these variables can be compared with another variable at the same time. Therefore, we partition the pairs of such variables using \cref{le:partion-k} (with $k=4$ and $|X| = \frac{n(n-1)}{2}$) into $r \in O(n^6)$ cross-independent sets $s_1,\dots, S_r$ each containing at most $\frac{n(n-1)}{8}$ pairs. For $i=1,\dots,r$, the gate $U_{\chi_p}$ executes in parallel a $U_\lambda$ gate, for each quartet $(x_{i,k}, x_{k,j}, x_{j,n}$,$ x_{i,n})$ in $S_i$ (refer to \cref{fig:U_pchi}), in order to outputs the qubit $\ket{\chi_{a,b}}$. $U_{\chi_p}$ has circuit complexity $O(n^8)$, depth complexity $O(n^6)$ and width complexity $O(n^2)$.

\begin{figure}[tb!]
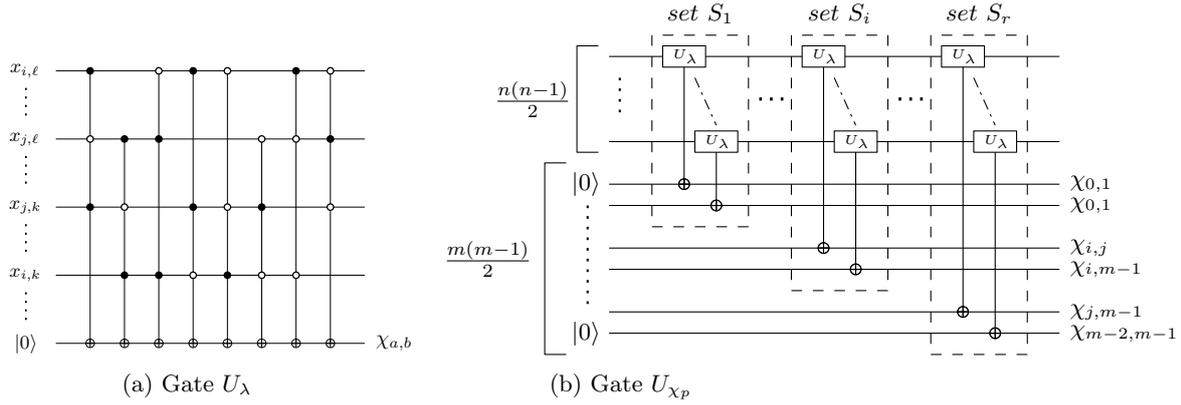

        \begin{subfigure}[b]{0.30\textwidth}  
        \includegraphics[page = 16, scale=.8]{figures/Gate-Order-Initializer.pdf}
        \caption{Gate $U_{\lambda}$}
        \label{fig:U_lambda}
        \end{subfigure}
        \hspace{7mm}
        \begin{subfigure}[b]{0.30\textwidth}   
        \includegraphics[page = 47]{figures/Gate-Order-Initializer.pdf}
        \caption{Gate $U_{\chi_p}$}
        \label{fig:U_pchi}
        \end{subfigure}
        \caption{The gate $U_{\lambda}$ (left) and gate $U_{\chi_p}$ (right).}
        \label{fig:U_ochi_lambda}
\end{figure}

\paragraph{\sc cross counter.}

The purpose of gate $U_{cc}$, as mention earlier, is count the total number of crossings in the $1$-page layout of $G$ defined by $\Pi(x)$; refer to \cref{fig:U_s1}. Recall that, when provided with the input superposition $\ket{\chi}\ket{0_h}\ket{0_k}$, where $h=\log m + \log(m-1)$ and $k=2m(m-1) -2(\log m + \log(m-1))-1$, the gate $U_{cc}$ produces the output superposition $\ket{\chi}\ket{\sigma(x)}\ket{0_h}$.

\paragraph{\sc cross comparator.}

The purpose of gate $U_{c<}$, as mention earlier, is to verify if the total number of crossings $\sigma(x)$ in the $1$-page layout of $G$ defined by $\Pi(x)$ compute by the gate $U_{cc}$ is less than the allowed number of crossings $\rho$; refer to \cref{fig:less}. Recall that, when provided with the input superposition $\ket{\sigma(x)}\ket{\rho}\ket{0_h}\ket{0}$, where $h=\log m + \log(m-1)$, the gate $U_{c<}$ produces the output superposition $\ket{\sigma(x)}\ket{\rho}\ket{0_h}\ket{g(x)}$, where $g(x)=1$ if $\Pi(x)$ is not degenerate and $\sigma(x)<\rho$.

\paragraph{\sc final check.}
The purpose of gate $U_{fc}$ is to check whether the current solution is admissible, i.e., whether $\Pi(x)$ is not degenerate and the $1$-page layout of $G$ defined by $\Pi(x)$ has at most $\rho$ crossings. Refer to \cref{fig:final-check}. When provided with the input superposition $\ket{f(\phi)}\ket{g(x)}\ket{-}$, the gate $U_{fc}$ produces the outputs superposition $(-1)^{g(x)f(\phi)}\ket{f(\phi)}\ket{g(x)}\ket{-}$. $U_{fc}$ exploits a Toffoli gate with three inputs and outputs. The control qubits are $\ket{f(\phi)}$ and $\ket{g(x)}$, and the target qubit is $\ket{-}$.
When at least one of $f(\phi)$ and $g(x)$ are equal to $0$, the target qubit leaves unchanged. On the other hand, when $f(\phi)=g(x)=1$, the target qubit is transformed into the qubit $-\ket{-}$. %
Gate $U_{fc}$ has $O(1)$ circuit complexity, depth, and width. 

\paragraph{The inverse circuits.} The purpose of circuits $U^{-1}_{c<}$, $U^{-1}_{cc}$, and $U^{-1}_{\chi}$ is to restore the $h$ ancilla qubit to $\ket{0}$ so that they can be used in the subsequent steps of Grover's approach.

\paragraph{Correctness and complexity.}
For the correctness of \cref{le:gate-OPCM}, observe that the gates $U_\chi$, $U_{cc}$, $U_{c<}$, and $U_{fc}$ verify all the necessary conditions for which the $1$-page layout of $G$ defined by $\Pi(x)$ has at most $\rho$ crossings, under the assumption that $\Pi(x)$ is not degenerate. Therefore, the sign of the output superposition of gate OPCM, which is determined by the expression $(-1)^{g(x)f(\phi)}$, is positive when either $\Pi(x)$ is degenerate or $\Pi(x)$ is not degenerate and the number of crossings $\sigma(x)$ in the $1$-page layout of $G$ defined by $\Pi(x)$ is larger than $\rho$, and it is negative only if $\Pi(x)$ is not degenerate and the number of crossings $\sigma(x)$ in the $1$-page layout of $G$ defined by $\Pi(x)$ is smaller than $\rho$. 
The bound on the circuit complexity descends from the circuit complexity of the gate $U_{\chi_p}$, the bound on the depth descends from the depth of the gate $U_{\chi_p}$, and the bound on the width descends from the width of $U_{cc}$.\qed

\subsection{Problem BT} We call BT the {\sc Solution Detector} circuit for problem  BT. Recall that, for the BT problem, we denote by $\tau$ the number of pages allowed in the sought book layout drawing of $G$.

Recall that, during the computation, we manage the superposition $\ket{\Psi} = \sum_{\psi \in {\mathbb B}^{m \log\tau}} c_\psi \ket{\psi}$ whose purpose is to represent a coloring of the edges of $G$ with colors in the set $[\tau]$. Specifically, consider any basis state $\psi$ that appears in $\ket{\Psi}$. We denote by $P(\psi)$ the page assignment of the edges of $G$ to $\tau$ pages in which, for $i=0,\dots,m-1$, the edge $e_i$ is assigned to the page $\psi[i]$.

\begin{lemma}\label{le:gate-BT}
There exists a gate BT that, when provided with the input superposition 
$\ket{f(\phi)}\ket{\psi}\ket{x}\ket{0_{h}} \ket{-}$, where 
$h \in O(m^2)$,
produces the output superposition 
$(-1)^{g(x,\psi)f(\phi)}\ket{f(\phi)}\ket{\psi}\ket{x}\ket{0_{h}}\ket{-}$, where $g(x,\psi) = 1$ if $\Pi(x)$ is not degenerate and there exists a book layout of $G$ on $\tau$ pages in which the vertex order is $\Pi(x)$ and the page assignment is $P(\psi)$. Gate BT has $O(n^8)$ circuit complexity, $O(n^6)$ depth, and $O(m)$ width. 
\end{lemma}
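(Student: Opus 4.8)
The plan is to reduce the existence of a valid $\tau$-page book embedding with the prescribed vertex order and page assignment to a purely local condition on pairs of edges: the pair $(\Pi(x),P(\psi))$ is a $\tau$-page book embedding of $G$ if and only if no two edges lying on the same page cross in the $1$-page layout induced by $\Pi(x)$; equivalently, for every pair of edges $e_a,e_b$ that cross it holds that $\psi[a]\neq\psi[b]$. Following the template of the previous oracles, I would build gate BT as a pipeline consisting of the \textsc{op-cross finder} $U_{\chi_p}$ of \cref{le:gate-OPCM}, a new \textsc{page-conflict tester} $U_{pt}$, and the \textsc{final check} $U_{fc}$ of \cref{le:gate-TLCM}, followed by the inverse gates $U_{pt}^{-1}$ and $U_{\chi_p}^{-1}$ that restore the ancillas to $\ket{0}$. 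First I would feed $\ket{x}\ket{0_k}$, with $k=\frac{m(m-1)}{2}$, into $U_{\chi_p}$ to obtain $\ket{x}\ket{\chi}$, where $\chi_{a,b}=1$ iff $e_a$ and $e_b$ cross in the book layout defined by $\Pi(x)$ (assuming $\Pi(x)$ is not degenerate); this is exactly the crossing information already produced for OPCM, so no new work is needed here.

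The core new ingredient is $U_{pt}$, whose task is to verify, using $\ket{\chi}$ and $\ket{\psi}$, that no crossing pair shares a page. For a single pair $(a,b)$ I would use an auxiliary gate that combines an instance of the gate $U_{=}$ of \cref{le:gate-equal} (with $t=\tau$), comparing the $\log\tau$-bit labels $\psi[a]$ and $\psi[b]$, with a Toffoli gate, producing a \emph{conflict} qubit equal to $\chi_{a,b}\wedge(\psi[a]=\psi[b])$, i.e.\ equal to $1$ precisely when $e_a,e_b$ both cross and are assigned to the same page. Since two pairs sharing an edge read a common block of $\ket{\psi}$, I would partition the $\binom{m}{2}$ edge pairs via \cref{le:partion-k} (with $k=2$ and $|X|=m$) into $O(m)$ cross-independent sets, each of size at most $\frac{m}{2}$, and process them one set at a time. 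For each set, the conflict qubits of its pairs would feed a multi-input Toffoli gate producing a per-set flag equal to $1$ iff all conflict qubits are $0$; the per-pair ancillas are then reclaimed by the inverse auxiliary gates. Finally the per-set flags are combined in a cascade of Toffoli gates into a single qubit $\ket{g(x,\psi)}$ that is $1$ iff no crossing pair shares a page.

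For the phase kickback, $U_{fc}$ would take $\ket{f(\phi)}\ket{g(x,\psi)}\ket{-}$ and flip the phase exactly when $f(\phi)=g(x,\psi)=1$, so that the output sign $(-1)^{g(x,\psi)f(\phi)}$ is negative precisely when $\Pi(x)$ is a genuine $n$-permutation (certified by $f(\phi)$) and $(\Pi(x),P(\psi))$ is a valid $\tau$-page embedding, and positive otherwise (degenerate order, or a same-page crossing), as required. The inverse gates then restore every ancilla, so that the total ancilla count $h$ is $\frac{m(m-1)}{2}$ for $\ket{\chi}$ plus the $O(m)$ flags and the $O(m\log\tau)$ comparison ancillas, which is $O(m^2)$, matching the statement.

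For the complexity, the circuit complexity $O(n^8)$ and the depth $O(n^6)$ would be inherited directly from $U_{\chi_p}$: the page-conflict tester contributes only $O(m^2\log\tau)$ elementary gates and polynomial depth, both of lower order once one uses $m=O(n^2)$. The hard part will be the width accounting: $U_{pt}$ naively runs $O(m)$ equality tests of width $O(\log\tau)$ in parallel, so to meet the $O(m)$ width bound I would exploit the generous $O(n^6)$ depth slack to limit the parallelism, scheduling only $O(m/\log\tau)$ comparisons per layer and thereby trading a modest, still lower-order, increase in depth for the desired width. The remaining routine points are to confirm that $\ket{\chi}$ is computed by $U_{\chi_p}$ before $U_{pt}$ reads it and correctly uncomputed afterward, and that the local characterization above is indeed both necessary and sufficient for page validity.
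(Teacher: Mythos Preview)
Your proposal is correct and follows essentially the same approach as the paper: the paper's gate BT is exactly the pipeline $U_{\chi_p}$, a \textsc{color tester} $U_\beta$ (your $U_{pt}$) built from per-pair gates $U_{=\lambda}$ that compute $\chi_{a,b}\wedge(\psi[a]=\psi[b])$ via $U_=$ plus a Toffoli, organized through \cref{le:partion-k} with $k=2$ and $|X|=m$, and then $U_{fc}$, with the inverses restoring the ancillas. One remark: your caution about the width of $U_{pt}$ is actually more careful than the paper, which simply asserts width $O(m)$ for $U_\beta$ without discussing the $\log\tau$ factor from the parallel equality tests; your proposed trade-off of parallelism against the ample $O(n^6)$ depth slack is a clean way to make the $O(m)$ width bound rigorous.
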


\paragraph{\bf Proof of \cref{le:gate-BT}.} 
Gate BT uses two gates: {\sc OP-cross finder} $U_{\chi_p}$, {\sc color tester} $U_\beta$, and {\sc final check} $U_{fc}$, followed by the inverse gates $U_{\beta}^{-1}$ and $U_{\chi_p}^{-1}$. Refer to \cref{fig:BT-oracle}.

\begin{figure}[tb!]
    \centering
    \includegraphics[page = 31, width = .7\textwidth]{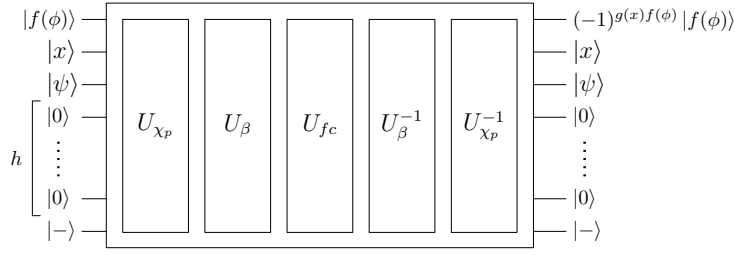}
    \caption{BT Oracle Pipeline.}
    \label{fig:BT-oracle}
\end{figure}

\paragraph{\sc op-cross finder.} 
For the definition of gate $U_{\chi_p}$, refer to the proof of~\cref{le:gate-OPCM}. Recall that the purpose of $U_{\chi_p}$ is to compute the crossings in the $1$-page layout of $G$ defined by $\Pi(x)$, determined by the vertex order corresponding to $x$. Also recall that, when provided with the input superposition $\ket{x}\ket{0_k}$, the gate $U_{p\chi}$ produces the output superposition $\ket{x}\ket{\chi}$.

\paragraph{\sc color tester.}
Consider the book layout $D(x,\psi)$ of $G$ defined by the vertex order $\Pi(x)$ and the page assignment is $P(\psi)$.
The purpose of gate $U_{\beta}$ is to verify if $D(x,\psi)$ is a book embedding of $G$ on $\tau$ pages, provided that $\Pi(x)$ is not degenerate; refer to~\cref{fig:U_beta}. When provided with the input superposition $\ket{\psi}\ket{\chi}\ket{0_b}\ket{0_r}\ket{0}$, where $b = \frac{m}{2}(2+\log \tau)$ and $r=m-1$, the gate $U_\beta$ produces the output superposition $\ket{\psi}\ket{\chi}\ket{0_b}\ket{0_r}\ket{g(x,\psi)}$, where $g(x,\psi)=1$ if $\Pi(x)$ is not degenerate and $D(x,\psi)$ is a book embedding of $G$ on $\tau$ pages.

The gate $U_\beta$ exploits the auxiliary gate $U_{=\lambda}$, whose purpose is to check if two edges that cross have the same color; refer to~\cref{fig:U_=_lambda}. When provided with the input superposition 
$$\ket{\psi[a][0]}\dots\ket{\psi[a][\log(\tau)-1]}\ket{\psi[b][0]}\dots\ket{\psi[a][\log(\tau)-1]}\ket{0_{\log \tau}}\ket{0}\ket{\chi_{a,b}}\ket{0},$$ the gate produces the output superposition $$\ket{\psi[a][0]}\dots\ket{\psi[a][\log(\tau)-1]}\ket{\psi[b][0]}\dots\ket{\psi[a][\log(\tau)-1]}\ket{0_{\log \tau}}\ket{\psi[a]=\psi[b]}\ket{\chi_{a,b}}\ket{\chi_{a,b}\wedge (\psi[a]=\psi[b])}.$$ Gate $U_{=\lambda}$ exploits the auxiliary gates $U_=$ to compare $\psi[a]$ and $\psi[b]$, and a Toffoli gate with two inputs and outputs to verify if edges $e_a$ and $e_b$ cross and have the same color. By \cref{le:gate-equal}, gate $U_{=\lambda}$ has $O(\log \tau)$ circuit complexity, depth, and width.

The gate $U_\beta$ works as follows. Consider that if two variables $\psi[a]$ and $\psi[b]$ are compared to determine whether $e_a$ and $e_b$ have been assigned the same color, none of these variables can be compared with another variable at the same time. Therefore, we partition the pairs of such variables using \cref{le:partion-k} (with $k=2$ and $|X| = m$) into $r \in O(m)$ cross-independent sets $S_1,\dots,S_r$ each containing at most $\frac{m}{2}$ pairs. For $i=1,\dots,r$, the gate $U_{\beta}$ executes in parallel a gate $U_{=\lambda}$, for each pair of variables $\{\psi[a],\psi[b]\} \in S_i$ (together with their corresponding qubit $\chi_{a,b}$), in order to output the qubit $\ket{\chi_{a,b} \wedge (\psi[a]=\psi[b])}$. All the last output qubits of the $U_{=\lambda}$ gates for $S_i$ enter a Toffoli gate that outputs a qubit $\ket{res_i}$ such that $res_i = 1$ if and only if all of them are equal to $\ket{0}$, i.e., it does not exist two crossing edges with the same color (among the pairs in $S_i$). In order to allow the reuse of the ancilla qubits, except for the qubit $\ket{res_i}$, gate $U_\beta$ executes in parallel a gate $U^{-1}_{=\lambda}$ for each pair in $S_i$.
All the qubits $\ket{res_i}$ enter a Toffoli gate that outputs a qubit $\ket{g(x,\psi)}$ such that $g(x,\psi)=1$ if and only if all of them are equal to $\ket{1}$, i.e., there exist no two edges of $G$ with the same color that cross in $D(x,\psi)$.  Gate $U_\beta$ has circuit complexity $O(m^2\log \tau)$, depth $O(m\log \tau)$, and width $O(m)$.

\begin{figure}[tb!]
    \centering
    \includegraphics[page = 24, width = .5\textwidth]{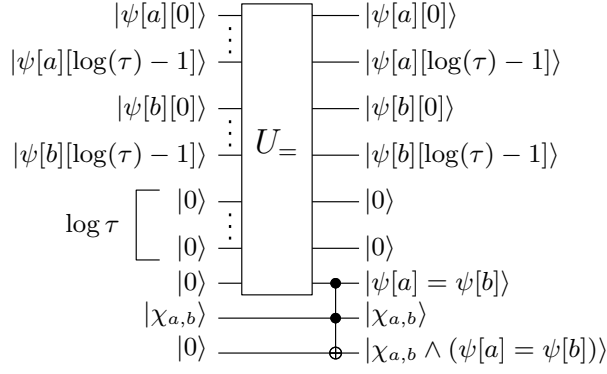}
    \caption{Gate $U_{=\lambda}$.}
    \label{fig:U_=_lambda}
\end{figure}

\begin{figure}[tb!]
    \centering
    \includegraphics[page = 25, width = \textwidth]{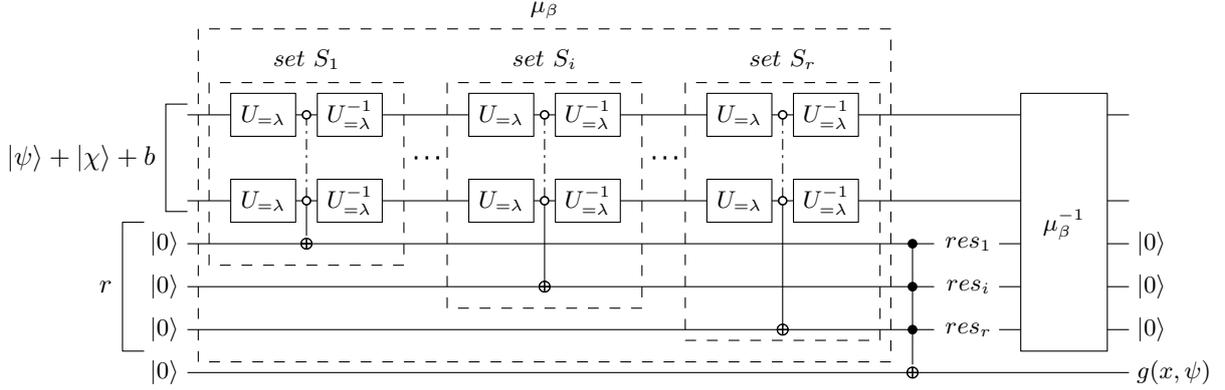}
    \caption{Gate $U_{\beta}$.}
    \label{fig:U_beta}
\end{figure}

\paragraph{\sc final check.}
The purpose of gate $U_{fc}$ is to check whether the current solution is admissible, i.e., whether $D(x,\psi)$ is a book embedding of $G$ on $\tau$ pages. Refer to \cref{fig:U_final-check-bt}. When provided with the input superposition $\ket{f(\phi)}\ket{g(x,\psi)}\ket{-}$, the gate $U_{fc}$ produces the output superposition $(-1)^{g(x,\psi)f(\phi)}\ket{f(\phi)}\ket{g(x,\psi)}\ket{-}$. Gate $U_{fc}$ exploits a Toffoli gate with three inputs and outputs. The control qubits are $\ket{f(\phi)}$ and $\ket{g(x,\psi)}$, and the target qubit is $\ket{-}$.
When at least one of $f(\phi)$ and $g(x,\psi)$ are equal to $0$, the target qubit leaves unchanged. On the other hand, when $f(\phi)=g(x,\psi)=1$, the target qubit is transformed into the qubit $-\ket{-}$. %
Gate $U_{fc}$ has $O(1)$ circuit complexity, depth, and width. 

\begin{figure}[tb!]
    \centering
    \includegraphics[page = 57, width = 0.45\textwidth]{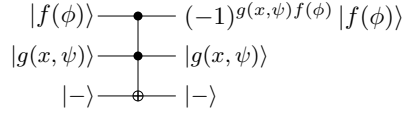}
    \caption{Gate $U_{fc}$.}
    \label{fig:U_final-check-bt}
\end{figure}
\paragraph{The inverse circuits.} The purpose of circuits $U^{-1}_{\beta}$, and $U^{-1}_{\chi_p}$ is to restore the $h$ ancilla qubit to $\ket{0}$ so that they can be used in the subsequent steps of Grover's approach.

\paragraph{Correctness and complexity.}
For the correctness of \cref{le:gate-BT}, observe that the gates $U_{\chi_p}$ and $U_{\beta}$ verify all the necessary conditions for which $D(x,\psi)$ is a book embedding of $G$ with $\tau$ pages, under the assumption that $\Pi(x)$ is not degenerate. Therefore, the sign of the output superposition of gate BT, which is determined by the expression $(-1)^{g(x,\psi)f(\phi)}$, is positive when either $\Pi(x)$ is degenerate or $\Pi(x)$ is not degenerate and the book layout $D(x,\psi)$ is not a book embedding of $G$ with $\tau$ pages, and it is negative only if $\Pi(x)$ is not degenerate and $D(x,\psi)$ is a book embedding of $G$ with $\tau$ pages. 
The bound on the circuit complexity descends from the circuit complexity of gate $U_{\chi_p}$, the bound on the depth descends from the depth of gate $U_{\chi_p}$, and the bound on the width descends from the width of~gate~$U_{\beta}$.\qed

\subsection{Problem BS} 
We call BS the {\sc Solution Detector} circuit for problem  BS.

\begin{lemma}\label{le:gate-BS}
There exists a gate BS that, provided with the input superposition 
$\ket{f(\phi)}\ket {f(\theta)} \ket{x} \ket{e}\ket{0_{h}} \ket{-}$, where $h \in O(m^2)$, produces the output superposition 
$(-1)^{g(x)f(\phi)f(\theta)}\ket{f(\phi)} \ket {f(\theta)} \ket{x} \ket{e} \ket{0_{h}}\ket{-}$, such that, if $\Pi(x)$ and $N(\theta)$ are not degenerate, then $g(x) = 1$ if and only if the $1$-page layout of $G'$ determined by $\Pi(x)$ is a $1$-page book embedding, where $G'=(V, E \setminus K(\theta))$.
Gate BS has 
$O(n^8)$ circuit complexity, $O(n^6)$ depth, and $O(m)$ width.
\end{lemma}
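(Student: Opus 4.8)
The plan is to build gate BS as the book-embedding analogue of the TLS oracle of \cref{le:gate-TLS}: the only structural change is to replace the $2$-level cross finder $U_\chi$ with the {\sc OP-cross finder} $U_{\chi_p}$ introduced in the proof of \cref{le:gate-OPCM}. Concretely, gate BS will execute, in sequence, $U_{\chi_p}$, a {\sc skewness cross tester} $U_{sk}$ identical to the one in \cref{le:gate-TLS}, and a {\sc final check} $U_{fc}$, followed by the inverse gates $U_{sk}^{-1}$ and $U_{\chi_p}^{-1}$ that restore the ancillae to $\ket{0}$.

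First I would apply $U_{\chi_p}$ to $\ket{x}\ket{0_{m(m-1)/2}}$ to obtain $\ket{x}\ket{\chi}$, where $\chi_{a,b}=1$ iff $e_a$ and $e_b$ cross in the $1$-page layout of $G$ determined by $\Pi(x)$. The crucial observation is that this writes the book-layout crossing information into $\ket{\chi}$ in exactly the same matrix format used by the $2$-level oracle, so the remainder of the TLS construction transfers verbatim. I would then pass $\ket{\chi}$ together with the edge-selection indicator $\ket{e}$ (recall $e_i=1$ iff $e_i\in K(\theta)$) into $U_{sk}$: using the gate $U_k$ to compute $\neg e_a\wedge\neg e_b\wedge\chi_{a,b}$ on each pair, partitioning the pairs into $O(m)$ cross-independent sets via \cref{le:partion-k} (applied with set size $2$ and ground set of size $m$), and aggregating through a cascade of Toffoli gates, $U_{sk}$ produces a qubit $\ket{g(x)}$ with $g(x)=1$ iff no two surviving edges of $G'=(V,E\setminus K(\theta))$ cross — equivalently, iff the $1$-page layout of $G'$ is a $1$-page book embedding.

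For correctness I would argue, as in \cref{le:gate-TLS}, that when $\Pi(x)$ and $N(\theta)$ are non-degenerate the flag $g(x)$ certifies outerplanarity of $G'$ under $\Pi(x)$, and that $U_{fc}$ — a single Toffoli gate with control qubits $\ket{f(\phi)}$, $\ket{f(\theta)}$, $\ket{g(x)}$ and target $\ket{-}$ — inverts the phase exactly on the valid solutions, producing the sign $(-1)^{g(x)f(\phi)f(\theta)}$ and leaving the state otherwise unchanged. The ancilla count is $h\in O(m^2)$, dominated by the $\frac{m(m-1)}{2}$ qubits holding $\ket{\chi}$. For the resource bounds, the circuit complexity $O(n^8)$ and depth $O(n^6)$ are inherited from $U_{\chi_p}$, which dominates the $O(m^2)$ complexity and $O(m)$ depth of $U_{sk}$ (note $m\in O(n^2)$), while, exactly as in the analysis for BT, the width $O(m)$ descends from $U_{sk}$; the inverse gates do not alter these asymptotics.

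The main obstacle I anticipate is not in the skewness test itself — which is literally the TLS tester — but in verifying that the book-layout crossing predicate emitted by $U_{\chi_p}$ is interface-compatible with $U_{sk}$, i.e., that $U_{\chi_p}$ indeed populates $\ket{\chi}$ with the same semantics ($\chi_{a,b}=1$ iff $e_a,e_b\in E$ and cross) that $U_{sk}$ expects. Once this compatibility is confirmed, all remaining steps reduce to invoking the already-established component lemmas and bookkeeping the ancillae.
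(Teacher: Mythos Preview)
Your proposal is correct and follows essentially the same approach as the paper: gate BS is assembled from $U_{\chi_p}$, the {\sc skewness cross tester} $U_{sk}$ of \cref{le:gate-TLS}, and the {\sc final check} $U_{fc}$, with inverses restoring the ancillae; the complexity and depth bounds are inherited from $U_{\chi_p}$ and the width from $U_{sk}$. Your remark on the interface compatibility of $\ket{\chi}$ between $U_{\chi_p}$ and $U_{sk}$ is exactly the point that makes the TLS machinery reusable, and the paper treats it the same way.
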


\paragraph{\bf Proof of \cref{le:gate-BS}.} 
Gate BS uses three gates: {\sc OP-cross finder} $U_{\chi_p}$, {\sc Skewness cross tester} $U_{sk}$, and {\sc Final check} $U_{fc}$, followed by the inverse gates $U^{-1}_{sk}$ and $U^{-1}_{\chi_p}$. Refer to \cref{fig:BS-oracle}.

\begin{figure}[tb!]
    \centering
    \includegraphics[page = 42, width = .8\textwidth]{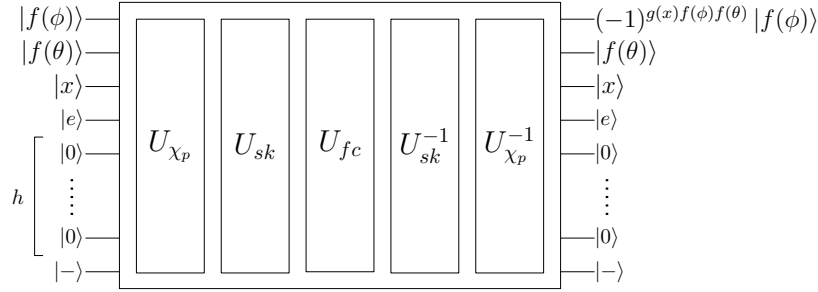}
    \caption{Oracle BS.}
    \label{fig:BS-oracle}
\end{figure}

\paragraph{\sc OP-cross finder.}
For the definition of gate $U_{\chi_p}$, refer to the proof of~\cref{le:gate-OPCM}. Recall that the purpose of $U_{\chi_p}$ is to compute the crossings in the $1$-page layout of $G$ defined by $\Pi(x)$, determined by the vertex order corresponding to $x$. Also recall that, when provided with the input superposition $\ket{x}\ket{0_k}$, the gate $U_{\chi_p}$ produces the output superposition $\ket{x}\ket{\chi}$.

\paragraph{\sc Skewness cross tester.}

For the definition of $U_{sk}$, refer to the proof of~\cref{le:gate-TLS}.
Consider the subgraph $G'$ of $G$ obtained by removing from $G$ all the edges in $K(\theta)$.
The purpose of gate $U_{sk}$ is to determine which of the crossings stored in $\ket{\chi}$ involve pairs of edges that are both absent from $\ket{e}$. In fact, all the edges not in $\ket{e}$ form the edge set of $G'$. Therefore, $U_{sk}$ verifies whether the $1$-page layout of $G'$ determined by $\Pi(x)$ is a $1$-page book embedding. When provided with the input superposition $\ket{\chi}\ket{e}\ket{0_{\frac{m}{2}+p}}\ket{0}$, where $p\in O(m)$, the gate $U_{sk}$ produces the output superposition $\ket{\chi}\ket{e}\ket{0_{\frac{m}{2}+p}}\ket{g(x)}$, 
such that, if $\Pi(x)$ and $N(\theta)$ are not degenerate, then $g(x) = 1$ if and only the $1$-page layout of $G'$ determined by $\Pi(x)$  is a $1$-page book embedding.

\paragraph{\sc Final check.}
For the definition of $U_{fc}$, refer to the proof of~\cref{le:gate-TLS}.
The purpose of gate $U_{fc}$ is to check whether the current solution is admissible, i.e., whether the $1$-page layout of $G$ determined by $\Pi(x)$ and the set of indices $N(\theta)$ are both not degenerate and the $1$-page layout of $G'$ determined by $\Pi(x)$ is a $1$-page book embedding. See \cref{fig:final-check-s}. When provided with the input superposition $\ket{f(\phi)}\ket{f(\theta)}\ket{g(x)}\ket{-}$, the gate $U_{fc}$ produces the outputs superposition $(-1)^{g(x)f(\phi)f(\theta)}\ket{f(\phi)}\ket{f(\theta)}\ket{g(x)}\ket{-}$.

\paragraph{The inverse circuits.} The purpose of circuits  $U^{-1}_{sk}$, and $U^{-1}_{\chi_p}$ is to restore the $h$ ancilla qubit to $\ket{0}$ so that they can be used in the subsequent steps of Grover's approach.

\paragraph{Correctness and complexity.}
For the correctness of \cref{le:gate-BS}, observe that the gates $U_{\chi_p}$, $U_{sk}$, and $U_{fc}$ verify all the necessary conditions for which the $1$-page layout of $G'$ determined by $\Pi(x)$ is a $1$-page book embedding, under the assumption that $\Pi(x)$ and $N(\theta)$ are not degenerate. Therefore, the sign of the output superposition of gate BS, which is determined by the expression $(-1)^{g(x)f(\phi)f(\theta)}$, is defined as follows.
It is positive when either $\Pi(x)$ or $N(\theta)$ are degenerate or $\Pi(x)$ and $N(\theta)$ are not degenerate and the $1$-page layout of $G'$ determined by $\Pi(x)$ is a $1$-page book embedding.
It is negative when $\Pi(x)$ and $N(\theta)$ are not degenerate and the $1$-page layout of $G'$ determined by $\Pi(x)$ is a $1$-page book embedding. 
The bounds on the circuit complexity and depth of gate BS descend from those of $U_{\chi_p}$, whereas the bound on the width of BS descends from $U_{sk}$.\qed

\section{Exploiting Quantum Annealing for Graph Drawing}\label{se:formualtions}

In this section, we explore the $2$-level problems and the book layout problems, that we have addressed so far from the quantum circuit model perspective, in the context of the quantum annealing model of computation. We pragmatically concentrate on the D-Wave platform, which offers quantum annealing services based on large-scale quantum annealing solver. To utilize the hybrid facility of D-Wave for solving an optimization problem, there are essentially two ways: Either the problem is provided with its QUBO formulation or it is provided with a CBO formulation with constraints that are at most quadratic. Also, given a CBO formulation, it is quite simple to construct a QUBO formulation. 
Hence, in~\cref{sse:two-levels,sse:bool-embeddings}, we first provide CBO formulations for the problems introduced in the previous section. Second, we overview (\cref{sse:qcbp-to-qubo}) a standard method for transforming a CBO formulation into a QUBO formulation. Third, in~\cref{sse:dwave}, we discuss a detailed experiment conducted on the quantum annealing services provided by D-Wave, specifically focusing on TLCM, which has extensive experimental literature compared to other problems considered in this paper. These experiments evaluate the efficiency of D-Wave with respect to well-known classical approaches to the TLCM problem.

\subsection{CBO Formulations for Two-Level Problems}\label{sse:two-levels}

Let $G = (U,V,E)$ be a bipartite graph. We denote by $u_i$, for $i=1,\dots, |U|$, and $v_j$, for $j = |U|+1,\dots,|U|+|V|$,  the vertices in $U$ and $V$, respectively. We start by describing the variables and the constraints needed to model the vertex ordering in a $2$-level drawing, which are common to the formulations of TLCM, TLKP, TLQP, and TLS.

\smallskip
\noindent{\bf Ordering variables.} To model the order of the vertices in $U$ and $V$ in a $2$-level drawing $\Gamma$ of $G$, we use $|U|\cdot (|U|-1)$ binary variables $u_{i,j}$ for each ordered pair of vertices $u_i, u_j \in U$ and $|V|\cdot (|V|-1)$ binary variables $v_{i,j}$ for each ordered pair of vertices $v_i, v_j \in V$. The variable $x_{i,j}$ is equal to $1$ if and only if $x_i$ precedes $x_j$ in $\Gamma$, with~$x \in \{u,v\}$. 

\smallskip
\noindent{\bf Ordering constraints.} We define the following constraints.  
As in~\cite{DBLP:journals/jgaa/JungerM97}, to model the fact that an assignment of values to the variables $x_{i,j}$, with $x \in \{u,v\}$, correctly models a linear ordering of the vertices in $U$ and in $V$, we  exploit two types of constraints:
\begin{description}
\item [\sc Consistency:] For each ordered pair of vertices $u_i, u_j \in U$, we have the constraint {\bf (CU)}
$u_{i,j} + u_{j,i} = 1$. Similarly, for each ordered pair of vertices $v_i, v_j \in V$, we have the constraint {\bf (CV)}
$v_{i,j} + v_{j,i} = 1$. Clearly, there exist %
$O(|U|^2)$ and $O(|V|^2)$ 
constraints of type {\bf (CU)} and {\bf (CV)}, respectively. 
\item [\sc Transitivity:] For each ordered triple of vertices $u_i, u_j, u_k \in U$, we have the constraints
{\bf (TU)} $u_{i,j} + u_{j,k} - u_{i,k} \geq 0$ and $u_{i,j} + u_{j,k} - u_{i,k} \leq 1$.
The constraint {\bf (TV)} for each ordered triple of vertices of $V$ is defined analogously. Clearly, there exist $O(|U|^3)$ and $O(|V|^3)$ constraints of type {\bf (TU)} and {\bf (TV)}, respectively.
Constraints {\bf (TU)} and {\bf (TV)} are linear.
We also consider alternative quadratic constraints for transitivity: for each ordered triple of vertices $u_i, u_j, u_k \in U$, we have the constraints
{\bf (TQU)} $1-(u_{i,j} \cdot u_{j,k}) + u_{i,k}\geq 1$.
The constraints {\bf (TQV)} for each ordered triple of vertices of $V$ are defined analogously. Clearly, the number of {\bf (TQU)} and {\bf (TQV)} constraints is half the number of {\bf (TU)} and {\bf (TV)} constraints.
\end{description}

Next, we provide specific variables and constraints that allow us to correctly model the problems TLCM, TLKP, TLQP, and TLS.
To this aim, for each pair of independent edges $e_a=(u_i,v_k)$ and $e_b=(u_j,v_\ell)$, we define the expression $\chi_{a,b} = u_{i,j}\cdot v_{\ell,k} + u_{j,i}\cdot v_{k,\ell}$, which is equal to $1$ if and only if $e_a$ and $e_b$ cross. 
For each edge $e \in E$, we  denote by $I(e)$ the set of edges in $E$ that do not share an endpoint with $e$.

\paragraph{Two-level Crossing Minimization (TLCM).} We consider the minimization version of the problem. In order to minimize the total number of crossings in the sought $2$-level drawing of $G$, we define the objective function {(\bf OF)}
\newcommand{\BBBB}{\min {\sum_{e_a \in E}} ~\sum_{e_b \in I(e_a)} \chi_{a,b}}

\[\BBBB.\]

\paragraph{Two-level $k$-planarity (TLKP).} 
We show how to model the fact that at most $k$ crossings are allowed on each edge. We have the single constraint {\bf (KP)}
\newcommand{\KPPP}{\sum_{e_b\in I(e_a)} \chi_{a,b} \leq k}

\mbox{$\KPPP$.}

\noindent Clearly, over all the edges of $G$, there are $|E|$ constraints of type {\bf (KP)}.

\paragraph{Two-level Quasi Planarity (TLQP).}

We show how to model the fact that no three edges are allowed to pairwise cross. For each ordered triple $(e_a,e_b,e_c)$ of edges of $E$ such that 
$e_b  \in I(e_a)$ and $e_c \in I(e_a) \cap I(e_b)$, 
we have the constraint~{\bf (QP)} 
\newcommand{\CQP}{\chi_{a,b} + \chi_{b,c} + \chi_{a,c} < 3}

\[\CQP.\]

\noindent Clearly, over all the edges of $G$, there are $O(|E|^3)$ constraints of type~{\bf (QP)}.

\paragraph{Two-level Skewness (TLS).} To model the membership of the edges to a subset $S$ such that $|S|\leq \sigma$, whose removal from $G$ yields a forest of caterpillars, we use $|E|$ binary variables $s_{i,j}$ for each edge $(u_i,v_j)$. The variable $s_{i,j}$ is equal to $1$ if and only if $(u_i,v_j)$ belongs to $S$. First, to enforce that $|S| \leq \sigma$, we use the constraint {\bf (CS)} 
\newcommand{\CCS}{\sum_{(u_i,v_j) \in E} s_{i,j} \leq \sigma}

\[\CCS.\]

Second, we show how to model the fact that no two edges in $E \setminus S$ are allowed to cross. For each edge $e_a = (u_i,v_j)\in E$ and for each edge $e_b = (u_\ell,v_k) \in I(e_a)$, we have the constraint {\bf (S)}
\newcommand{\CSS}{\chi_{a,b} - s_{i,j} - s_{\ell,k} < 1}

\[\CSS.\]

Over all the edges of $G$ there are $O(|E|^2)$ constraints of type {\bf (S)}.

\subsection{CBO Formulations for Book-layout problems}\label{sse:bool-embeddings}

Let $G=(V,E)$ be a graph. We denote by $v_i$, for $i=1,\dots,|V|$, the vertices in~$V$. As in \cref{sse:two-levels}, we use the variable $x_{i,j}$ to encode the precedence between the ordered pair of vertices $v_i, v_j \in V$. Moreover, in order for an assignment of values in ${\mathbb B}$ to such variables to correctly model a linear ordering of $V$, we use the {\em consistency} and {\em transitivity} constraints described in \cref{sse:two-levels}.

Next, we provide the specific variables and constraints that allow us to correctly model the problems OPCM, BT, and BS. Two edges are \emph{independent} if they do not share an end-vertex.
To this aim, for each ordered pair of independent edges $e_a=(v_i,v_j)$ and $e_b=(v_\ell,v_k)$, we define the expression $\chi_{a,b} = x_{i,\ell}\cdot x_{\ell, j}\cdot x_{j,k} + x_{i,k}\cdot x_{k, j}\cdot x_{j,\ell} + x_{j,\ell}\cdot x_{\ell, i}\cdot x_{i,k} + x_{j,k}\cdot x_{k, i}\cdot x_{i,\ell}$, which is equal to $1$ if and only if $e_a$ and $e_b$ cross and (exactly) one of the endpoints of $e_a$ precedes both the endpoints of $e_b$; refer to~\cref{fig:book-crossing}\textcolor{blue}{(top)}. %
More specifically, let $x_{\alpha,\beta} \cdot x_{\beta,\gamma} \cdot x_{\gamma,\delta}$ be any of the four terms that define $\chi_{a,b}$. We have that such a term evaluates to $1$ if and only if the vertices $v_{\alpha}$, $v_{\beta}$, $v_{\gamma}$, and $v_{\delta}$ appear in this left-to-right order along the spine.

\begin{figure}[tb!]
    \centering
    \includegraphics[page = 35, width = \textwidth]{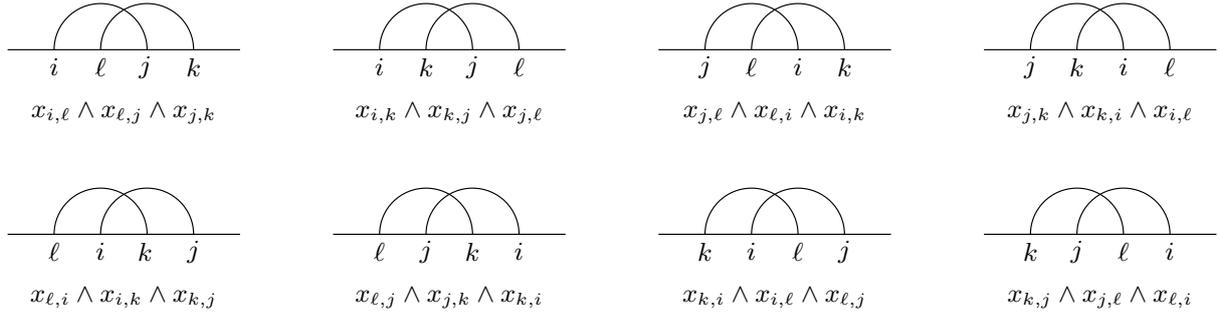}
    \caption{(top) The four possible crossing configurations of the edges $e_a=(v_i,v_j)$ and $e_b=(v_\ell,v_k)$, in which an end-vertex of $e_a$ precedes both endpoints of $e_b$. (bottom) The four possible crossing configurations of the edges $e_a$ and $e_b$, in which an end-vertex of $e_b$ precedes both endpoints of $e_a$.}
    \label{fig:book-crossing}
\end{figure}

\paragraph{One-Page Crossing Minimization (OPCM).}
We consider the minimization version of the problem. In order to minimize the total number of crossings in the sought $1$-page layout of $G$, we define the objective function {(\bf OF)}
\newcommand{\CBB}{\min{\sum_{e_a \in E}} ~\sum_{e_b \in I(e_a)} \chi_{a,b}}

\[\CBB.\]

\paragraph{Book Thickness (BT).}
To model the membership of the edges to one of the $\tau$ pages, we use $\tau|E|$ binary variables $e_{i,j,c}$, for each edge $(v_i,v_j)$ and for each page $c \in [\tau]$. The variable $e_{i,j,c}$ is equal to $1$ if and only if $(v_i,v_j)$ is assigned to page $c$. First, in order to enforce that each edge belongs exactly to one page, for each edge $(v_i,v_j) \in E$, we use the constraint {\bf (BC)} 
\newcommand{\CBC}{\sum_{c \in [\tau]} e_{i,j,c} = 1}

\[\CBC.\]

Second, we show how to model the fact that no two edges assigned to the same page are allowed to cross. For each edge $e_a=(v_i,v_j) \in E$, for each edge $e_b=(v_l,v_k)\in I(e_a)$, and for each page $c \in [\tau]$, we have the constraint~{\bf (CC)}
\newcommand{\CCC}{\chi_{a,b} + e_{i,j,c} + e_{l,k,c} < 3}

\[\CCC.\]

\paragraph{Book Skewness (BS).} For this problem, we adopt the same constraints {\bf (CS)} as for the TLS problem. Moreover, we adopt the constraints {\bf (S)} as for the TLS problem.

\subsection{From CBO to QUBO}\label{sse:qcbp-to-qubo}

The formulations presented in \cref{sse:two-levels,sse:bool-embeddings} contain quadratic and even cubic constraints. Such constraints, however, only involve binary variables. Hence, they can be easily linearized, by means of standard operations research techniques, to be exploited to define a QUBO formulation suitable for quantum annealing. 

Specifically, let $\mu =  \Pi^k_{i=1}x_i$ be a (non-necessarily quadratic) monomial of total degree $k$, such that each variable $x_i$ is a binary variable. We obtain an equivalent constraint by replacing each occurrence of $\mu$ in all the constraints of our formulation with a new binary variable $z_\mu$ and by adding the following $k+1$ constraints:

\[  \left\{ \begin{array}{ll}
         z_\mu \leq x_i & i = 1,\dots,k\\
        z_\mu \geq 1-k + \sum^k_{i=1} x_i & \end{array} \right. \] 

In our formulation for the $2$-level problems, the maximum degree of all monomials is $2$ and these monomials arise from the expressions $\chi_{a,b}$, for each pair of independent edges $e_a = (u_i,v_i)$ and $e_b = (u_\ell,v_k)$. Thus, there exist at most $|E(G)|^2$ distinct degree-$2$ monomials in these formulations. Therefore, by applying the replacement described above, we introduce at most $|E(G)|^2$ new variables and $3|E(G)|^2$ new constraints.
Similarly, in our formulation for the book layout problems, the maximum degree of all monomials is $3$ and these monomials arise from the expressions $\chi_{a,b}$, for each pair of independent edges $e_a = (u_i,v_i)$ and $e_b = (u_\ell,v_k)$. Thus, there exist at most $|E(G)|^2$ distinct degree-$3$ monomials in these formulations. Therefore, by applying the replacement described above, we introduce at most $|E(G)|^2$ new variables and $4|E(G)|^2$ new constraints.

Once the constraints are linearized, a QUBO formulation is obtained by inserting all constraints in the objective function. Note that if the problem is not an optimization problem an objective function is anyway created such that its optimum value is equal to zero.
Constraints are first transformed so that their right member is equal to zero. For inequalities an extra variable is inserted so to transform the inequality into an equality.
Then left member is squared and the result is inserted into the objective function.

\subsection{D-Wave Experimentation}\label{sse:dwave}
We performed our experiments on TLCM using the hybrid solver of D-Wave, which suitably mixes quantum computations with classic tabu-search and simulated annealing heuristics. The obtained results are not guaranteed to be optimal.

The D-Wave hybrid solver receives in input either a CBO or a QUBO formulation of a problem. We used it with a linear CBO formulation (with {\bf (TU)} and {\bf (TV)} constraints) and with a quadratic CBO formulation (with {\bf (TQU)} and {\bf (TQV)} constraints).
Roughly, D-Wave hybrid solver works as follows. First, it decomposes the problem into parts that fit the quantum processor. The decomposition aims at selecting subsets of variables, and hence sub-problems, maximally contributing to the problem energy. Second, it solves such sub-problems with the quantum processor. Third, it injects the obtained results into the original overall problem that is solved with traditional heuristics. These steps can be repeated several times, since an intermediate solution can re-determine the set of variables that contribute the most to the energy of the problem. An interesting description of the behaviour and of the limitations of D-Wave is presented in \cite{DBLP:journals/corr/abs-1904-11965}, although it refers to the quantum processor called Chimera that has been replaced by the new processors called Pegasus and Zephyr.

We compare our results with the figures proposed in~\cite{DBLP:journals/informs/BuchheimWZ10}. Namely, the authors compare three exact algorithms for TLCM: LIN, which is the standard linearization approach; JM, which is the algorithm in~\cite{DBLP:journals/jgaa/JungerM97}; and SDP, which is the branch-and-bound in~\cite{DBLP:conf/ipco/RendlRW07}. Their experiments were carried out on an Intel Xeon processor with 2.33 GHZ.

\cref{tab:d-wave-experiment} illustrates the results of the experimentation we conducted on the D-Wave platform. We focused on the same set of graphs used in \cite{DBLP:journals/informs/BuchheimWZ10}. 
Namely, for each value of $n$, i.e., number of vertices per layer, and for each value of $d$, i.e., density, we performed $10$ experiments on $10$ distinct graph instances. \cref{tab:d-wave-experiment} reports for each pair $n,d$ the averages over such instances. The columns report what follows. {\it Best Current Time}: the best performance among the exact algorithms for TLCM known so far (time measured by \cite{DBLP:journals/informs/BuchheimWZ10}). {\it Approach}: one of LIN, JM, and SDP depending on which is the fastest approach. The double slash indicates that an optimal solution has not been found. {\it D-wave Time (Linear)}: time of our experiments, using the linear formulation. {\it \# of Constrains (Linear)}: number of constraints of our linear formulation. {\it Crossings (Linear)}: number of crossings obtained with our linear formulation. {\it D-wave Time (Quadratic)}: time of our experiments, using the quadratic formulation. {\it \# of Constrains (Quadratic)}: number of constraints of our quadratic formulation. {\it Crossings (Quadratic)}: number of crossings obtained with our quadratic formulation.

It is important to observe that the number of crossings we obtained was the optimal one for all graphs with $10$ vertices per layer \cite{Zheng2007ANE} (graphs above the horizontal line of \cref{tab:d-wave-experiment}). Unfortunately, the literature~\cite{DBLP:journals/informs/BuchheimWZ10} does not report the actual minimum number of crossings for the remaining instances.

The comparison of the time employed by D-Wave (linear and quadratic) with the one of the best exact methods is quite promising, even if the times in the {\it Best Current Time} column are the results of a computation performed on a non-up-to-date classical hardware, and indicate that D-Wave can be used to efficiently tackle instances of TLCM. The comparison between linear and quadratic CBO formulations indicates that the quadratic formulation is more efficient, since it generates fewer constraints. Their behaviour in terms of number of crossings are quite similar. The time we report is the overall time elapsed between the remote call from our client and the reception of the result. The actual time spent on the quantum processor is always between 0.016 and 0.032 sec. 

\begin{table}[!ht] 
    \centering
    \caption{TLCM: D-Wave vs other approaches. Bipartite graphs with increasing number of vertices per layer and density. All times are in seconds.}
    \label{tab:d-wave-experiment}
    \resizebox{\linewidth}{!}{
    \begin{tabular}{|c|c|r|c|r|r|r|r|r|r|r|}
        \hline
        \begin{minipage}[t][1.1cm][c]{0.6cm} \centering $n$ \end{minipage} 
        & 
        \begin{minipage}[t][0.9cm][c]{0.6cm} \centering $d$ \end{minipage} 
        & 
        \begin{minipage}[t][0.9cm][c]{1.3cm} \centering Best Current Time \end{minipage}  
        & 
        \begin{minipage}[t][0.9cm][c]{1.6cm} \centering Approach \end{minipage}
        & 
        \begin{minipage}[t][0.9cm][c]{1.6cm} \centering D-Wave Time (Linear) \end{minipage}
        & 
        \begin{minipage}[t][0.9cm][c]{1.6cm} \centering \# of Constraints (Linear) \end{minipage}
        & 
        \begin{minipage}[t][0.9cm][c]{1.6cm} \centering Crossings (Linear) \end{minipage} 
        & 
        \begin{minipage}[t][0.9cm][c]{1.6cm} \centering D-Wave Time (Quadratic) \end{minipage}
        & 
        \begin{minipage}[t][0.9cm][c]{1.6cm} \centering \# of Constraints (Quadratic) \end{minipage}
        & 
        \begin{minipage}[t][0.9cm][c]{1.6cm} \centering Crossings (Quadratic) \end{minipage}
        \\
        \hline
10 & 10 & 0,01 & LIN & 0,26 & 509 & 1 & 0,15 & 269 & 1 %
\\ 
        10 & 20 & 0,05 & JM & 0,92 & 1926 & 11 & 0,50 & 996 & 11 %
        \\ 
        10 & 30 & 0,15 & JM & 1,48 & 2969 & 52 & 0,79 & 1529 & 52 %
        \\ 
        10 & 40 & 0,33 & JM & 1,39 & 2970 & 142 & 0,76 & 1530 & 142 %
        \\ 
        10 & 50 & 0,61 & JM & 1,44 & 2970 & 276 & 0,83 & 1530 & 276 %
        \\ 
        10 & 60 & 1,14 & JM & 1,52 & 2970 & 259 & 0,81 & 1530 & 459 %
        \\ 
        10 & 70 & 2,35 & JM & 1,47 & 2970 & 717 & 0,79 & 1530 & 717 %
        \\ 
        10 & 80 & 4,05 & JM & 1,49 & 2970 & 1037 & 0,81 & 1530 & 1037 %
        \\ 
        10 & 90 & 6,79 & SDP & 1,92 & 2970 & 1387 & 0,81 & 1530 & 1387 %
        \\ \hline
        12 & 10 & 0,02 & LIN & 1,18 & 2293 & 0,67 & 0,61 & 1183 & 0,67 %
        \\ 
        12 & 20 & 1,52 & JM & 2,15 & 4110 & 34,22 & 1,07 & 2110 & 34,22%
        \\ 
        12 & 30 & 4,53 & JM & 2,77 & 5263 & 139,11 & 1,41 & 2696 & 139,11 %
        \\ 
        12 & 40 & 16,36 & JM & 2,72 & 5337 & 339,89 & 1,44 & 2734 & 339,89%
        \\ 
        12 & 50 & 44,84 & SDP & 2,80 & 5412 & 664,56 & 1,52 & 2772 & 664,56 %
        \\ 
        12 & 60 & 48,26 & SDP & 2,94 & 5412 & 1040 & 1,44 & 2772 & 1040 %
        \\ 
        12 & 70 & 40,31 & SDP & 2,71 & 5412 & 1535 & 1,48 & 2772 & 1535%
        \\ 
        12 & 80 & 28,71 & SDP & 2,82 & 5412 & 2228,67 & 1,63 & 2772 & 2228,56 %
        \\ 
        12 & 90 & 22,21 & SDP & 2,88 & 5412 & 3023,67 & 1,68 & 2772 & 3023,67 %
        \\ 
        14 & 10 & 0,33 & LIN & 2,30 & 3912 & 2,67 & 1,10 & 2008 & 2,67 %
        \\ 
        14 & 20 & 89,61 & SDP & 4,19 & 7701 & 89,33 & 2,37 & 3933 & 89,33 %
        \\ 
        14 & 30 & 132,72 & SDP & 4,69 & 8512 & 316,89 & 2,55 & 4344 & 316,78 %
        \\ 
        14 & 40 & 144,03 & SDP & 5,20 & 8918 & 716 & 2,74 & 4550 & 716 %
        \\ 
        14 & 50 & 180,49 & SDP & 4,71 & 8918 & 1316,11 & 2,88 & 4550 & 1316 %
        \\ 
        14 & 60 & 141,93 & SDP & 4,67 & 8918 & 2053 & 2,35 & 4550 & 2052,89 %
        \\ 
        14 & 70 & 149,68 & SDP & 4,93 & 8918 & 3017,78 & 2,76 & 4550 & 3016,22 %
        \\ 
        14 & 80 & 145,97 & SDP & 5,03 & 8918 & 4258,89 & 2,46 & 4550 & 4257,67 %
        \\ 
        14 & 90 & 81,27 & SDP & 4,92 & 8918 & 5865,33 & 2,39 & 4550 & 5875,22 %
        \\ 
        16 & 10 & 2,77 & LIN & 3,45 & 6672 & 11,56 & 1,83 & 3410 & 11,56 %
        \\ 
        16 & 20 & 309,31 & SDP & 6,70 & 12423 & 176,78 & 3,34 & 6324 & 176,33 %
        \\ 
        16 & 30 & 630,31 & SDP & 7,53 & 13397 & 604,78 & 3,63 & 6817 & 603,89 %
        \\ 
        16 & 40 & 800,87 & SDP & 7,56 & 13680 & 1326,89 & 3,73 & 6960 & 1324,44 %
        \\ 
        16 & 50 & 451,09 & SDP & 7,46 & 13680 & 2376,89 & 3,82 & 6960 & 2375,44 %
        \\ 
        16 & 60 & 403,82 & SDP & 7,33 & 13680 & 3770,11 & 3,73 & 6960 & 3762,67 %
        \\ 
        16 & 70 & 789,62 & SDP & 7,48 & 13680 & 5501,67 & 3,77 & 6960 & 5486,78 %
        \\ 
        16 & 80 & 568,55 & SDP & 7,37 & 13680 & 7591,78 & 3,66 & 6960 & 7575,11 %
        \\ 
        16 & 90 & 362,29 & SDP & 7,08 & 13680 & 10336,11 & 3,72 & 6960 & 10310,89 %
        \\ 
        18 & 10 & 7,06 & LIN & 6,16 & 12154 & 18,11 & 3,36 & 6187 & 18,11 %
        \\ 
        18 & 20 & 778,86 & SDP & 10,25 & 18424 & 312,78 & 4,99 & 9358 & 309,56 %
        \\ 
        20 & 10 & 117,72 & LIN & 10,25 & 19357 & 44,78 & 5,26 & 9828 & 44,33 %
        \\ 
        20 & 20 & 1813,87 & SDP & 14,25 & 26589 & 551,33 & 7,34 & 13479 & 548,22 %
        \\ 
        22 & 10 & 546,71 & LIN & 14,06 & 25942 & 94,22 & 7,09 & 13152 & 92,22 %
        \\ 
        22 & 20 & 3443,81 & SDP & 19,62 & 35217 & 984,22 & 10,04 & 17830 & 962,44 %
        \\ 
        24 & 10 & 2225,82 & LIN & 20,96 & 37757 & 150,44 & 10,69 & 19110 & 148,33 %
        \\ 
        24 & 20 & // & // & 27,70 & 48788 & 1794,22 & 14,17 & 24669 & 1835,11 %
        \\ \hline

    \end{tabular}}
\end{table}

\section{Conclusions and Open Problems}

We initiate the study of quantum algorithms in the Graph Drawing research area, providing a framework that allows us to tackle several classic problems within the $2$-level and book layout drawing standards. Our framework, equipped with several quantum circuits of potential interest to the community, builds upon Grover's quantum search approach. It empowers us to achieve, at least, a quadratic speedup compared to the best classical exact algorithms for all the  problems under consideration.
In addition, we conducted experiments using the D-Wave quantum annealing platform for the {\sc Two-Level Crossing Minimization} problem. Our experiments demonstrated that the platform is highly suitable for addressing graph drawing problems and showcased significant efficiency when compared to the top approaches available for solving such problems. 
The encounter between Graph Drawing and Quantum Computing is still in its nascent stage, offering a vast array of new and promising problems. Virtually, all graph drawing problems can be explored through the lenses of quantum computation, utilizing both the quantum circuit model and, more pragmatically, quantum annealing platforms.

\clearpage
\bibliographystyle{splncs04}
\bibliography{bibliography}
\end{document}